 \newcommand{\beqn}{\begin{eqnarray}}
 \newcommand{\eeqn}{\end{eqnarray}}
 \newcommand{\be}{\begin{equation}}
 \newcommand{\ee}{\end{equation}}
 \newcommand{\ba}{\begin{array}}
 \newcommand{\ea}{\end{array}}
 \newcommand{\pa}{\partial}
 \newcommand{\ds}{\displaystyle}
 \newcommand{\la}{\label}
 \newcommand{\rIm}{{\rm Im\5}}
 \newcommand{\rRe}{{\rm Re\5}}
\newcommand{\ov}{\overline}
\newcommand{\w}{\rm {w}}
\newcommand{\ti}{\tilde}
\newcommand{\cE}{{\cal E}}
\newcommand{\cH}{{\cal H}}
\newcommand{\ve}{\varepsilon}
\newcommand{\De}{\Delta}
\newcommand{\al}{\alpha}
\newcommand{\ga}{\gamma}
\newcommand{\si}{\sigma}
\newcommand{\om}{\omega}
\newcommand{\Om}{\Omega}
\newcommand{\na}{\nabla}
\newcommand{\lam}{\lambda}
\newcommand{\Lam}{\Lambda}
\newcommand{\5}{{\hspace{0.5mm}}}
\newcommand{\R}{\mathbb{R}}
\newcommand{\C}{\mathbb{C}}
\newtheorem{theorem}{Theorem}[section]
\newtheorem{defin}[theorem]{Definition}
\newtheorem{lemma}[theorem]{Lemma}
\newtheorem{remark}[theorem]{Remark}
\newtheorem{cor}[theorem]{Corollary}
\newtheorem{pro}[theorem]{Proposition}
\begin{document}
\begin{center}
{\Large On 
soliton asymptotics for 2D Maxwell--Lorentz 
 \\
 equations with rotating particle}
 \bigskip \bigskip

E.A. Kopylova\footnote{ 
 Supported partly by Austrian Science Fund (FWF) P34177
 }
 \medskip
 \\
{\it
Faculty of Mathematics of   Vienna University
    }
  \\
  elena.kopylova@univie.ac.at
\begin{abstract}
We consider 2D Maxwell--Lorentz equations with extended charged rotating particle. 
The system admits solitons which are solutions corresponding to a particle moving with a constant velocity and rotating with a constant angular velocity. 
Our main result is asymptotic stability of  the solitons.
\end{abstract}

\end{center}
\setcounter{equation}{0}
\section{Introduction} 
 2D Maxwell--Lorentz equations with the rotating particle  read 
\be\la{mls2}
\left\{\ba{rcl}
\dot E(x,t)&=&J\na B(x,t)-[\dot q(t)-\om(t) J(x-q(t))] \rho(x-q(t))\\
\dot B(x,t)&=&-\na \cdot(JE(x,t)),
\qquad
\na\cdot E(x,t)=\rho (x-q(t))\\
m\ddot q(t)&=&\langle E(x,t)+B(x,t) [J\dot q(t) +\om(t) (x-q(t)) ],\rho(x\!-\!q(t))\rangle\\
I\dot \om(t)&=&\langle (x\!-\!q(t))\cdot \big[J E(x,t)\!-B(x,t) \dot q(t)],\rho(x\!-\!q(t))\rangle
\ea\right|,~~J=\begin{pmatrix} 0& 1\\ -1& 0\end{pmatrix}. 
\ee
The system describes a motion of an extended
charged particle, centered at $q(t)$, in the Maxwell field $(E(x, t), B(x, t))\in \R^2\oplus \R$.  
Here  the brackets $\langle,\rangle$ denote the inner product in the  Hilbert space $L^2:=L^2(\R^2)\otimes\R^2$,
$m >0$ is the mass of the particle and $I>0 $ is its moment of inertia. Note that the system is an analogue of 3D Maxwell--Lorentz system \cite[(2.39)--2.41)]{S2004}.

We assume that the charge density $\rho(x)$ is real-valued, compactly supported and  spherically symmetric, i.e.,
\be\la{rosym}
\rho\in C_0^\infty(\R^2),\quad \rho(x)=0 ~~{\rm for}~~|x|\ge R_{\rho},\quad\rho(x)=\rho_{rad}(|x|),\quad\rho(x)\not\equiv 0.
\ee
Moreover, we assume that 
\be\la{zero1}
\int x^{\al}\rho(x) dx=0,\qquad |\al|\le 4. 
\ee
In particular, the total charge $\int\rho(x) dx$ equals zero (neutrality of the particle). 
Equivalently, 
\be\la{zero2}
\hat\rho^{(\al)}(0)=0,\qquad |\al|\le 4.
\ee
Note that \eqref{zero1} holds if one of $\alpha_j$, $j=1,2$,  or both  of them is odd.
We rewrite the equation  \eqref{mls2} in terms of the Maxwell potentials $A(x,t)$ and $B(x,t)$ (the system  \eqref{mls3}).
The system admits  soliton solutions $(A_{v,\w}, B_{v,\w})$  corresponding to the particle which  move with constant  velocities $v\in\R^2$, $0\le |v|<1$ 
and rotate with constant angular velocities $\w\in\R$. 
Asymptotic stability of the soliton with velocity $v_0$ and angular velocity $\w_0$
means that  for solutions with  initial data  sufficiently close to  this  soliton the following properties are satisfied:
\\
i) $\dot q(t)\to v_\pm\approx v_0$ and $\om(t)\to \om_\pm\approx \w_0$ as $t\to\pm\infty$;
\\
ii) for the  field part of the solutions asymptotics (\ref{AP-as})  hold, where
the remainder $r_{\pm}(x,t)\to 0$ as $t\to\pm\infty$ in the global energy norm. 
\\
In this paper we prove the asymptotic stability of solotons under a certain spectral condition ( condition \eqref{M-condition} below), 
which is  an analogue of the Wiener condition for the system \eqref{mls2}. In particular, the condition is satisfied
for  a sufficiently large $I$.
\\
Our results extend the result of \cite{K2025}, where asymptotic stability of a soliton with $\w=0$  was established.
\setcounter{equation}{0}
\section{Hamiltonian structure and well-posedness}
In the Maxwell potentials $A(x,t)=(A_1(x,t),A_2(x,t))$, $\Phi(x,t)$, we have
\be\la{BAE}
B(x,t)=\na\cdot (JA)=\na_1A_2(x,t)-\na_2 A_1(x,t),\qquad E(x,t)=-\dot A(x,t)-\na \Phi(x,t).
\ee
We choose the Coulomb gauge $\na\cdot A(x,t)=0$. Then the  second line  of  (\ref{mls2}) implies
$$
 -\De\Phi(x,t)=\rho(x-q(t)).
$$
Hence,
\be\la{2mA3}
\Phi(x,t)=\Phi_0(x-q(t)),\qquad \Phi_0(y)=-\frac 1{2\pi}\int \log |x-y| \rho(y)dy.
\ee
We set 
\be\la{pM}
p=m\dot q+\langle A(x),\rho(x-q)\rangle,\quad M=I\om-\langle A(x)\cdot J(x-q),\rho(x-q)\rangle\big).
\ee
By \eqref{mls2} and \eqref{BAE},
\beqn\nonumber
\dot p&=&m\ddot q+\langle  \dot A(x),\rho(x-q)\rangle-\langle A(x),(\dot q\cdot\na)\rho(y)\rangle\\
\nonumber
&=&-\langle  A(x)\cdot \dot q,\na\rho(x-q)\rangle+\om \langle \na\cdot JA(x),  (x-q)\rho(x-q)\rangle,\\
\nonumber
\dot M&=&I\dot \om -\langle \dot A(y+q)\cdot Jy,\rho(y)\rangle-\langle (\na \cdot \dot q)A(y+q)\cdot Jy),\rho(y)\rangle
=-\langle A\cdot \dot q, (\na\cdot J)\rho\rangle=0
\eeqn
Thus, we  can rewrite system  (\ref{mls2}) as
\be\la{mls3}
\!\!\left\{\ba{llll}
\dot A(x,t)=\Pi(x,t)
\\
\dot \Pi(x,t)=\Delta A(x,t)-\om J\varrho(x-q(t))+{\cal P}[\dot q\rho(x-q(t))]\\
m\dot q(t)=p(t)-\langle A(x,t),\rho(x-q(t))\rangle
\\
\dot p(t)=-\langle  A(x,t)\cdot \dot q(t),\nabla\rho(x-q(t))\rangle+\om(t) \langle \na\cdot JA(x,t),  \varrho(x-q(t))\rangle 
\ea\!\right|,~  \varrho(x):=x\rho(x).
\ee
Here
$$
\om(t)=\frac{1}{I}\big(M+\langle A(x,t),J\varrho(x-q(t))\rangle\big).
$$
Denote by ${\cal P}$ the projection onto the space of solenoidal (divergence-free) vector fields,
which in Fourier space reads:
\be\la{Pi-e}
\widehat{{\cal P}a}(k)=\hat a(k)-\frac{\hat a(k)\cdot k}{k^2}k=\frac{\hat a(k)\cdot Jk}{k^2}Jk.
\ee
Let us introduce a phase space for the system \eqref{mls3}. Denote by 
$\dot H^1$  the closure of $C_0^\infty (\R^2)\otimes \R^2$ with the respect to the norm 
$\Vert \nabla A\Vert_{\dot H^1}=\Vert \nabla A\Vert_{L^2}$, where $L^2:=L^2(\R^3)\otimes\R^2$.
Let ${\bf L}^2$, $\dot {\bf H}^1$ be the subspaces formed by solenoidal vector fields.
Define the phase space
\be\la{psp}
{\cal E}={\cal F}\oplus\R^2\oplus\R^2, \quad  {\cal F}= \dot {\bf H}^1\oplus {\bf L}^2.
\ee
System (\ref{mls3}) is  a Hamiltonian system with the Hamiltonian functional
\be\la{Hp}
H(Y)=\frac 12\int[|\Pi(x)|^2+|\na A(x)|^2]dx+\frac 1{2}m\dot q^2(t)+\frac 1{2} I\om^2(t),~~
Y=(A,\Pi, q, p)\in {\cal E}
\ee
where 
\be\la{dqp}
\frac{m\dot q^2}2=\frac1{2m} (p-\langle A,\rho(x-q)\rangle)^2,
\qquad
\frac{I\om^2}2=\frac1{2I}(M+\langle A,J\varrho(x-q)\rangle)^2.
\ee
The Hamiltonian (\ref{Hp}) is well defined and differentiable on the Hilbert
phase space (\ref{psp}).
The system (\ref{mls3}) is equivalent to the canonical Hamiltonian system
$$
\dot A=D_\Pi H,\,\,\,\dot \Pi=-D_A H;\quad\dot q=D_{p} H,\,\,\,\dot p=-D_q H,
$$
or equivalently, 
\be\la{canH}
\dot Y={\bf J}{\cal D}\cH(Y),\qquad 
{\bf J}:=\left(
\ba{ccccc}
0 & I_2 & 0 & 0 \\
-I_2 & 0 & 0 & 0\\
0 & 0 & 0 & I_2 \\
0 & 0 & -I_2 & 0 
\ea
\right).
\ee
where $I_2$ denotes the identity $2\times 2$ matrix.
\begin{pro} 
Let \eqref{rosym} holds, and let $Y_0=(A_0, \Pi_0, q_0, p_0)\in {\cal E}$. Then\\
(i) there exists a unique solution $Y(t)\in C(\R, {\cal E})$ to the Cauchy problem for \eqref{mls3};\\
(ii) the energy conserves: $H(Y(t))=H(Y_0)$ for  $t\in R$;\\
(iii) the estimate holds,
\be\la{apr}
|\dot q|\le \ov v,\quad t\in\R.
\ee
\end{pro}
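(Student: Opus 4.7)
The proof falls into three standard steps for a semilinear Hamiltonian system coupled to a finite-dimensional mechanical part: local well-posedness via a fixed-point argument, energy conservation, and global extension via the Hamiltonian a priori bound.

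First, I would recast \eqref{mls3} in Duhamel form. Writing $U=(A,\Pi)$ and letting $W(t)$ denote the free solenoidal wave group on $\cal F$,
$$
U(t)=W(t)U_0+\int_0^t W(t-s)\begin{pmatrix}0\\G(s)\end{pmatrix}ds,\qquad G=-\om J\varrho(\cdot-q)+{\cal P}[\dot q\,\rho(\cdot-q)],
$$
while $(q,p)$ satisfies an ODE whose right-hand side involves the pairings of $A$ with $\rho(\cdot-q)$, $\varrho(\cdot-q)$ and $\na\rho(\cdot-q)$. Because $\rho\in C_0^\infty$ and, by \eqref{zero1} with $|\al|\le 1$, $\int\rho\,dx=\int x_j\rho\,dx=0$, each of these pairings is well-defined on the quotient space $\dot{\bf H}^1$: one may subtract local means of $A$ on a fixed neighbourhood of $\operatorname{supp}\rho$ and apply the embedding $H^1\hookrightarrow L^p_{\rm loc}$ for $p<\infty$. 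A contraction mapping argument on $C([0,T_\ast],{\cal E})$ with $T_\ast$ small then produces a unique local solution depending continuously on $Y_0$.

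Next, for regular initial data (say $A_0\in H^2$, $\Pi_0\in H^1$) a direct computation using the Hamiltonian form \eqref{canH} gives $\tfrac d{dt}H(Y(t))=0$. The continuous dependence of the local solution on $Y_0\in\cal E$ furnished by Step 1, together with continuity of $H$ on $\cal E$, then extends energy conservation to arbitrary $Y_0\in{\cal E}$. Since every summand in \eqref{Hp} is nonnegative, conservation implies $\tfrac12 m|\dot q(t)|^2\le H(Y_0)$, hence
$$
|\dot q(t)|\le \ov v:=\sqrt{2H(Y_0)/m},
$$
which is the bound \eqref{apr}. The analogous bounds $\|\na A(t)\|_{L^2}^2+\|\Pi(t)\|_{L^2}^2\le 2H(Y_0)$ and $|\om(t)|\le\sqrt{2H(Y_0)/I}$ control the $\cal E$-norm of $Y(t)$ uniformly on any interval of existence, and standard continuation converts the local statement into global existence $Y\in C(\R,{\cal E})$.

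The only non-routine point is that $H$ is genuinely nonlinear in $Y$: the kinetic expressions \eqref{dqp} couple the field and the particle through $A$-dependent momenta, so both the source $G$ above and the ODE for $(q,p)$ are nonlinear in $A$. Verifying that these couplings are locally Lipschitz on ${\cal E}$---in particular that pairings of the form $\langle A,\na\rho(\cdot-q)\rangle$ are Lipschitz-continuous in $q$ and that the translation $q\mapsto\rho(\cdot-q)$ is smooth in the relevant topology---is the main step requiring attention; the linear Maxwell propagation itself is standard.
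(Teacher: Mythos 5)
Your proposal is correct and follows essentially the same route as the paper, which gives no details and simply refers to \cite[Lemma A.2]{KS2000}: Duhamel/contraction for local well-posedness of the coupled field--particle system, energy conservation by approximation with regular data, and then the nonnegativity of every summand in \eqref{Hp} to get $\frac 12 m\dot q^2\le H(Y_0)$, hence \eqref{apr} and global continuation. Your observation that the pairings $\langle A,\rho(\cdot-q)\rangle$, $\langle A,J\varrho(\cdot-q)\rangle$ only make sense on $\dot{\bf H}^1$ in 2D thanks to the vanishing moments in \eqref{zero1} (via subtraction of local means and Poincar\'e) is a point the paper glosses over, and it is handled appropriately.
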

The proof is similar to  \cite[Lemma A.2]{KS2000}.
\setcounter{equation}{0}
\section{Solitons}
Solitons of the system (\ref{mls3}) are solutions of the form
\be\la{solY}
Y_{v}=(A_{v}(x\!-vt\!-a),\Pi_{v}(x\!-vt\!-a), vt+a,  p_{v}),~~ |v|<1,~~ a\in\R^2,~~ p_{v}=mv+\langle A_{v},\rho\rangle.
\ee
Due to the first two equations of \eqref{mls3},  we get
\be\la{Hs5}
\Pi_{v}(y)=-(v\cdot\na)A_{v}(y), \qquad 
-(v\cdot\na)\Pi_{v}(y)=\De A_{v}(y)-\w J\varrho(y)+{\cal P}[v\rho(y)],
\ee
Hence, in the  Fourier representation,   
\be\la{solit3}
\hat A_{v}(k)=
\frac{(v-\frac{(v\cdot k)k}{k^2})\hat \rho(k)}{\hat D_0}-\frac{\w J\hat\varrho(k)}{\hat D_0},\quad \hat D_0:=k^2-(v\cdot k)^2,\quad  
{\w}=\frac{1}{I}\big(M+\langle A_{v}, J\varrho\rangle\big).
\ee
By \eqref{solit3},
\be\la{kappa}
\langle A_{v},J\varrho\rangle=-{\w}\langle \frac{J\hat\varrho(k)}{\hat D_0},J\hat\varrho(k)\rangle
=-{\w}\varkappa, \quad \varkappa=\varkappa(v)=\int\frac{|\hat\varrho(k)|^2}{\hat D_0}dk=\int\frac{|\na\hat\rho(k)|^2}{\hat D_0}dk.
\ee
Therefore,
\be\la{om-M}
{\w}={\w}(v)=M/(I+\varkappa)
\ee
By \eqref{zero1}, $\na A_{v}(y)\sim |y|^{-6}, \Pi_{v}(y)\sim |y|^{-6}$ as $|y|\to \infty$. 
Hence, 
\be\la{Pivom}
 \na A_{v},~ \Pi_{v}\in L^2_{\beta}(\R^2)\otimes\R^2, \quad \beta<5.
\ee
Denote   $V:=\{v\in\R^2:|v|<1\}$.
\begin{defin}
$S(\sigma):=(A_{v}(x-b),\Pi_{v}(x-b),b, p_{v})$, where $\sigma:=(b,v)$ with $b\in\R^2$ and $v\in V$.
\end{defin}
Obviously, the soliton solution \eqref{solY} admits the representation $S(\si(t))$ with $\si(t)=(vt+a,v)$.
\begin{defin}
A solitary manifold is the set ${\cal S}:=\{S(\si):\si\in\Sigma:=\R^2\times V\}$.
\end{defin}
Note that the manifold ${\cal S}$ is invariant with respect to the translations
 $$
 T_a:(A(\cdot),\Pi(\cdot),q,p)\mapsto(A(\cdot-a),\Pi(\cdot-a),q+a,p), \quad a\in\R^2.
 $$
 \vspace{-12mm}
\setcounter{equation}{0}
\section{Main result}
To state our main result, we introduce the weighted Sobolev spaces. Let  ${\bf L}^2_{\beta}$, $\beta\in\R$
be the  weighted space of solenoidal vector  field $A\in L^2_\beta$.
We define the space ${\cal F}_{\beta}$
of the  solenoidal vector  fields $(A,\Pi)$ with the norm
$
\Vert (A,\Pi)\Vert_{{\cal F}_\beta}=\Vert \na A\Vert_{{\bf L}^2_{\beta}}+\Vert \Pi\Vert_{{\bf L}^2_{\beta}}.
$
 Besides, we will use the space 
${\cal E}_{\beta}$ 
of the states $Y=(A,\Pi,q,p)$ with the norm
$$
\Vert Y\Vert_{\beta}=\Vert \na A\Vert_{{\bf L}^2_{\beta}}+\Vert \Pi\Vert_{{\bf L}^2_{\beta}}+|q|+|p|.
$$
\begin{theorem}\la{main}
Let conditions \eqref{rosym}--\eqref{zero1} and \eqref{M-condition}, and let $\beta=4+\delta$, $\delta\in (0,1)$.
Suppose that $Y_0\in {\cal E}_{\beta}$ is sufficiently close to the solitary manifold in the following sense:
\be\la{close}
Y_0=S(\si_0)+Z_0,\qquad d_\beta=\Vert Z_0\Vert_{\beta}\ll 1.
\ee
Let $Y(t)\in C(\R, {\cal E})$ be the solution to \eqref{mls3} with initial data $Y_0$. Then
\beqn\la{qq-as}
&&\dot q(t)= v_{\pm}+{\cal O}(|t|^{-2}),\quad  q(t)=v_{\pm}t+a_{\pm}+{\cal O}(|t|^{-2}),\\
\la{AP-as}
&&\left(\ba{cc} A(x,t)\\\Pi(x,t)\ea\right)=\left(
\ba{cc} A_{v_{\pm}}(x-v_{\pm}t-a_{\pm})\\ \Pi_{v_{\pm}}(x-v_{\pm}t-a_{\pm})\ea\right)
+W_0(t)\Psi_{\pm}+r_{\pm}(x,t),\quad t\to\pm\infty.
\eeqn
Here ${\w}_0={\w}(v_0)$, $W_0(t)$ is the dynamical group of the free wave equation, 
$\Psi_{\pm}$ are the corresponding asymptotic scattering states, 
and the remainder $r_{\pm}(x,t)$ converges to zero in the global energy norm:
\be\la{r-as}
\Vert r_{\pm}(t)\Vert_{\cal F}={\cal O}(|t|^{-1}),\quad t\to\pm\infty.
\ee
\end{theorem}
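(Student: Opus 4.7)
The strategy is the symplectic projection / frozen-trajectory approach for soliton asymptotic stability (as in Komech--Kopylova type works), adapted to accommodate the rotational mode $\om$.

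\emph{Modulation.} In a tubular neighbourhood of the four-dimensional manifold $\mathcal{S}$, the plan is to write $Y(t)=S(\sigma(t))+Z(t)$, with the modulation parameter $\sigma(t)=(b(t),v(t))$ chosen so that $Z(t)$ is symplectically orthogonal to the tangent space $T_{S(\sigma(t))}\mathcal{S}$ spanned by $\partial_b S$, $\partial_v S$. The implicit function theorem provides this decomposition for $d_\beta$ small. Substituting into \eqref{mls3} and eliminating $\om(t)$ algebraically via the conserved $M=M_0$ and \eqref{om-M}, one obtains a coupled system
\[
\dot\sigma(t)=\Sigma(\sigma(t))+\mathcal{O}(\|Z(t)\|_{-\beta}^{2}),\qquad \dot Z(t)=A(\sigma(t))Z(t)+N(Z(t),\sigma(t)),
\]
where $A(\sigma)$ is the linearization of \eqref{mls3} at $S(\sigma)$ and $N$ is at least quadratic in $Z$.

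\emph{Frozen linearized decay.} I would freeze the parameter at some $\sigma_*$ close to the expected limit and write $A(\sigma(t))=A(\sigma_*)+R(t)$ with $R(t)=A(\sigma(t))-A(\sigma_*)$ small. The core step is the weighted $L^{2}$ decay estimate
\[
\|e^{tA(\sigma_*)}P_c Z_0\|_{-\beta}\le C(1+|t|)^{-2}\|Z_0\|_{\beta},
\]
on the symplectic complement of $T_{S(\sigma_*)}\mathcal{S}$. Under the Wiener-type spectral condition \eqref{M-condition}, the spectrum of $A(\sigma_*)$ on the imaginary axis should reduce to a single eigenvalue at $0$ whose generalized eigenspace is exactly the four-dimensional tangent space to $\mathcal{S}$. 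The decay rate $t^{-2}$ would be extracted from the Fourier--Laplace representation of the resolvent, with two integrations by parts in the spectral parameter producing the gain from the quadruple vanishing of $\hat\rho$ at the origin provided by \eqref{zero2}, in the spirit of the estimates in \cite{K2025}.

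\emph{Bootstrap and conclusion.} Introducing the majorant $\mathcal{M}(T)=\sup_{0\le t\le T}(1+t)^{2}\|Z(t)\|_{-\beta}$ and using Duhamel's formula for the non-autonomous equation together with the quadratic character of $N$ and the smallness of $R(t)$ (controlled by $\int_0^t|\dot\sigma|\,ds$), one arrives at $\mathcal{M}(T)\le C(d_\beta+\mathcal{M}(T)^{2})$, hence $\|Z(t)\|_{-\beta}=\mathcal{O}((1+t)^{-2})$ uniformly. Integrating the modulation equation gives $v(t)\to v_+$, $b(t)=v_+t+a_++\mathcal{O}(t^{-2})$, which yields \eqref{qq-as}; the same argument run backward produces $v_-,a_-$. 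Finally, writing the field part of $Z(t)$ via the free-wave Duhamel formula $Z_{\rm field}(t)=W_0(t)Z_{\rm field}(0)+\int_0^t W_0(t-s)F(s)\,ds$ and using the $(1+s)^{-2}$ decay of the localized source $F(s)$ to pass to the limit in $W_0(-t)Z_{\rm field}(t)$, one obtains the scattering states $\Psi_\pm\in\mathcal{F}$ and the bound \eqref{r-as}.

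\emph{Main obstacle.} The hardest step is the frozen-operator decay estimate: the rotation couples $\om$ to the transverse field through the currents $\om J\varrho(\cdot-q)$ and the nonlinear term $\langle A,J\varrho\rangle$ in \eqref{om-M}, so $A(\sigma_*)$ differs from the $\om=0$ operator treated in \cite{K2025} by a rank-two, $\varrho$-localized perturbation. Verifying that the hypothesis \eqref{M-condition} excludes embedded eigenvalues and resonances of the perturbed operator, and that the generalized null space remains exactly four-dimensional, is the crux of the argument; once this spectral picture is secured, the resolvent expansion near $k=0$ together with \eqref{zero2} yields the $t^{-2}$ rate and the remainder of the bootstrap proceeds along standard lines.
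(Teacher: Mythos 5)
Your plan follows essentially the same route as the paper's proof: symplectic-orthogonal decomposition along the four-dimensional solitary manifold, modulation estimates $\dot\sigma={\cal O}(\Vert Z\Vert_{-\beta}^2)$, the $(1+|t|)^{-2}$ weighted-norm decay of the frozen linearized group on the symplectic complement obtained from the Fourier--Laplace resolvent analysis under the spectral condition \eqref{M-condition}, a majorant bootstrap giving $\Vert Z(t)\Vert_{-\beta}={\cal O}(t^{-2})$, and the free-wave Duhamel representation for the field asymptotics \eqref{AP-as}--\eqref{r-as}. The one point you gloss over is that the difference between the time-dependent generator and the frozen one contains the transport term $[u(t)-v(t_1)]\cdot\na$, which is not small in the operator norms needed for the Duhamel bootstrap; the paper removes it by freezing at the endpoint $t_1$ of each interval and performing the moving-frame change of variables \eqref{Z1}, rather than absorbing it as a perturbation controlled by $\int_0^t|\dot\sigma|\,ds$.
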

\setcounter{equation}{0}
\section{Symplectic Structure}
Consider the symplectic form $\Om$  on  the phase space ${\cal E}$,
\be\la{OmJ}
\Om(Y_1,Y_2)=\langle Y_1,{\bf J}Y_2\rangle,\quad Y_1,Y_2\in {\cal E},
\ee
where 
$
\langle Y_1, Y_2\rangle:=\langle A_1, A_2\rangle+
\langle\Pi_1,\Pi_2\rangle+q_1  q_2+p_1  p_2,
$
and $\langle A_1, A_2\rangle=\ds\int A_1(x) \ov A_2(x)dx$ etc.
\begin{defin}
i) $Y_1\nmid Y_2$   means that $Y_1$ is symplectic orthogonal to $Y_2$, i.e. $\Om(Y_1,Y_2)=0$.

ii) A projection operator ${\bf P}:{\cal E}\to{\cal E}$
is called symplectic
 orthogonal if $Y_1\nmid Y_2$ for $Y_1\in\mbox{\rm Ker}\5{\bf P}$ and
$Y_2\in\mbox {\rm Im}\5{\bf P}$.
\end{defin}
Denote by  ${\cal T}_{\si}{\cal S}$ to the tangent space to the manifold ${\cal S}$ at a point $S(\si)$.
The vectors
\be\la{inb}
\left.\ba{rclrclrclrclrclrcl}
\tau_j(v):=\pa_{b_j}S(\si)=&(-\pa_j A_{v}(y),&-\pa_j\Pi_{v}(y),&e_j& 0)
\\
\tau_{j+2}(v):=\pa_{v_j}S(\si)=&(\pa_{v_j}A_{v}(y),&\pa_{v_j}\Pi_{v}(y),& 0,& \pa_{v_j}p_{v})
\ea\right|, j=1,2,
\ee
form a basis in ${\cal T}_{\si}{\cal S}$. Here   $e_1=(1,0)$,  $e_2=(0,1)$.
\begin{lemma}\la{Ome}
The matrix ${\bf\Om}={\bf\Om}(v)$ with the elements $\Om(\tau_l(v),\tau_j(v))$ is non-degenerate for any $v\in V$.
\end{lemma}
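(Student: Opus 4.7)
My plan is to exploit Noether's theorem. The Hamiltonian \eqref{Hp} is translation invariant, and the associated conserved momentum is
\[
P_j(A,\Pi,q,p)=p_j-\langle\Pi,\partial_j A\rangle,\qquad j=1,2.
\]
I first verify that the canonical Hamiltonian vector field ${\bf J}{\cal D}P_j$ evaluated at the soliton $S(\sigma)$ coincides with $\tau_j(v)$ from \eqref{inb}; this follows from the definition of ${\bf J}$ in \eqref{canH} together with one integration by parts. Equivalently,
\[
\Omega(\tau_j(v),Y)=dP_j[Y]\qquad\text{for every }Y\in{\cal E}.
\]

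Next I compute the restriction of $P_j$ to the soliton manifold. Using $\Pi_v=-(v\cdot\nabla)A_v$ from \eqref{Hs5},
\[
P_j(v):=P_j(S(b,v))=(p_v)_j+v_l\langle\partial_l A_v,\partial_j A_v\rangle,
\]
which is independent of $b$. Substituting $Y=\tau_k$ and $Y=\tau_{k+2}$ into the Noether identity yields $\Omega(\tau_j,\tau_k)=\partial_{b_k}P_j(v)=0$ and $\Omega(\tau_j,\tau_{k+2})=\partial_{v_k}P_j(v)$, so ${\bf\Omega}(v)$ has the block form
\[
{\bf\Omega}(v)=\begin{pmatrix}0&B(v)\\ -B(v)^T&C(v)\end{pmatrix},\qquad B(v):=\frac{\partial P(v)}{\partial v}.
\]
A short Schur-complement-style computation (swapping the two block rows) then gives $\det{\bf\Omega}(v)=(\det B(v))^2$, so the claim reduces to invertibility of the $2\times 2$ Jacobian $B(v)$ for every $v\in V$.

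For the last step I split the soliton profile from \eqref{solit3} as $\hat A_v=\hat A_v^{\,t}+\hat A_v^{\,r}$, with $\hat A_v^{\,t}(k)\propto(v-(v\cdot k)k/k^2)\hat\rho(k)/\hat D_0(k)$ even in $k$ (by radial symmetry of $\rho$) and $\hat A_v^{\,r}(k)\propto-\w(v)J\hat\varrho(k)/\hat D_0(k)$ odd in $k$ (since $\hat\varrho=-i\nabla\hat\rho$). Under Plancherel all cross-terms in $P_j(v)$ vanish by parity, so $P(v)$ decomposes cleanly into a translational and a rotational piece. A direct Fourier computation should then produce $B(v)=mI_2+M_{\rm em}(v)$, where the ``added-mass'' $M_{\rm em}(v)$ is a sum of positive semidefinite Fourier integrals built from the kernels $|\hat\rho|^2/\hat D_0^{\,2}$ and $\w(v)^2|\hat\varrho|^2/\hat D_0^{\,2}$. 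This makes $B(v)$ positive definite, in particular invertible.

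I expect the main obstacle to be the positivity analysis in the last paragraph. The rotational frequency $\w(v)=M/(I+\varkappa(v))$ depends on $v$ through $\varkappa(v)$ of \eqref{kappa}, so differentiating the rotational contribution creates correction terms whose non-negativity is not manifest and which must be handled by exploiting the quadratic dependence on $\w$. The vanishing conditions \eqref{zero1}--\eqref{zero2} ensure $\hat\rho(k)=O(|k|^5)$ near $k=0$, so the Fourier integrals above all converge without low-frequency difficulty.
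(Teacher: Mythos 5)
Your reduction of the problem via the momentum map is correct and is a genuinely different route from the paper. The identity $\tau_j={\bf J}{\cal D}P_j(S(\si))$ with $P_j=p_j-\langle\Pi,\pa_jA\rangle$ does hold (with the paper's convention $\Om(Y_1,Y_2)=\langle Y_1,{\bf J}Y_2\rangle$ one gets $\Om(\tau_j,Y)=\langle {\cal D}P_j,Y\rangle$), so $\Om(\tau_j,\tau_k)=\pa_{b_k}P_j(S(b,v))=0$ and $\Om(\tau_j,\tau_{k+2})=\pa_{v_k}P_j(v)$, and the block computation $\det{\bf\Om}=(\det\pa_vP)^2$ is valid for any lower-right block $C$. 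This is cleaner than the paper's Appendix A, which obtains the same structural facts by brute force: the vanishing of $\Om_{jl}$ ($j,l\le2$), of $\Om_{j+2,j+2}$, and of $\Om_{14},\Om_{23}$ (after rotating to $v=(|v|,0)$) is proved there by parity of the even/odd parts $\hat A^{\pm}$ in Fourier space, leading to $\det{\bf\Om}=\Om_{13}^2\Om_{24}^2$. Your parity argument for the splitting of $P_j(v)$ into translational and rotational pieces is also sound and mirrors the paper's use of $\hat A^{+}$ (even) and $\hat A^{-}$ (odd).

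However, the decisive step — invertibility of $B(v)=\pa_vP(v)$ — is exactly where your proposal stops, and the claim you make there is not correct as stated: $B(v)$ is \emph{not} $mI_2$ plus a sum of manifestly positive semidefinite Fourier integrals. Two genuine difficulties arise, and both are the substance of the paper's proof. First, differentiating the rotational piece in $v$ hits $\w(v)=M/(I+\varkappa(v))$ through \eqref{om-M}, producing the negative term $-\frac{2\w^2v^2}{I+\varkappa}\Big(\int\frac{k_1^2|\na\hat\rho|^2}{\hat D_0^2}dk\Big)^2$ appearing in \eqref{AdA+}; its sign is wrong and it is dominated only via the Cauchy--Schwarz inequality $\big(\int k_1^2|\na\hat\rho|^2\hat D_0^{-2}dk\big)^2\le\varkappa\int k_1^4|\na\hat\rho|^2\hat D_0^{-3}dk$ together with $\varkappa\le I+\varkappa$. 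Second, even the purely translational contribution to the entry $\Om_{24}$ is not termwise nonnegative: the integrand contains factors such as $v^2-1<0$, and positivity is recovered only after the pointwise algebra that the paper performs for $X_{24}$, ending with $X_{24}\ge(v^2k^2-k_1^2)^2/\hat D_0^3\ge 0$ (a similar, easier rearrangement gives \eqref{ep11} for $\Om_{13}$). Without supplying these estimates (or an equivalent convexity/positivity argument for $v\mapsto P(v)$), your proof establishes only the reduction $\det{\bf\Om}=(\det\pa_vP)^2$, not the nondegeneracy itself; the last paragraph of your proposal acknowledges the obstacle but does not overcome it, so the lemma remains unproved in your write-up.
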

We prove this lemma   in Appendix A.
The following lemma states that in a small neighborhood of the solitary manifold ${\cal S}$ a ``symplectic orthogonal projection''
onto ${\cal S}$ is well-defined. 
Denote 
$
v(Y)=\ds\frac{p-\langle A(x),\rho(x-y)\rangle}{m}.
$
\begin{lemma}\la{skewpro}
Let (\ref{rosym}) hold, $\al\in\R $ and $\ov v<1$.
Then
\\
i) there exists a neighborhood ${\cal O}_\al({\cal S})$ of ${\cal S}$ in ${\cal E}_\al$ and a
map ${\bf \Pi}:{\cal O}_\al({\cal S})\to{\cal S}$  such that ${\bf \Pi}$ is uniformly
continuous on ${\cal O}_\al({\cal S})\cap \{Y\in {\cal E}_\al: |v(Y)|\le {\ov v}\}$ in the metric of ${\cal E}_\al$,
\be\la{proj}
{\bf \Pi} Y=Y~~\mbox{for}~~ Y\in{\cal S}, ~~~~~\mbox{and}~~~~~Y-S \nmid {\cal T}_S{\cal S},~~\mbox{where}~~S={\bf \Pi} Y.
\ee
ii) ${\bf \Pi }T_aY=T_a{\bf \Pi} Y$ for $Y\in{\cal O}_{\al}({\cal S})$ and $a\in\R^2$.
\\
iii) For any $\ov v<1$ there exists a $\ti v<1$ s.t. $|v({\bf\Pi}Y)|<\ti v$ when $|v(Y)|<\ov v$.
\\
iv) For any $\ti v<1$ there exists a $z_\al(\ti v)>0$ s.t. $S(\si)+Z\in{\cal O}_\al({\cal S})$
if $|v(S(\si))|<\ti v$ and $\Vert Z\Vert_\al<z_\al(\ti v)$.
\end{lemma}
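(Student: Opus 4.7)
I would deduce Lemma~5.3 from the implicit function theorem applied to the map $F:{\cal E}_\alpha\times\Sigma\to\R^4$ defined by
$$
F_j(Y,\sigma):=\Om(Y-S(\sigma),\tau_j(v)),\qquad j=1,\ldots,4,\quad \sigma=(b,v).
$$
By construction $F(Y,\sigma)=0$ is exactly the symplectic orthogonality $Y-S(\sigma)\nmid{\cal T}_{S(\sigma)}{\cal S}$, so a zero of $F(Y,\cdot)$ near a given $\sigma_*$ supplies the desired projection ${\bf\Pi}Y=S(\sigma)$.

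First I would verify that $F$ is $C^1$ with derivatives controlled uniformly on $\{|v|\le\ov v\}$. The Fourier formula (3.3) shows that $S(\sigma)$ is smooth in $\sigma$, and by (3.7) the tangent vectors $\tau_j(v)$ together with their $v$-derivatives belong to ${\cal E}_{-\alpha}$ uniformly for $|v|\le\ov v<1$, so each pairing $\Om(Z,\tau_j(v))$ is continuous on ${\cal E}_\alpha$. At the base point $(S(\sigma_*),\sigma_*)$ one has $F(S(\sigma_*),\sigma_*)=0$ and
$$
D_\sigma F\big|_{(S(\sigma_*),\sigma_*)}=-{\bf\Om}(v_*),
$$
which is invertible by Lemma~5.2. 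The quantitative IFT then yields, for some $z_\alpha(\ov v)>0$ independent of $\sigma_*$, a unique $C^1$ solution $Z\mapsto\sigma(\sigma_*,Z)$ of $F(S(\sigma_*)+Z,\sigma)=0$ with $\sigma(\sigma_*,0)=\sigma_*$, defined whenever $\|Z\|_\alpha<z_\alpha(\ov v)$. Setting ${\bf\Pi}Y:=S(\sigma(\sigma_*,Y-S(\sigma_*)))$ and using local uniqueness to glue these definitions on overlaps produces a uniformly continuous map ${\bf\Pi}$ on a neighborhood ${\cal O}_\alpha({\cal S})\cap\{|v(Y)|\le\ov v\}$, proving (i) and (iv).

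Parts (ii) and (iii) then follow from general considerations. The symplectic form and the soliton family are both translation equivariant: $T_aS(b,v)=S(b+a,v)$ and $\Om(T_aY_1,T_aY_2)=\Om(Y_1,Y_2)$, hence $F(T_aY,(b+a,v))=F(Y,(b,v))$, and the uniqueness of the IFT solution forces ${\bf\Pi}T_aY=T_a{\bf\Pi}Y$. For (iii), the function $v({\bf\Pi}Y)$ is continuous in $Y$ and agrees with $v(Y)$ on ${\cal S}$, so a further shrinking of $z_\alpha(\ov v)$ secures $|v({\bf\Pi}Y)|<\ti v$ for some $\ti v<1$ whenever $|v(Y)|\le\ov v$. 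The main obstacle I anticipate is the \emph{uniformity} across the entire manifold: a single IFT radius $z_\alpha(\ov v)$ must work for every $\sigma_*$ with $|v_*|\le\ov v$. This reduces to the uniform invertibility of ${\bf\Om}(v)$ provided by Lemma~5.2 together with uniform bounds on $\tau_j(v)$ and $\pa_v\tau_j(v)$ in ${\cal E}_{-\alpha}$ on $\{|v|\le\ov v\}$, which is precisely where the strict inequality $\ov v<1$ enters (the denominator $\hat D_0=k^2-(v\cdot k)^2$ in (3.3) is bounded below by $(1-\ov v^2)k^2$, so $\tau_j(v)$ and $\pa_v\tau_j(v)$ depend continuously on $v$ on the closed disk $|v|\le\ov v$).
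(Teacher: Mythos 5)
Your proposal is essentially the paper's own route: the paper gives no independent argument but defers to Lemma 3.4 of \cite{IKV2012}, whose proof is exactly this implicit-function-theorem construction — solving $\Om(Y-S(\si),\tau_j(v))=0$ for $\si$ near a base point, with invertibility of $D_\si F=-{\bf\Om}(v)$ supplied by Lemma \ref{Ome}, uniformity in $\si_*$ from $\ov v<1$, and translation equivariance giving part (ii). So the proposal is correct and takes the same approach as the (cited) proof.
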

The proof is similar to that of  Lemma 3.4 in \cite{IKV2012}. We will call ${\bf \Pi}$ the symplectic orthogonal projection onto ${\cal S}$.
\begin{cor}
The condition (\ref{close}) implies that we can assume $Y_0=S+Z_0$ where $S=S(\si_0)={\bf \Pi} Y_0$, and
\be\la{closeZ}
\Vert Z_0\Vert_{\beta} \ll 1.
\ee
\end{cor}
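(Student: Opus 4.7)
The plan is to use Lemma~\ref{skewpro} directly: the symplectic orthogonal projection ${\bf\Pi}$ is well-defined in a neighbourhood of ${\cal S}$, fixes points on ${\cal S}$, and is uniformly continuous on that neighbourhood. I will simply replace the initial decomposition $Y_0=S(\sigma_0^{\rm old})+Z_0^{\rm old}$ furnished by (\ref{close}) with the canonical one obtained from ${\bf\Pi}$.

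First, I would note that $|v(S(\sigma_0^{\rm old}))|<1$ since $\sigma_0^{\rm old}\in\Sigma$, so one may fix some $\ti v<1$ with $|v(S(\sigma_0^{\rm old}))|<\ti v$. By Lemma~\ref{skewpro}(iv) applied with $\alpha=\beta$, there exists $z_\beta(\ti v)>0$ such that $Y_0=S(\sigma_0^{\rm old})+Z_0^{\rm old}\in{\cal O}_\beta({\cal S})$ provided $\Vert Z_0^{\rm old}\Vert_\beta<z_\beta(\ti v)$. This is guaranteed by (\ref{close}) if $d_\beta$ is taken small enough. Hence the projection is applicable: set
$$
S:={\bf\Pi}Y_0\in{\cal S},\qquad \sigma_0:=\sigma({\bf\Pi}Y_0),\qquad Z_0:=Y_0-S.
$$
By construction $S=S(\sigma_0)$ and $Y_0=S+Z_0$, and the second relation in (\ref{proj}) yields $Z_0\nmid {\cal T}_S{\cal S}$.

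It remains to check $\Vert Z_0\Vert_\beta\ll 1$. Since $S(\sigma_0^{\rm old})\in{\cal S}$, the fixing property ${\bf\Pi}S(\sigma_0^{\rm old})=S(\sigma_0^{\rm old})$ from (\ref{proj}) gives
$$
\Vert S-S(\sigma_0^{\rm old})\Vert_\beta
=\Vert{\bf\Pi}Y_0-{\bf\Pi}S(\sigma_0^{\rm old})\Vert_\beta,
$$
and the uniform continuity of ${\bf\Pi}$ on ${\cal O}_\beta({\cal S})\cap\{|v(Y)|\le\ov v\}$ in the metric of ${\cal E}_\beta$ implies that this quantity tends to $0$ as $d_\beta=\Vert Z_0^{\rm old}\Vert_\beta\to 0$. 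The triangle inequality then yields
$$
\Vert Z_0\Vert_\beta\le \Vert Y_0-S(\sigma_0^{\rm old})\Vert_\beta+\Vert S(\sigma_0^{\rm old})-S\Vert_\beta=d_\beta+o(1)\ll 1,
$$
which is the desired bound (\ref{closeZ}).

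Essentially no step is genuinely hard since Lemma~\ref{skewpro} does all the work; the only mild subtlety is to make sure that the apriori velocity bound (\ref{apr}) is available when invoking the uniform continuity, i.e.\ that $|v(Y_0)|\le\ov v$ for some $\ov v<1$, which follows from the smallness of $Z_0^{\rm old}$ together with $|v(S(\sigma_0^{\rm old}))|<1$ by part (iii) of the lemma.
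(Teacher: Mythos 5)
Your argument is correct and is exactly what the paper intends: the corollary is stated as an immediate consequence of Lemma \ref{skewpro}, and you supply the routine details (part (iv) to place $Y_0$ in ${\cal O}_\beta({\cal S})$, the fixing property ${\bf\Pi}S(\si_0^{\rm old})=S(\si_0^{\rm old})$ plus uniform continuity to make $\Vert S-S(\si_0^{\rm old})\Vert_\beta$ small, then the triangle inequality). One tiny slip: the bound $|v(Y_0)|\le\ov v$ follows from the continuity of the map $Y\mapsto v(Y)$ together with $|v(S(\si_0^{\rm old}))|<1$ and the smallness of $Z_0^{\rm old}$, rather than from part (iii) of the lemma (which concerns $v({\bf\Pi}Y)$), but this does not affect the proof.
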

\setcounter{equation}{0}
\section{Linearization on the solitary manifold}
Let us consider a solution to the system (\ref{mls3}), and split it as  the sum
\be\la{dec}
Y(t)=S(\si(t))+Z(t),
\ee
where $\si(t)=(b(t),v(t))\in\Sigma$ is an arbitrary smooth function of $t\in\R$.
In detail, denote $Y=(A,\Pi,q,p)$ and $Z=(\Lambda,\Psi,r,\pi)$.
Then (\ref{dec}) means that
\be\la{add}
\left.
\ba{rclr}
A(x,t)&=&A_{v(t)}(x-b(t))+\Lambda(x-b(t),t), ~~ \Pi(x,t)=\Pi_{v(t)}(x-b(t))+\Psi(x-b(t),t)\\
q(t)&=&b(t)+r(t),~~~~~p(t)=p_{v}(t)+\pi(t)
\ea
\right|
\ee
Substituting  (\ref{add}) to (\ref{mls3}) and setting $y=x-b(t)$, we obtain 
\beqn\la{dot-A}
\dot A&=&\dot v\cdot \na_v A_{v(t)}(y)-\dot b\cdot \na A_{v(t)}(y)+
\dot\Lambda(y,t)-\dot b\cdot \na \Lambda (y,t)=\Pi_{v(t)}(y)+\Psi(y,t)\\
\nonumber
\dot\Pi&=&\dot v\cdot \na_v\Pi_{v(t)}(y)-\dot b \cdot\na\Pi_{v(t)}(y)+\dot\Psi(y,t)-\dot b\cdot \na\Psi(y,t)=\De A_{v(t)}(y)+\De\Lambda(y,t)\\
\la{dot-Pi}
&-&\frac{1}{I}\big(M+\langle A_{v(t)}(y)+\Lam(y,t),J\varrho(y-r)\rangle\big)J\varrho(y-r)+{\cal P}[\dot q \rho(y-r)]\\
\la{dot-q}
m\dot q&=&m(\dot b+\dot r)=p_{v}+\pi-\langle A_{v(t)}+\Lambda,\rho(y-r)\rangle\\
\nonumber
\dot p&=&\dot v\cdot \na_v p_{v(t)}+\dot \pi(t)
=\langle \na((A_{v(t)}(y)+\Lambda(y,t))\cdot\dot q),\rho(y-r)\rangle\\  
\la{dot-p}
&+&\frac{1}{I}\big(M+\langle A_{v(t)}(y)+\Lam(y,t),J\varrho(y-r)\rangle\big)\langle\na\cdot J(A_{v(t)}(y)+\Lambda(y,t)),\varrho(y-r)\rangle
\eeqn
{\it i)} Equations  (\ref{Hs5}) and \eqref{dot-A}  imply
\be\la{Lambda-eq}
\dot \Lambda(y,t)=\Psi(y,t)+\dot b\cdot \na \Lambda(y,t)+
(\dot b-v)\cdot \na A_{v}(y)-\dot v\cdot \na_v A_{v}(y).
\ee
{\it ii)}  Equations (\ref{Hs5}) and  \eqref{dot-Pi} imply
\beqn\nonumber
&&\dot \Psi(y,t)=\De\Lambda(y,t)+\dot b\cdot \!\na\Psi(y,t)-\dot v\cdot \na_v \Pi_{v}(y)+(\dot b-v)\cdot\! \na\Pi_{v}(y)+{\cal P}[\dot q \rho(y-r)-v\rho(y)]\\
\la{dotPsi}
&&-\frac{1}{I}\Big(\big(M+\langle A_{v}(y)+\Lam(y,t),J\varrho(y-r)\rangle\big) J\varrho(y-r)-\big(M+\langle A_{v}(y),J\varrho(y)\rangle\big) J\varrho(y)\Big)
\eeqn
First,  note that
\be\la{rho-exp}
\rho(y-r)=\rho(y)-r \cdot\na\rho(y)+N_{\rho}(r,y),\quad \varrho(y-r)=\varrho(y)-r \cdot\na\varrho(y)+N_{\varrho}(v,r),
\ee
where
$$
\Vert N_{\rho}(v,r)\Vert_\beta ={\cal R}_{v}(r^2),\quad   \Vert N_{\varrho}(v,r)\Vert_\beta ={\cal R}_{v} (r^2),
$$
Here ${\cal R}_{v}(A)$ is a general notation for a positive function which remains bounded as $A$ is sufficiently small.
Note that this  bound  is uniform  in $v$ and $Z$ with $\Vert Z\Vert_{-\beta}\le z_{-\beta}(\ti v)$ and $|v|<\ti v<1$.
Further,  \eqref{solY}, \eqref{dot-q} and  \eqref{rho-exp}  imply
\beqn\nonumber
m\dot q&=&mv+\langle  A_v,\rho\rangle+\pi-\langle A_v+\Lambda,\rho(y-r)\rangle\\
\la{dot-q-exp}
&=&mv+\pi-\langle  \Lam,\rho\rangle +\langle A_v, r\cdot\na\rho\rangle+\langle \Lambda, r\cdot\na\rho\rangle-\langle A_v+\Lam, N_{\rho}(v,r)\rangle.
\eeqn
By \eqref{solit3},
\be\la{A-r-rho}
\langle A_v, r\cdot\na\rho\rangle=-{\w}\langle  \frac{J\na\hat\rho(k)}{\hat D_0},(r\cdot k)\hat\rho(k)\rangle=-{\w} JPr,
\ee
where 
 \be\la{c-ro}
P=\{P_{jl}\}, ~~{\rm with}~~P_{lj}=P_{jl}:= \int \frac{k_l\na_j\hat\rho(k)}{\hat D_0}\hat\rho(k) dk,\quad j,l=1,2.
 \ee
Hence, \eqref{rho-exp}--\eqref{dot-q-exp} imply that
\be\la{qr}
\dot q\rho(y-r)-v\rho(y)
=\frac {\rho(y)}m \big(\pi-\langle \Lam,\rho\rangle-{\w} JPr\big)-vr \cdot\na\rho(y)+N'(v,Z),~~\Vert N'(v,Z)\Vert_\beta={\cal R}_{\ti v}\Vert Z\Vert^2_{-\beta}.
\ee
Further, \eqref{solit3} implies
\be\la{Acdot}
\langle A_v, (r\cdot\na) J\varrho\rangle=
=-\int\frac{(v\cdot Jk)(r\cdot k)(k\cdot\na\hat\rho)\hat\rho}{k^2\hat D_0}dk=-\int\frac{(v\cdot Jk)(r\cdot\na\hat\rho)\hat\rho}{\hat D_0}dk=r\cdot PJv.
\ee
since
$$
(r\cdot k)(k\cdot\na\hat\rho)=(r_1k_1+r_2k_2)(k_1\na_1\hat\rho+k_2\na_2\hat\rho)
=r_1\na_1\hat\rho(k_1^2+k_2^2) +r_2\na_2\hat\rho(k_1^2+k_2^2)=k^2(r\cdot\na\hat\rho)
$$
Therefore,  
\beqn\nonumber
&&\!\!\!\!\!\!\!\!\!\!\!\!\!\!\!\!\!\!\!\!(M+\langle A_v+\Lam,J\varrho(y-r)\rangle) J\varrho(y-r)-(M+\langle A_v,J\varrho\rangle)J\varrho(y)\\
\nonumber
&&\!\!\!\!\!\!\!\!\!\!\!\!\!\!\!\!\!\!\!\!=-I{\w}  (r\cdot\na)J\varrho(y)-(r\cdot PJv)J\varrho(y)+\langle \Lam,J\varrho\rangle J\varrho(y)+N''(v, Z),
~~\Vert N''(v,Z)\Vert_\beta={\cal R}_{v}\Vert Z\Vert^2_{-\beta}
\eeqn
by \eqref{rho-exp} and \eqref{Acdot}. 
Substituting  this and  \eqref{qr} into \eqref{dotPsi}, we obtain
\beqn\la{Psi-eq}
\!\!\!\!\!\!\!\!\!\!\!\!\!\!\!\!\!\!\!\!&&\dot \Psi=\De\Lambda+\dot b\cdot \!\na\Psi\!+(\dot b\!-\!v)\!\cdot\! \na\Pi_{v}\!
-\dot v\cdot \na_v \Pi_{v}+{\cal P}\Big[\frac{\rho}m\big(\pi-{\w} JPr-\langle \Lam,\rho\rangle\big)-vr \cdot\na\rho\Big]\\
\nonumber
\!\!\!\!\!\!\!\!\!\!\!\!\!\!\!\!\!\!\!\!&&
+{\w}  (r\cdot\na)J\varrho-\frac{1}{I}\big(\langle \Lam,J\varrho\rangle-r\cdot PJv\big)J\varrho+N_2(v,Z),
\quad \Vert N_2(v,Z)\Vert_\beta={\cal R}_{v}\Vert Z\Vert^2_{-\beta}.
\eeqn
\smallskip\\
{\it iii)} 
Equations  (\ref{solY}), (\ref{dot-q}) and \eqref{A-r-rho} imply
\be\la{r-eq}
m\dot r=-m\dot b+mv+\langle A_v,\rho\rangle+\pi-\langle A_v+\Lambda,\rho(y-r)\rangle
=-m(\dot b-v)+\pi-\langle\Lam,\rho\rangle-{\w} JPr+N_3(v,Z),
\ee
where $\Vert N_3(v,Z)\Vert_\beta={\cal R}_{v}\Vert Z\Vert^2_{-\beta}$.
\smallskip\\
{\it iv)} 
Using \eqref{dot-p} and the last equation \eqref{mls3} for soliton, we obtain
\beqn\nonumber
&&\!\!\!\!\!\!\!\!\!\!\!\!\!\!\!\!\!\!\!\!\dot \pi=-\dot v\cdot \na_v p_{v}+\langle \na((A_v+\Lambda)\cdot\dot q),\rho(y-r)\rangle-\langle \nabla(A_v\cdot v),\rho\rangle\\
\la{dot-pi}
&&\!\!\!\!\!\!\!\!\!\!\!\!\!\!\!\!\!\!\!\!+\frac{1}{I}\big(M+\langle A_v+\Lam,J\varrho(y-r)\rangle\big)\langle\na\cdot J(A_v+\Lambda,\varrho(y-r)\rangle
-\frac{1}{I}\big(M+\langle A_v,J\varrho\rangle\big)\langle \nabla\cdot JA_v,\varrho\rangle\big)
\eeqn
First, note that 
\be\la{APJ}
\langle\na(A_v\cdot\pi),\rho\rangle=-\langle \pi\cdot A_v,\na\rho\rangle
={\w}\int\frac{\pi\cdot J\na\hat\rho}{\hat D_0}k\hat\rho dk=-{\w} PJ\pi.
\ee
Similarly,
\be\la{AJJP}
\langle\na(A_v\cdot\langle\Lambda,\rho\rangle),\rho\rangle=-{\w} PJ\langle \Lam,\rho\rangle,\qquad
\langle \na (A_v\cdot  JPr),\rho\rangle =-{\w} PJJPr={\w} P^2r.
\ee
Moreover,
\be\la{AJJP+}
\langle \na(A_v\cdot v), r\cdot\na\rho\rangle=\int k\frac {v^2k^2-(v\cdot k)^2}{k^2\hat D_0}(r\cdot k)|\hat\rho(k)|^2dk=Qr,
\ee
where we denote
\be\la{h-ro}
Q=\{Q_{jl}\}, \quad Q_{jl}:=\int \big(v^2k^2-(v\cdot k)^2\big)\frac{k_lk_j|\hat\rho(k)|^2dk}{k^2\hat D_0},\quad j,l=1,2.
\ee
Therefore, using \eqref{rho-exp}--\eqref{A-r-rho} and \eqref{AJJP}--\eqref{AJJP+}, we get 
\beqn\nonumber
\!\!\!\!\!\!\!\!\!\!\!\!\!\!&&\langle \na((A_v+\Lambda)\cdot\dot q),\rho(y-r)\rangle-\langle \na(A_v\cdot v),\rho\rangle\\
\la{s1}
\!\!\!\!\!\!\!\!\!\!\!\!\!\!&&=-\langle v\cdot \!\Lam,\na\rho\rangle-\!\frac {\w}{m}PJ(\pi\!-\!\langle \Lam,\rho\rangle)-(\frac {\w^2}{m}P^2\!+Q)r\!+\!N'''(v,\!Z),~~
\Vert N'''(v,\!Z)\Vert_\beta=\!{\cal R}_{v}\Vert Z\Vert^2_{-\beta}.
\eeqn
Further, we notice that
\beqn\nonumber
\langle\na\cdot JA_v,\varrho\rangle&=&\langle k\cdot J\hat A_v,\na\hat\varrho\rangle=-\int \frac{(v\cdot Jk)\hat\rho}{\hat D_0}\,\na\hat\rho dk=PJ v,\\
\la{JAro}
\langle\na\cdot JA_v,r\cdot\na\varrho\rangle
&=&{\w}\int \frac{(k\cdot\na\hat\rho)}{\hat D_0}(r\cdot k) \na\rho dk={\w} Fr, 
\eeqn
where we denote
\be\la{f-ro}
F=\{F_{jl}\}~~~{\rm with}~~F_{jl}:=\int \frac{(k\cdot \na\hat\rho)k_j\na_l\hat\rho}{\hat D_0} dk:=\int \frac{k_jk_l|\na\hat\rho|^2}{\hat D_0} dk..
\ee
Therefore, \eqref{rho-exp}, \eqref{Acdot} and \eqref{JAro} impliy
\beqn\nonumber
&&\frac{1}{I}\big(M+\langle A_v(y)+\Lam(y,t),J\varrho(y-r)\rangle\big)\langle\na\cdot J(A_v+\Lambda),\varrho(y-r)\rangle
-\frac{1}{I}(M+\langle A_v,J\varrho\rangle\big)\langle\na\cdot JA_v,\varrho\rangle\\
\la{s2}
&&={\w}\langle \na\cdot J\Lam,\varrho\rangle-{\w}^2 Fr+ \frac{1}{I}\langle \Lam,J\varrho\rangle PJv
-\frac 1{I}(r\cdot PJv) PJv+N^{\rm iv}(v,Z).
\eeqn
where $\Vert N^{\rm iv}(v,Z)\Vert_\beta={\cal R}_{v}\Vert Z\Vert^2_{-\beta}$. Substituting \eqref{s1} and \eqref{s2} into \eqref{dot-pi}, we get
\beqn\nonumber
\dot \pi&\!\!\!=\!\!\!&-\dot v\cdot \na_v p_{v}-\langle v\cdot\Lambda,\na\rho\rangle-\frac {\w}{m} PJ\pi+\frac {\w}m PJ\langle \Lam,\rho\rangle
+{\w}\langle \na\cdot J\Lam,\varrho\rangle-(\frac {{\w}^2 }m P^2\!+Q+{\w}^2 F)r\\
\la{pi-eq}  
&\!\!\!-\!\!\!&\frac 1{I}\Big(\!(r\cdot PJv)-\langle \Lam,J\varrho\rangle\!\Big) PJv+N_4(v,Z),\quad \Vert N_4(v,Z)\Vert_\beta={\cal R}_v\Vert Z\Vert^2_{-\beta}.
\eeqn
\smallskip\\
Combining equations \eqref{Lambda-eq}, \eqref{Psi-eq}, \eqref{r-eq}, and \eqref{pi-eq} together, we obtain
\be\la{lin}
\dot Z(t)={\bf A}(t)Z(t)+T(t)+N(t),\,\,\,t\in\R, \quad Z=(\Lam,\Psi,r,\pi).
\ee
Here the operator ${\bf A}(t)={\bf A}_{v(t), u(t)}$ depends on two parameters, $v=v(t)$, 
and $u(t):=\dot b(t)$. It  can be written in the matrix form
\be\la{AA}
{\bf A}:=\left(
\ba{cccccc}
u \cdot\na & 1 & 0 & 0 \\
\De -{\cal P}[\frac{\langle\cdot,\rho\rangle\rho}{m}] -\frac{\langle \cdot,J\varrho\rangle J\varrho}{I}& u \cdot \na & {B}&{\cal P} [\frac{\rho}m\cdot] \\
-\frac 1m\langle\cdot,\rho\rangle & 0 & -\frac{\w}{m}JP & \frac 1m\\
  {B}^* & 0 &-S & -\frac{\w}mPJ&\\
\ea\right)
\ee
where we denote
\beqn \la{B-op+}
 {B}r:&\!\!=\!\!&{\cal P}[-\frac{\w}{m}\rho JPr-v(r\cdot \na\rho)]+{\w}(r\cdot \na)J\varrho+\frac{1}{I}(r\cdot PJv)J\varrho,\\
 \la{B-op}
  {B}^*\Lam:&\!\!=\!\!&\frac {\w}m PJ\langle \Lam,\rho\rangle-\langle v\cdot\Lam,\na\rho\rangle+{\w}\langle \na\cdot J\Lam,\varrho\rangle
  +\frac{1}{I}PJv\langle\Lam,J\varrho\rangle,\\
 \la{j-al}
Sr:&\!\!=\!\!&(\frac {\om^2 }m P^2+Q+\om^2 F)r+\frac 1{I}(r\cdot PJv) PJv.
\eeqn
Furthermore,   $T(t)=T_{v(t), u(t)}$ and $N(t)=N(v(t),Z(t))$ in  (\ref{lin}) stand for
\be\la{T}
T(t)=\left(
\ba{c}
(u-v)\cdot \na A_{v}-\dot v\cdot \na_v A_{v}\\
(u-v)\!\cdot\na\Pi_{v}-\dot v\cdot \na_v \Pi_{v}\\
v-u \\
-\dot v\cdot \na_v p_{v}\\
\ea
\right),\quad
N(t)=\left(
\ba{c}
0 \\ N_2(,vZ)  \\\ N_3(v,Z) \\ N_4 (v,Z)
\ea
\right).
\ee
The remainder term $N(v,Z)$ satisfies the estimate
\be\la{N-est}
\Vert N(v,Z)\Vert_\beta\le C(\ti v)\Vert Z\Vert^2_{-\beta}.
\ee
uniformly in $v$ and $Z$ with $\Vert Z\Vert_{-\beta}\le z_{-\beta}(\ti v)$ and $|v|<\ti v<1$.
 \setcounter{equation}{0}
\section{The Linearized Equation}
Here we collect some Hamiltonian and spectral properties of the
generator (\ref{AA}) of the linearized equation
\be\la{line}
\dot X(t)={\bf A}_{v,u}X(t),~~~~~~~t\in\R
\ee
with arbitrary fixed $v\in V=\{v\in\R^2: |v|<1\}$ and $u\in \R^2$
\begin{lemma} \la{haml}
i) The equation (\ref{line}) formally can be written as the Hamilton system,
\be\la{lineh}
\dot X(t)=
{\bf J}D{\cal H}_{v,u}(X(t)),~~~~~~~t\in\R,
\ee
where $D{\cal H}_{v,u}$ is the Fr\'echet derivative of the Hamilton functional
\beqn\nonumber
{\cal H}_{v,u}(X)&=&\frac 12\int\Big[|\Psi|^2+|\na\Lambda|^2\Big]dy+\int\Psi (u\cdot\na)\Lambda dy
+\frac{1}{2m}\langle\Lambda,\rho\rangle^2+\frac {1}{2I}\langle \Lambda,J\varrho\rangle^2\\
\nonumber
&+&\int (\frac{\w}{m}JP r-\frac{\pi}m)\cdot\Lambda\rho dy-\frac{1}{I}r\cdot PJv\int \Lambda\cdot J\varrho dy+\int(\na\rho\cdot r)v\cdot \Lambda dy\\
\la{H0}
&-&{\w}\int(r\cdot\na)\Lam\cdot J\varrho dy +\frac{\pi^2}{2m} -\frac{\w}{m} JPr \cdot\pi+\frac 12 r\cdot Sr,\quad X=(\Lambda,\Psi,r,\pi)\in \cE.
\eeqn
ii) Energy conservation law holds for  solutions $X(t)\in C^1(\R,\cE)$,
\be\la{enec}
{\cal H}_{v,u}(X(t))={\rm const},~~~~~t\in\R.
\ee
iii) The skew-symmetry relation holds,
\be\la{com}
\Omega({\bf A}_{v,u}X_1,X_2)=-\Omega(X_1,{\bf A}_{v,u}X_2), ~~~~~~~~X_1,X_2\in \cE.
\ee
\end{lemma}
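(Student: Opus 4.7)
Statements (ii) and (iii) are abstract consequences of (i): once the generator is written as $\mathbf{A}_{v,u} = \mathbf{J}D\mathcal{H}_{v,u}$ with a symmetric Hessian, energy conservation and symplectic skew-symmetry follow automatically. My plan is therefore to concentrate on (i) and then read off (ii) and (iii) at the end.

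To prove (i) I would verify the Hamilton equation $\dot X = \mathbf{J}D\mathcal{H}_{v,u}(X)$ component by component. With $X = (\Lambda,\Psi,r,\pi)$ and $\mathbf{J}$ as in (\ref{canH}), this unpacks into
\begin{equation*}
\dot\Lambda = D_\Psi\mathcal{H},\qquad \dot\Psi = -D_\Lambda\mathcal{H},\qquad \dot r = D_\pi\mathcal{H},\qquad \dot\pi = -D_r\mathcal{H},
\end{equation*}
and I would check that each right-hand side matches the corresponding row of $\mathbf{A}_{v,u}X$ from (\ref{AA}). The two ``easy'' derivatives are $D_\Psi\mathcal{H} = \Psi + (u\cdot\nabla)\Lambda$ and $D_\pi\mathcal{H} = \pi/m - (\omega/m)JPr - \langle\Lambda,\rho\rangle/m$, which reproduce rows 1 and 3 directly.

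The content is in $D_\Lambda\mathcal{H}$ and $D_r\mathcal{H}$, which I would compute term by term. The kinetic piece $\frac{1}{2}|\nabla\Lambda|^2$ gives $-\Delta\Lambda$ and the cross term $\int\Psi(u\cdot\nabla)\Lambda$ gives $-(u\cdot\nabla)\Psi$, both by a single integration by parts; the self-couplings $\frac{1}{2m}\langle\Lambda,\rho\rangle^2$ and $\frac{1}{2I}\langle\Lambda,J\varrho\rangle^2$ produce the two rank-one operators sitting in the $(2,1)$ entry of $\mathbf{A}$; and the remaining bilinear couplings between $\Lambda$ and $(r,\pi)$ regroup, after inserting the projector $\mathcal{P}$ to enforce solenoidality of $\cE$, into the operator $B$ of (\ref{B-op+}) and the source $\mathcal{P}[\rho\cdot/m]$. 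The analogous computation for $D_r\mathcal{H}$ yields $B^*\Lambda + Sr + (\omega/m)PJ\pi$, the symmetric matrix $S$ emerging from $\frac{1}{2}r\cdot Sr$ together with the cross term $-(\omega/m)JPr\cdot\pi$, while $B^*$ as defined in (\ref{B-op}) is manifestly the adjoint of $B$ in the $\cE$-inner product because it is obtained by differentiating the same bilinear form in the other variable. I expect the main obstacle to be verifying this adjointness cleanly, especially the step that relates the contribution $\omega(r\cdot\nabla)J\varrho$ inside $D_\Lambda\mathcal{H}$ to the contribution $\omega\langle\nabla\cdot J\Lambda,\varrho\rangle$ inside $D_r\mathcal{H}$: the two expressions agree via one integration by parts only because of the spherical-symmetry identity $y_j\partial_k\rho = y_k\partial_j\rho$, which is a direct consequence of (\ref{rosym}). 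Symmetry of $P,Q,F$ needed for $S$ is evident by inspection of their integral representations (\ref{c-ro}), (\ref{h-ro}), (\ref{f-ro}).

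With (i) established, (ii) follows from $\frac{d}{dt}\mathcal{H}_{v,u}(X(t)) = \langle D\mathcal{H},\dot X\rangle = \langle D\mathcal{H},\mathbf{J}D\mathcal{H}\rangle = \Omega(D\mathcal{H},D\mathcal{H}) = 0$ by antisymmetry of $\Omega$. For (iii), since $\mathcal{H}_{v,u}$ is a quadratic form its Hessian $\mathbf{H}$ is a symmetric bounded operator with $D\mathcal{H}(X) = \mathbf{H}X$ and $\mathbf{A}_{v,u} = \mathbf{J}\mathbf{H}$; then
\begin{equation*}
\Omega(\mathbf{A}_{v,u}X_1,X_2) = \langle\mathbf{J}\mathbf{H}X_1,\mathbf{J}X_2\rangle = \langle\mathbf{H}X_1,X_2\rangle = \langle X_1,\mathbf{H}X_2\rangle = -\Omega(X_1,\mathbf{A}_{v,u}X_2),
\end{equation*}
using $\mathbf{J}^T\mathbf{J} = I$ (easily verified from (\ref{canH})), symmetry of $\mathbf{H}$, and $\mathbf{J}^2 = -I$.
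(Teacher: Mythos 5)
Your proposal is correct and takes essentially the same route as the paper: the paper's proof of (i) is just the one-line remark that \eqref{lineh} ``is easily verified by differentiation'' (your component-wise check, including the radial-symmetry identity behind the adjointness of $B$ and $B^*$, is that verification spelled out in detail). For (ii) and (iii) the paper argues exactly as you do, via the chain rule together with skew-symmetry of ${\bf J}$, and via ${\bf A}_{v,u}X={\bf J}D\cH_{v,u}(X)$ with $X\mapsto D\cH_{v,u}(X)$ symmetric as the derivative of a quadratic form.
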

\begin{proof}
i) Equality \eqref{lineh}   is easily verified by differentiation, see also (\ref{canH}).\\
ii) The energy conservation law follows by  (\ref{lineh}) and the chain rule for the Fr\'echet derivatives. Formally,
$\ds\frac d{dt}\cH_{v,u}(X(t))=\langle D\cH_{v,u}(X(t)),\dot X(t)\rangle=
\langle D\cH_{v,u}(X(t)),{\bf J} D\cH_{v,u}(X(t))\rangle=0$ for  $t\in\R$,
since the operator ${\bf J}$ is skew-symmetric, and $D\cH_{v,u}(X(t))\in \cE$ for  $X(t)\in \cE$.
\\
iii) The skew-symmetry holds since ${\bf A}_{v,u}X={\bf J}D\cH_{v,u}(X)$, and the linear operator
$X\mapsto D\cH_{v,u}(X)$ is symmetric as the Fr\'echet derivative of a quadratic form.
\end{proof}
\begin{lemma} \la{ljf}
The operator ${\bf A}_{v,u}$ acts on the tangent vectors $\tau_j(v)$ to the solitary manifold as follows,
\be\la{Atan}
{\bf A}_{v,u}[\tau_j(v)]=(u-v)\cdot\na\tau_j(v),\quad
{\bf A}_{v,u}[\tau_{j+2}(v)]=(u-v)\cdot\na\tau_{j+2}(v)+\tau_j(v,),\qquad j=1,2.
\ee
\end{lemma}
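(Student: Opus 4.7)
The plan is to reduce the claim to the special case $u = v$ by exploiting the structure of ${\bf A}_{v,u}$: the only $u$-dependence of the matrix \eqref{AA} is through the drift entries $u\cdot\na$ in positions $(1,1)$ and $(2,2)$. Since the finite-dimensional components $e_j, 0$ of $\tau_j$ and $0, \partial_{v_j}p_v$ of $\tau_{j+2}$ are constant in $y$, applied to these tangent vectors the operator identity
\[
{\bf A}_{v,u}\tau = {\bf A}_{v,v}\tau + (u-v)\cdot\na\,\tau
\]
holds, with $\na$ annihilating the $y$-independent components. Thus it suffices to prove
\[
{\bf A}_{v,v}\tau_j(v) = 0, \qquad {\bf A}_{v,v}\tau_{j+2}(v) = \tau_j(v), \qquad j = 1, 2,
\]
which exhibit $\tau_j$ as zero modes and $\tau_{j+2}$ as Jordan-block generalized eigenvectors---the familiar structure for tangent vectors generated by continuous symmetries.

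Both identities follow by plugging exact one-parameter families of soliton solutions into the decomposition \eqref{dec} with reference trajectory $b(t) = b_0 + vt$, $v(t) \equiv v$; for this reference $u(t) = \dot b(t) = v$ and $\dot v(t) = 0$, so the forcing term $T(t)$ in \eqref{T} vanishes identically and ${\bf A}(t) \equiv {\bf A}_{v,v}$. For the first identity, take the spatially translated soliton $Y_\varepsilon(t) = S((b_0 + \varepsilon e_j + vt,\, v))$. The computation $\Lambda(y, t) = A(y + b_0 + vt, t) - A_v(y) = A_v(y - \varepsilon e_j) - A_v(y)$, together with the analogous identities for $\Psi, r, \pi$, yields the time-independent expansion $Z_\varepsilon(t) = \varepsilon \tau_j(v) + O(\varepsilon^2)$. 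Hence $\dot Z_\varepsilon = O(\varepsilon^2)$, and substituting into \eqref{lin} with $\Vert N(v, Z_\varepsilon)\Vert_\beta = O(\varepsilon^2)$ from \eqref{N-est} and comparing the coefficient of $\varepsilon$ forces ${\bf A}_{v,v}\tau_j = 0$.

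For the second identity, take $Y_\varepsilon(t) = S((b_0 + (v + \varepsilon e_j)t,\, v + \varepsilon e_j))$. The Taylor expansion
\[
A_{v+\varepsilon e_j}(y - \varepsilon t e_j) = A_v(y) + \varepsilon\big[\partial_{v_j} A_v(y) - t\partial_j A_v(y)\big] + O(\varepsilon^2)
\]
and similar expansions for $\Pi_v$ and $p_v$ (combined with the exact identity $r(t) = \varepsilon t e_j$) produce
\[
Z_\varepsilon(t) = \varepsilon\big[\tau_{j+2}(v) + t\,\tau_j(v)\big] + O(\varepsilon^2), \qquad \dot Z_\varepsilon(t) = \varepsilon \tau_j(v) + O(\varepsilon^2).
\]
Substituting into \eqref{lin} and using the first identity to eliminate the $t\,{\bf A}_{v,v}\tau_j$ term, the $O(\varepsilon)$ balance yields ${\bf A}_{v,v}\tau_{j+2} = \tau_j$. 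The main technical point is bookkeeping: one must verify that the Taylor expansions above hold in the weighted norm $\Vert\cdot\Vert_\beta$ used for \eqref{lin}---which follows from smoothness of $v \mapsto (A_v, \Pi_v, p_v)$ via \eqref{solit3} and the neutrality assumption \eqref{zero1}---and that \eqref{N-est} applies uniformly for small $\varepsilon$. An alternative, more pedestrian route is direct row-by-row verification of ${\bf A}_{v,v}\tau_j = 0$ and ${\bf A}_{v,v}\tau_{j+2} = \tau_j$ using the soliton equations \eqref{Hs5} together with the integral identities \eqref{A-r-rho} and \eqref{Acdot}, at the cost of lengthier algebra.
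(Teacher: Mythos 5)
Your proposal is correct in substance but follows a genuinely different route from the paper. The paper proves Lemma \ref{ljf} in Appendix B by brute force: it differentiates the soliton equations \eqref{Hs5} in $b_j$ and $v_j$ to get \eqref{d1}--\eqref{d21} and then verifies the four rows of ${\bf A}_{v,u}\tau_j$ and ${\bf A}_{v,u}\tau_{j+2}$ one by one, using the explicit Fourier identities for $P$, $Q$, $F$ (e.g.\ \eqref{d4}, \eqref{e5}, \eqref{51}--\eqref{52}) and the relation \eqref{om-M} --- i.e.\ exactly the ``pedestrian'' alternative you mention at the end. You instead observe that the $u$-dependence of \eqref{AA} sits only in the two drift entries, so the statement reduces to the frozen case $u=v$, and you then obtain ${\bf A}_{v,v}\tau_j=0$, ${\bf A}_{v,v}\tau_{j+2}=\tau_j$ by inserting the exact one-parameter soliton families (translated, respectively boosted at fixed conserved $M$) into the already-derived modulation equation \eqref{lin} with $T\equiv 0$ and the quadratic bound \eqref{N-est}, and matching the order-$\varepsilon$ terms. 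This is legitimate and logically non-circular, since Section 6 derives \eqref{lin}--\eqref{N-est} for an arbitrary solution and an arbitrary smooth reference $\si(t)$; what it buys is conceptual transparency (the Jordan structure is exhibited as coming from the symmetry/parameter families) and it spares the reader the integral identities of Appendix B, at the price of the weighted-norm Taylor bookkeeping you flag (expansions of $A_v,\Pi_v,p_v$ in $\Vert\cdot\Vert_\beta$, and control of ${\bf A}_{v,v}$ applied to the $O(\varepsilon^2)$ remainders) and of reusing, rather than re-deriving, the heavy algebra already encoded in \eqref{AA} and \eqref{N-est}. One point you should state explicitly: the boosted family must be taken at fixed $M$, so that all $S(\si)$ solve the same system \eqref{mls3} and so that $\pa_{v_j}$ agrees with the paper's tangent vector \eqref{inb}, where the $v$-derivative is taken with $\w=\w(v)=M/(I+\varkappa(v))$ varying accordingly (cf.\ \eqref{paA1}); with that understanding your argument goes through.
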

We prove the lemma in Appendix B. 
\begin{cor}\la{ceig1}
In the case  $u=v$, the tangent vectors $\tau_j(v)$ are eigenvectors,
and $\tau_{j+2}(v)$ are root vectors of the operator ${\bf A}_{v}:={\bf A}_{v,v}$, corresponding to zero eigenvalue, i.e.
\be\la{Atanformv}
{\bf A}_{v}[\tau_j(v)]=0, \quad  {\bf A}_{v}[\tau_{j+2}(v)]=\tau_j(v),\quad j=1,2.
\ee
\end{cor}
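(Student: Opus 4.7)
The plan is essentially immediate from Lemma \ref{ljf}. Specializing the two identities
\[
{\bf A}_{v,u}[\tau_j(v)] = (u-v)\cdot\nabla\tau_j(v), \qquad {\bf A}_{v,u}[\tau_{j+2}(v)] = (u-v)\cdot\nabla\tau_{j+2}(v) + \tau_j(v)
\]
to the case $u=v$ makes the transport terms $(u-v)\cdot\nabla\tau_j(v)$ and $(u-v)\cdot\nabla\tau_{j+2}(v)$ vanish identically, leaving exactly \eqref{Atanformv}. Hence the corollary reduces to a one-line substitution into the previously established lemma, and there is no additional computation or obstacle to overcome at this stage.

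To package this into a proof, I would simply write: ``Setting $u=v$ in \eqref{Atan} eliminates the first-order transport contributions, which yields ${\bf A}_v[\tau_j(v)]=0$ and ${\bf A}_v[\tau_{j+2}(v)]=\tau_j(v)$ for $j=1,2$.'' From these identities the interpretation as eigenvectors and root vectors at the zero eigenvalue is automatic: $\tau_j(v)\in\ker{\bf A}_v$, and $\tau_{j+2}(v)$ satisfies ${\bf A}_v^2\tau_{j+2}(v)={\bf A}_v\tau_j(v)=0$ without belonging to $\ker{\bf A}_v$ (assuming $\tau_j(v)\neq 0$, which is guaranteed by the non-degeneracy of ${\bf\Omega}(v)$ in Lemma \ref{Ome}, since the $\tau_l(v)$ form a basis of ${\cal T}_\sigma{\cal S}$).

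The only conceptual remark worth including is the geometric meaning of the result: the generator ${\bf A}_v$ on the comoving frame $u=v$ has a four-dimensional generalized null space spanned by $\{\tau_1,\tau_2,\tau_3,\tau_4\}$, organized as two Jordan blocks of size $2$. The pair $(\tau_j,\tau_{j+2})$ reflects translation invariance (generating the null vector $\tau_j$) together with the symmetry under infinitesimal change of velocity (generating the root vector $\tau_{j+2}$), analogous to the standard Galilean/boost structure encountered in earlier works such as \cite{IKV2012,K2025}. Since this is a direct corollary and the main content is already proved in Lemma \ref{ljf} (with the computation deferred to Appendix B), no further steps are required.
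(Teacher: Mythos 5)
Your proposal is correct and coincides with the paper's own (implicit) argument: the corollary is obtained exactly by setting $u=v$ in \eqref{Atan} of Lemma \ref{ljf}, so the transport terms $(u-v)\cdot\nabla\tau_j(v)$ drop out and \eqref{Atanformv} follows. Your additional remarks on the root-vector interpretation and the non-vanishing of $\tau_j(v)$ via Lemma \ref{Ome} are consistent with the paper and add nothing that conflicts with it.
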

 \begin{lemma}\la{ceig}
The Hamilton function ${\cal H}_{v}(X):={\cal H}_{v,v}(X)$ is nonnegative definite,
\be\la{H+}
{\cal H}_{v}(X)\ge 0,\quad X\in {\cal E}.
\ee
\end{lemma}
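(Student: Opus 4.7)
The plan is to rewrite $\mathcal{H}_v(X)$ as a sum of manifestly non-negative terms by completing squares that correspond to the \emph{physical} perturbation variables in the co-moving frame, namely the Maxwell field energy in that frame together with the linearised kinetic energies $\tfrac{m}{2}|\delta\dot q|^2$ and $\tfrac{I}{2}|\delta\w|^2$. These squares are inherited from the full Hamiltonian \eqref{Hp}, which in physical $(E,B,\dot q,\w)$ variables is manifestly $\ge 0$.

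First I would complete the $\Psi$--$\Lambda$ cross term,
$$
\tfrac{1}{2}\Vert\Psi\Vert^2+\int\Psi(v\cdot\nabla)\Lambda\,dy+\tfrac{1}{2}\Vert\nabla\Lambda\Vert^2=\tfrac{1}{2}\Vert\Psi+(v\cdot\nabla)\Lambda\Vert^2+\tfrac{1}{2}\bigl(\Vert\nabla\Lambda\Vert^2-\Vert(v\cdot\nabla)\Lambda\Vert^2\bigr),
$$
where the last bracket is $\ge(1-|v|^2)\Vert\nabla\Lambda\Vert^2\ge0$ by Plancherel and $|v\cdot k|\le|v||k|$, using the standing bound $|v|<1$; the quantity $\Psi+(v\cdot\nabla)\Lambda$ is exactly the moving-frame time derivative of $\Lambda$, cf.\ \eqref{Lambda-eq}. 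Next I would identify the linearised velocity and angular-velocity perturbations,
$$
\delta\dot q=\tfrac{1}{m}(\pi-\langle\Lambda,\rho\rangle-\w JPr),\qquad \delta\w=\tfrac{1}{I}(\langle\Lambda,J\varrho\rangle-r\cdot PJv),
$$
obtained from $\dot q=(p-\langle A,\rho(\cdot-q)\rangle)/m$ and $\w=(M+\langle A,J\varrho(\cdot-q)\rangle)/I$ by differentiation, invoking \eqref{A-r-rho} and \eqref{Acdot}. A direct expansion shows that $\tfrac{m}{2}|\delta\dot q|^2+\tfrac{I}{2}|\delta\w|^2$ accounts for exactly every term of $\mathcal{H}_v$ involving $\pi$, $\langle\Lambda,\rho\rangle$, or $\langle\Lambda,J\varrho\rangle$, modulo a spurious residual $\tfrac{\w^2}{2m}|Pr|^2+\tfrac{1}{2I}(r\cdot PJv)^2$ produced by the squaring. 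This residual is cancelled exactly by two of the four summands of $\tfrac{1}{2}r\cdot Sr$ in \eqref{j-al}; what remains from $\tfrac{1}{2}r\cdot Sr$ is $\tfrac{1}{2}r\cdot Qr+\tfrac{\w^2}{2}r\cdot Fr$, which is $\ge 0$ since $Q$ and $F$ in \eqref{h-ro}, \eqref{f-ro} have the Gram form $\int w(k)\,k\otimes k\,dk$ with non-negative weights (for $Q$ one uses $v^2k^2-(v\cdot k)^2\ge0$ and $\hat D_0\ge(1-|v|^2)k^2>0$).

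After this bookkeeping the only unaccounted terms are $\int(\nabla\rho\cdot r)(v\cdot\Lambda)\,dy-\w\int(r\cdot\nabla)\Lambda\cdot J\varrho\,dy$, which after integration by parts equal $-\int(r\cdot\nabla)\Lambda\cdot(v\rho+\w J\varrho)\,dy$. I would absorb them into the positive residual $\tfrac{1}{2}(\Vert\nabla\Lambda\Vert^2-\Vert(v\cdot\nabla)\Lambda\Vert^2)+\tfrac{1}{2}r\cdot Qr+\tfrac{\w^2}{2}r\cdot Fr$ by completing one further square via the shift $\Lambda\mapsto\Lambda-(r\cdot\nabla)A_v$ to the translation-Goldstone direction, using the soliton identity $(\Delta-(v\cdot\nabla)^2)A_v=\w J\varrho-\mathcal{P}[v\rho]$. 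The main obstacle lies precisely in this last absorption: verifying quantitatively that the negative $r$-quadratic correction picked up by completing the field square is dominated by $\tfrac{1}{2}r\cdot Qr+\tfrac{\w^2}{2}r\cdot Fr$, which requires combining the explicit Fourier formulas for $P,Q,F$ in \eqref{c-ro}, \eqref{h-ro}, \eqref{f-ro} and again uses $|v|<1$. A conceptually cleaner equivalent route, useful as a sanity check, is to identify $\mathcal{H}_v$ with the Hessian at $S(b,v)$ of the conserved functional $H(Y)-v\cdot\mathbf{P}(Y)$, where $\mathbf{P}=\int\Pi\cdot\nabla A\,dx+p$ is the total field--particle momentum: since the soliton is a critical point and the functional is manifestly non-negative when rewritten in physical $(E,B,\dot q,\w)$ variables in the co-moving frame, $\mathcal{H}_v\ge 0$ follows at once.
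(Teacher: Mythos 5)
Your reduction is the same as the paper's: completing the three squares $\frac12\Vert\Psi+(v\cdot\na)\Lam\Vert^2$, $\frac1{2m}(\pi-\langle\Lam,\rho\rangle-\w JPr)^2$, $\frac1{2I}(r\cdot PJv-\langle\Lam,J\varrho\rangle)^2$, with the $\frac{\w^2}{m}P^2$ and $\frac1I(\cdot\,PJv)^2$ summands of $S$ supplying exactly the quadratic-in-$r$ pieces, is precisely the splitting \eqref{H++}. But the whole content of the lemma is the remaining step, the non-negativity of
$h_{v}(X)=\frac 12\langle(-\Delta+(v\cdot\na)^2)\Lam,\Lam\rangle+\int\big((\na\rho\cdot r)v\cdot\Lam-\w(r\cdot\na)\Lam\cdot J\varrho\big)dy+\sum_j\frac{v^2q_j+\w^2f_j}{2}r_j^2$,
and this you do not prove: you explicitly label it ``the main obstacle'' and only sketch two routes, neither of which closes the gap. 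The first route, shifting $\Lam\mapsto\Lam-(r\cdot\na)A_v$, fails as stated because of a sign mismatch: integrating by parts, the cross term pairs $(r\cdot\na)\Lam$ with $-\big({\cal P}[v\rho]+\w J\varrho\big)$, whereas the soliton identity gives $(-\Delta+(v\cdot\na)^2)A_v={\cal P}[v\rho]-\w J\varrho$, so a single shift along $(r\cdot\na)A_v$ can absorb the $v\rho$-part only while doubling the $\w J\varrho$-part. One must treat the two parts with opposite relative signs, which is exactly what the paper does by splitting $\hat\Lam$ into its even real and odd imaginary parts and completing two separate Fourier squares, e.g. $\int\big(\hat D_0^{1/2}\rIm\hat\Lam+\frac{(k\cdot r)|v|\hat\rho}{\hat D_0^{1/2}}(e_1-\frac{k_1k}{k^2})\big)^2dk$ and its analogue with $\rRe\hat\Lam$ and $\na\hat\rho$.

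Moreover, your framing of the last step as a ``domination'' of a negative correction by $\frac12 r\cdot Qr+\frac{\w^2}{2}r\cdot Fr$ misses that no slack is available: the relevant identity is $\Vert L_0^{-1/2}Gr\Vert^2=r\cdot(Q+\w^2F)r$ (with $L_0=-\Delta+(v\cdot\na)^2$ on solenoidal fields and $Gr={\cal P}[v(r\cdot\na\rho)]+\w(r\cdot\na)J\varrho$), i.e. equality, which must be verified by the explicit Fourier computation using solenoidality of $\Lam$ and the parity cancellation of the $r_1r_2$ cross terms for $v=(|v|,0)$ --- precisely the computation in the paper's proof that you defer. Your ``cleaner'' alternative, identifying ${\cal H}_v$ with the Hessian of $H-v\cdot{\bf P}$, is also unsubstantiated: non-negativity of that functional near the soliton is not ``manifest'' (the rotational degree of freedom and the conserved $M$ enter the correct Lyapunov functional, and the Hessian identification itself is unproved). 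So the proposal reproduces the easy half of the paper's argument and leaves the essential half as an acknowledged gap, with the one concrete suggestion for closing it being incorrect in sign.
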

\begin{proof}
We can assume that $v = (|v|, 0)$. We split  ${\cal H}_{v}(X)$ as
\be\la{H++}
{\cal H}_{v}(X)=\frac12 \Vert\Psi+(v\cdot\na)\Lam\Vert_{L^2(\R^2)}^2+\frac {1}{2I}(r\cdot PJv-\langle \Lambda,J\varrho\rangle)^2
+\frac{1}{2m}(\pi-\langle\Lam,\rho\rangle-{\w} JP r)^2+h_{v}(X)
\ee
It remains to prove that
\be\la{H4}
h_{v}(X)=\frac 12\langle(-\Delta+(v\cdot\na)^2\Lam,\Lam\rangle
+\!\int\!\big((\na\rho\cdot r)v\cdot \Lambda-{\w}(r\cdot\na)\Lam\cdot J\varrho\big) dy
+\sum\limits_{j}\frac {v^2q_j+{\w}^2 f_j}{2}r_j^2\ge 0.
\ee
 In Fourier space, $h_{v}(X)$ reads
\beqn\nonumber
h_{v}(X)&=&\frac 12\int\Big(\hat D_0|\hat\Lam|^2-2i(k\cdot r)\hat\rho |v|\hat\Lam_1-2{\w}(r\cdot k)\hat\Lam\cdot J\na\hat\rho
 \Big)dk
+\frac 12\sum\limits_{j} (v^2q_j+{\w}^2 f_j)r_j^2\\
\nonumber
&=&\frac 12\int\big(\hat D_0|\rIm\hat\Lam|^2+2(k\cdot r)\hat\rho |v|\rIm\hat\Lam_1\big)dk+\frac{v^2}2\sum\limits_{j}q_jr_j^2\\
\nonumber
&+&\frac 12\int\big(\hat D_0|\rRe\hat\Lam|^2-2{\w}(k\cdot r)\rRe\hat\Lam\cdot J\na\rho \Big)dk
+\frac{{\w}^2}2 \sum\limits_{j}f_jr_j^2.
\eeqn
We used here the fact that   $\rRe\hat\Lam_j$ is even and $\rIm\hat\Lam_j$ is odd because $\Lam_j\in \R$.
By \eqref{h-ro}, 
\beqn\nonumber
&&\int\big(\hat D_0|\rIm\hat\Lam|^2+2(k\cdot r)\hat\rho |v|\rIm\hat\Lam_1\big)dk+v^2\sum\limits_{j}q_jr_j^2\\
\nonumber
&&=\int\Big(\hat D_0|\rIm\hat\Lam|^2+2(k\cdot r)\hat\rho |v|(e_1-\frac{k_1k}{k^2})\cdot\rIm\hat\Lam
+v^2\frac{(k\cdot r)^2|\hat\rho|^2}{\hat D_0}(e_1-\frac{k_1k}{k^2})^2\Big)dk\\
\nonumber
&&=\int \Big(\hat D_0^{1/2}\rIm\hat\Lam+\frac{(k\cdot r)|v|\hat\rho}{\hat D_0^{1/2}}(e_1-\frac{k_1k}{k^2})\Big)^2dk\ge 0.
\eeqn
since $\nabla\cdot\Lam=0$. 
Finally, \eqref{f-ro} implies
\beqn\nonumber
&&\int\big(\hat D_0|\rRe\hat\Lam|^2-2{\w}(k\cdot r)\rRe\hat\Lam\cdot J\na\rho\big)dk +{\w}^2 \sum\limits_{j}f_jr_j^2\\
\nonumber
&&=\int\Big(\hat D_0|\rRe\hat\Lam|^2+{2\w} (k\cdot r)(\rRe\hat\Lam_1\na_2\hat\rho-\rRe\hat\Lam_2\na_1\hat\rho)
+{\w}^2\frac{(k\cdot r)^2}{\hat D_0}\big((\na_1\hat\rho)^2+(\na_2\hat\rho)^2\big)\Big)dk\\
\nonumber
&&=\int\Big(\hat D_0^{1/2}\rRe\hat\Lam_1-\frac{\om(k\cdot r)\na_2\hat\rho}{\hat D_0^{1/2}}\Big)^2dk+
\int\Big(\hat D_0^{1/2}\rRe\hat\Lam_2+\frac{{\w}(k\cdot r)\na_1\hat\rho}{\hat D_0^{1/2}}\Big)^2dk\ge 0.
\eeqn
\end{proof}
\setcounter{equation}{0}
\section{Symplectic Decomposition of the Dynamics}
Now we are going to choose $S(\si(t)):={\bf \Pi} Y(t)$ in (\ref{dec}). This is possible for $t=0$ by (\ref{close}),  
so $S(\si(0))={\bf\Pi} Y(0)$ and  $Z(0)=Y(0)-S(\si(0))$ are well defined.
Moreover, a priori estimate (\ref{apr}) and Lemma \ref{skewpro} 
imply  that $S(\si(t))={\bf\Pi} Y(t)$ and  $Z(t)=Y(t)-S(\si(t))$ are well defined for $t\ge 0$ until
 $\Vert Z(t)\Vert_{-\beta} < z_{-\beta}(\ti v)$. This is formalized by the following  definition.
\begin{defin}
$t_*$ is the ``exit time'',
\be\la{t*}
t_*=\sup \{t>0: \Vert Z(s)\Vert_{-\beta} < z_{-\beta}(\ti v),~~0\le s\le t\}.
\ee
\end{defin}
For $0<t<t_*$, we  set $S(\si(t))={\bf \Pi} Y(t)$ which is equivalent to the symplectic orthogonality condition  (\ref{proj}):
\be\la{orth}
\Om(Z(t),\tau_j(t))=0,\quad \tau_j(t)=\tau_j(\si(t)),\quad j=1,\dots,4, ~~~~~~~0\le t<t_*. 
\ee
Denote 
\be\la{vw}
c(t):=b(t)-\ds\int^t_0 v(\tau)d\tau,\quad \dot c(t)=\dot b(t)-v(t)=u(t)-v(t), \quad 0\le t<t_*.
\ee
\begin{lemma}\la{mod}
Let $Y(t)$ be a solution to the Cauchy problem for \eqref{mls3}, and (\ref{dec}), (\ref{orth}) hold. 
Then 
\be\la{parameq}
\dot c(t), \,\dot v(t) ={\cal O}(\Vert Z\Vert_{-\beta}^2), ~~~~~~~0\le t<t_*,
\ee
uniformly in $v$ and $Z$ with $\Vert Z\Vert_{-\beta}\le z_{-\beta}(\ti v)$ and $|v|<\ti v<1$.
\end{lemma}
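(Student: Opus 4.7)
The plan is to differentiate the symplectic orthogonality (\ref{orth}) in $t$, substitute the evolution equation (\ref{lin}) for $\dot Z$, and extract $(\dot c,\dot v)$ as the solution of a perturbed $4\times 4$ linear system governed by the Gram matrix $\mathbf{\Om}(v)$ of Lemma \ref{Ome}. A crucial preliminary observation is that $\tau_j(v)$, as written in the moving frame $y=x-b(t)$ where (\ref{lin}) was derived, depends on $\si(t)$ only through $v(t)$; consequently $\dot\tau_j=\dot v\cdot\pa_v\tau_j$, with no $\dot b$-contribution. Differentiating (\ref{orth}) and using the skew-symmetry of Lemma \ref{haml}(iii) then yields, for each $j=1,\ldots,4$,
\be\la{diff}
\Om(T,\tau_j)=\Om(Z,\mathbf{A}_{v,u}\tau_j)-\dot v\cdot\Om(Z,\pa_v\tau_j)-\Om(N,\tau_j).
\ee

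Next, I would expand the source $T$ in the tangent basis: a componentwise comparison of (\ref{inb}) with (\ref{T}) gives
$$
T(t)=-\sum_{i=1}^{4}\alpha_i\tau_i(v),\qquad \alpha:=(\dot c_1,\dot c_2,\dot v_1,\dot v_2),
$$
so that $\Om(T,\tau_j)=-\sum_i\alpha_i\,\mathbf{\Om}_{ij}(v)$. By Lemma \ref{ljf} one has $\mathbf{A}_{v,u}\tau_j=\dot c\cdot\na\tau_j$ for $j=1,2$ and $\mathbf{A}_{v,u}\tau_{j+2}=\dot c\cdot\na\tau_{j+2}+\tau_j$; the extra $\tau_j$ in the second identity pairs to zero against $Z$ by (\ref{orth}). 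Hence $|\Om(Z,\mathbf{A}_{v,u}\tau_j)|\le C(\ti v)|\dot c|\,\Vert Z\Vert_{-\beta}$ and $|\Om(Z,\pa_v\tau_j)|\le C(\ti v)\Vert Z\Vert_{-\beta}$ for all $j$, while (\ref{N-est}) together with the decay estimate (\ref{Pivom}) for the tangent vectors gives $|\Om(N,\tau_j)|\le C(\ti v)\Vert Z\Vert_{-\beta}^{2}$. Substituting into (\ref{diff}) produces
$$
\mathbf{\Om}(v)^{T}\alpha={\cal O}(\Vert Z\Vert_{-\beta})\,\alpha+{\cal O}(\Vert Z\Vert_{-\beta}^{2}),
$$
uniformly in $|v|\le\ti v$. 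Shrinking $z_{-\beta}(\ti v)$ so that the operator norm of the $\alpha$-perturbation is $\le\tfrac{1}{2}\Vert\mathbf{\Om}(v)^{-1}\Vert^{-1}$, the perturbed matrix is invertible with bounded inverse, and solving gives (\ref{parameq}).

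The step I expect to be most delicate is verifying that the RHS of (\ref{diff}) really is quadratic in $\Vert Z\Vert_{-\beta}$ up to the absorbable $\alpha$-factor. A careless expansion produces $\Om(Z,v\cdot\na\tau_j)$-type terms of size only $\mathcal{O}(\Vert Z\Vert_{-\beta})$, which would degrade the final bound by one power; these terms cancel exactly between the $\dot b$-part of $\dot\tau_j$ and the $u\cdot\na$ drift inside $\mathbf{A}_{v,u}$ only when one is consistently in the moving frame, where $\tau_j$ carries no explicit $b$-dependence. The vanishing of $\Om(Z,\tau_j)$ is equally indispensable for absorbing the $+\tau_j$ in the second line of Lemma \ref{ljf}. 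Both simplifications must be invoked in order to recover the squared power of $\Vert Z\Vert_{-\beta}$ claimed in (\ref{parameq}).
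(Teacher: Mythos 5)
Your proposal is correct and coincides with the argument the paper itself invokes (it only refers to Lemma 10.1 of \cite{K2025}): differentiating the orthogonality (\ref{orth}) in the moving frame, using the skew-symmetry (\ref{com}), the identities (\ref{Atan}), the exact expansion $T=-\sum_i\alpha_i\tau_i(v)$ with $\alpha=(\dot c,\dot v)$, the bound (\ref{N-est}), and then inverting the non-degenerate matrix ${\bf\Om}(v)$ of Lemma \ref{Ome} after absorbing the ${\cal O}(\Vert Z\Vert_{-\beta})$ perturbation is precisely the standard modulation-equation proof used there. No gaps to report.
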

The proof is similar to that of  Lemma 10.1 in \cite{K2025}. 
Now we rewrite (\ref{lin}) as
\be\la{reduced}
\dot Z(t)={\bf A}_{v(t),u(t)}Z(t)+\ti N(t), ~~~~~~~~~0\le t<t_*,
\ee
where $\ti N(t):=T(t)+N(t)$.
Formula (\ref{T}) and Lemma \ref{mod}  imply that
$\Vert T(t)\Vert_{\beta}\le C(\ti v)\Vert Z\Vert_{-\beta}^2$ for $0\le t<t_*$,
Hence,  
\be\la{redN}
\Vert\ti  N(t)\Vert_{\beta}\le C(\ti v)\Vert Z\Vert_{-\beta}^2,\qquad 0\le t<t_*.
\ee
\setcounter{equation}{0}
\section{Frozen Transversal Dynamics}
Now  we fix  arbitrary $t_1\in [0,t_*)$, and rewrite the equation (\ref{reduced}) in a ``frozen form''
\be\la{froz}
\dot Z(t)={\bf A}_1Z(t)+({\bf A}_{v(t),u(t)}-{\bf A}_1)Z(t)+\ti N(t),\,\,\,~~~~0\le t<t_*,
\ee
where ${\bf A}_1={\bf A}_{v(t_1),v(t_1)}$. The next trick allows us to kill the ``bad terms"  $[u(t)-v(t_1)] \cdot\na$ in the upper left corner of the
matrix $A_{v(t),u(t)} -A_1$. We denote
\be\la{dd1}
\ell_1(t):=\int_{t_1}^t(u(s)-v(t_1))ds, ~~~~0\le t\le t_1.
\ee
 and  change the  variables $(y,t)\mapsto (y_1,t)=(y+d_1(t),t)$.
Next we define 
\beqn\la{Z1}
\Lam^+(y_1,t):=\Lam(y,t)=\Lam(y_1-\ell_1(t),t),\quad \Psi^+(y_1,t):=\Psi(y,t)=\Psi(y_1-\ell_1(t),t)). 
\eeqn
Then we obtain the final form of the ``frozen equation'' for the transversal dynamics
\be\la{redy1}
\dot Z^+(t)={\bf A}_1Z_1(t)+{\bf B}(t)Z^+(t)+\ti N^+(t),\quad Z^+(t)=(\Lam^+(t),\Psi^+(t),r(t),\pi(t)),\quad0\le t\le t_1,
\ee
where $\ti N_1(t)$ expressed in terms of $y=y_1-d_1(t)$,  
and nonzero entries of the matrix  ${\bf B}(t)$  acting on $Z^+(t)$ as follows
\beqn\nonumber
B_{1}(t)\Lam\!\!&\!\!\!\!=\!\!\!\!&\!\!({\w}(t)\!-\!{\w}_1)\Big(\frac{1}{m}PJ\langle\Lambda,\rho\rangle
+\langle\na\cdot J\Lambda,\varrho\rangle\Big)+ \langle (v_1\!-\!v(t))\cdot\!\Lam,\na\rho\rangle
+\frac{PJ(v(t)\!-\!v_1))}{I}\langle\Lam,J\varrho\rangle,\\
\nonumber
B_{2}(t)r\!\!&\!\!\!\!=\!\!\!\!&\!\! {\cal P}[\frac{{\w}_1\!-\!{\w}(t)}{m}\rho JPr+(v_1\!-\!v(t))r\cdot\na\rho)]+({\w}(t)\!-\!{\w}_1)(r\cdot\na)J\varrho
+\frac{1}{I}(r\cdot PJ(v(t)-v_1))J\varrho,\\
\nonumber 
B_{3}(t)r\!\!&\!\!\!\!=\!\!\!\!&\!\!\Big(Q_1-Q(t)+{\w}_1F_1-{\w}(t)F(t)+\frac{1}{m}\big({\w}_1^2P^2_1-{\w}^2(t)P^2(t)\big)\Big)r\\
\nonumber
&+&\frac{1}{I}\Big((r\cdot PJv_1)PJv_1-(r\cdot PJv(t))PJv(t)\Big),\\
\nonumber
B_{4}(t)\pi\!\!&\!\!\!\!=\!\!\!\!&\!\!\frac{1}{m}\big({\w}(t)-{\w}_1)PJ\pi.
\eeqn
Denote $\ov \ell_1(s):=\sup_{0\le t\le s} |\ell_1(t)| $, $0\le s\le t_1$ and reduce the exit time:
\be\la{t*'}
t_*'=\sup \{t\in[0,t_*):
\ov \ell_1(s)\le 1,~~0\le s\le t\}.
\ee
Using \eqref{c-ro}, \eqref{h-ro}, \eqref{f-ro} and \eqref{parameq}, we obtain for $t_1 < t_*'$ :
\be\la{B1Z1est}
\Vert{\bf B}(t)Z^+(t)\Vert_{\beta}\le  C\Vert Z(t)\Vert_{-\beta}|v(t)-v_1|
\le  C(\ti v)\Vert Z(t)\Vert_{-\beta}\int_{t_1}^{t}  \Vert Z(s)\Vert_{-\beta}^2 ds, \quad  0\le t\le t_1.
\ee
Similarly,  
\be\la{N1est}
\Vert\ti N^+(t)\Vert_{\beta}\le C(\ti v) \Vert Z(t)\Vert_{-\beta}^2, \quad  0\le t\le t_1<t_*'.
\ee
\setcounter{equation}{0}
\section{Decay of linearized dynamics}
Here we will analyze  the  linear equation,
\be\la{Avv}
\dot X(t)={\bf A}_1X(t), ~~t\in\R,
\ee
where ${\bf A}_1={\bf A}_{v_1,v_1}$ with $v_1=v(t_1)$,  and a fixed $t_1\in[0,t_*')$.
The solutions to this equation does not decay
without  the orthogonality condition  of type (\ref{orth}) since the equation  admits the {\it secular solutions}
$$
X(t)=\sum\limits_{j=1,2}C_{j}\tau_j(v)+\sum\limits_{j=1,2}D_j[\tau_1(v)t+\tau_{j+2}(v,\om)]. 
$$
We will apply the  symplectic orthogonal projection which kills such solutions.
\begin{defin}
i)  For $v\in V$, denote by ${\bf\Pi}_{v}$ the symplectic orthogonal projection
of ${\cal E}$ onto the tangent space ${\cal T}_{S(\si)}{\cal S}$, and  ${\bf P}_{v}:={\bf I}-{\bf\Pi}_{v}$.
\\
ii) Denote by ${\cal Z}_{v}={\bf P}_{v}{\cal E}$ the space symplectic orthogonal to ${\cal T}_{S(\si)}{\cal S}$.
\end{defin}
\begin{remark}
Note that, by the linearity, 
\be\la{Piv}
{\bf\Pi}_vZ=\sum{\bf\Pi}_{jl}(v)\tau_j(v)\Om(\tau_l(v),Z),\quad Z\in{\cal E},
\ee
with some smooth coefficients ${\bf\Pi}_{jl}(v)$. Hence, the projector ${\bf\Pi}_v$, in the variable $y=x-b$,
does not depend on $b$.
\end{remark}
Now we have the symplectic orthogonal decomposition
\be\la{sod}
{\cal E}={\cal T}_{S(\si)}{\cal S}+{\cal Z}_{v},~~~~~~~\si=(b,v), 
\ee
and the symplectic orthogonality  (\ref{orth})
can be written in the following equivalent forms,
\be\la{PZ}
{\bf\Pi}_{v(t)} Z(t)=0,~~~~{\bf P}_{v(t)}Z(t)= Z(t),~~~~~~~~~0\le t<t_*'.
\ee
\begin{remark}r\la{rZ}
{\rm
The tangent space ${\cal T}_{S(\si)}{\cal S}$ is invariant under the operator ${\bf A}_{v}$ by Corollary \ref{ceig1}, hence
the space  ${\cal Z}_{v}$ is also invariant by (\ref{com}): ${\bf A}_{v}Z\in {\cal Z}_v$
for {\it sufficiently smooth}  $Z\in {\cal Z}_{v}$.
}
\end{remark}
In Section \ref{lin-dyn} below we  prove the following proposition.
\begin{pro}\la{lindecay}
 Let all conditions of Theorem \ref{main} hold, $|v_1|\le\ti v<1$, and $X_0\in{\cal Z}_{v_1}\cap\cE_\beta$ with $\beta>4$.
Then 
 $X(t)=e^{{\bf A}_1t}X_0\in C(\R, {\cal Z}_{v_1}\cap {\cal E}_{-\beta})$, and the following decay holds
\be\la{frozenest}
\Vert e^{{\bf A}_1t}X_0\Vert_{-\beta}\le
\frac{C_{\beta}(\ti v)}{(1+|t|)^2}\Vert X_0\Vert_{\beta},~~~~~~~~
\,\,\,t\in\R.
\ee
\end{pro}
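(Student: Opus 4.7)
The strategy is the Fourier--Laplace transform method standard for such linearized soliton dynamics. I pass to the moving frame attached to $v_1$ and decompose the generator as $\mathbf{A}_1 = \mathbf{A}_1^{(0)} + \mathbf{V}$, where $\mathbf{A}_1^{(0)}$ is the free wave--plus--free--particle part (the top--left $2\times 2$ block with $\Delta$ and $(v_1\cdot\nabla)$, plus the diagonal particle block), and $\mathbf{V}$ collects all finite-rank coupling terms built from $\rho$, $\varrho$, $P$, $Q$, $F$. The free resolvent $R_0(\lambda)=(\mathbf{A}_1^{(0)}-\lambda)^{-1}$ admits a limiting absorption principle from the right half-plane to the imaginary axis as an operator $\cE_\beta\to\cE_{-\beta}$ for $\beta>1/2$, inherited from the standard 2D wave equation LAP in comoving coordinates.

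Next I write the full resolvent via the Fredholm identity $R(\lambda)=(I-R_0(\lambda)\mathbf{V})^{-1}R_0(\lambda)$. The hypothesis \eqref{M-condition} is precisely the Wiener-type condition guaranteeing that $I-R_0(i\omega)\mathbf{V}$ is boundedly invertible for every $\omega\in\R\setminus\{0\}$; equivalently, $\mathbf{A}_1$ has no embedded eigenvalues or resonances on the continuous spectrum. Near the threshold $\omega=0$, the resolvent is singular only on the four-dimensional generalized null space ${\cal T}_{S(\sigma)}{\cal S}=\mathrm{span}\{\tau_1,\dots,\tau_4\}$ produced by Corollary~\ref{ceig1}. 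Since $X_0\in{\cal Z}_{v_1}$ and ${\cal Z}_{v_1}$ is $\mathbf{A}_1$-invariant by Remark~\ref{rZ} together with the skew-symmetry~\eqref{com}, these singular contributions drop out of $R(\lambda)X_0$.

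With the generalized null space removed, I use the Cauchy representation
\[
e^{\mathbf{A}_1 t}X_0 \;=\; \frac{1}{2\pi i}\int_{\R} e^{i\omega t}\,R(i\omega+0)X_0\,d\omega,\qquad X_0\in{\cal Z}_{v_1},
\]
and integrate by parts twice in $\omega$. The decay rate $(1+|t|)^{-2}$ follows provided $\partial_\omega^2 R(i\omega+0)X_0$ is integrable in $\cE_{-\beta}$. Two ingredients are needed for this: smoothness of $\omega\mapsto R(i\omega+0)X_0$ away from $0$ (which is uniform thanks to the Wiener condition and compactness of $\mathbf{V}$ in the weighted spaces), and sufficient regularity at $\omega=0$. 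The latter is supplied by the neutrality moments \eqref{zero2}: the factors $\hat\rho$ and $\na\hat\rho$ appearing inside $\mathbf{V}$ vanish to order $5$ at $k=0$, which is exactly what is required to absorb the 2D threshold singularity of $R_0$ and keep two derivatives of the resolvent bounded in $\cE_\beta\to\cE_{-\beta}$ with $\beta=4+\delta>4$.

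The main obstacle is the threshold analysis at $\omega=0$: in two dimensions the free wave resolvent carries a logarithmic singularity at the threshold, so naive weighted $L^2$ bounds do not suffice. One must simultaneously (i) use the symplectic orthogonality $X_0\in{\cal Z}_{v_1}$ to remove the finite-dimensional Jordan block contribution, and (ii) exploit the moment conditions \eqref{zero1} on $\rho$ in Fourier space to turn the logarithm into a regular factor, yielding two norm-continuous derivatives of $R(i\omega+0)X_0$ and hence the sharp $t^{-2}$ rate. The verification that the projection $\mathbf{P}_{v_1}$ acts compatibly with the Fredholm inverse near $\omega=0$, and that the expansion coefficients involve precisely the moments controlled by \eqref{zero2}, is the technical heart of the argument.
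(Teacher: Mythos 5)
Your outline follows the same overall strategy as the paper (Fourier--Laplace transform of the linearized flow, the spectral condition \eqref{M-condition} to control the imaginary axis away from zero, symplectic orthogonality plus the moment conditions \eqref{zero1} at the threshold, and two integrations by parts to get the $t^{-2}$ rate), but the two steps on which the result actually hinges are asserted rather than proved, and as stated they would not go through. First, the threshold. You claim the singular contributions ``drop out'' of $R(\lambda)X_0$ because $X_0\in{\cal Z}_{v_1}$ and ${\cal Z}_{v_1}$ is ${\bf A}_1$-invariant. The invariance (Remark \ref{rZ}) is only formal, valid for sufficiently smooth vectors, and by itself gives no quantitative information about $R(i\mu+0)X_0$ near $\mu=0$. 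What the paper actually does is reduce the resolvent equation to the explicit $2\times2$ system \eqref{M-def}, whose inverse matrix has a genuine $\lambda^{-2}$ pole at the threshold (see \eqref{M-as}), and then prove the two-order cancellation $G_0^+(0)=(G_0^+)'(0)=0$ in \eqref{F00}, derived from the orthogonality conditions $\Omega(X_0,\tau_j)=0$ by the explicit computation of Appendix C. Nothing in your Fredholm setup $R=(I-R_0{\bf V})^{-1}R_0$ shows that the data side vanishes to second order at $\lambda=0$; naming this as the ``technical heart'' does not supply the cancellation, and without it the double integration by parts produces no decay at all.

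Second, the high-frequency regime and the field components. You propose to obtain \eqref{frozenest} for the whole state by integrating the Cauchy representation of $e^{{\bf A}_1t}X_0$ by parts twice in $\mu$, which requires $\partial_\mu^2 R(i\mu+0)X_0$ to be integrable in ${\cal E}_{-\beta}$ including the field block. For the 2D wave block the resolvent derivatives do not gain decay in $|\mu|$ (cf. \eqref{g-as}: $\Vert {\cal R}^{(k)}(\lam)\Vert_{H^s_\si\to H^{s+l}_{-\si}}={\cal O}(|\lam|^{-1+l})$ uniformly in $k$), so this integrability fails and your argument breaks precisely for $(\Lambda,\Psi)$. The paper avoids this: the Fourier--Laplace/two-integrations-by-parts device is applied only to the finite-dimensional quantities $r(t)$, $\phi(t)$, $\nu(t)$ (see \eqref{F-int}, \eqref{Phi-decay}, \eqref{nu-decay}), and the field decay is then deduced from the Duhamel representation \eqref{Duhamel} driven by these quantities, using the dispersive decay \eqref{dede} of the modified wave group $W_v(t)$ imported from \cite{K2010,KK2023}. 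You need either this two-step structure or a genuine high-energy estimate for the field block of the full resolvent; as written, the uniform bound \eqref{frozenest} for the full state does not follow from your proposal.
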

\setcounter{equation}{0}
\section{Decay of transversal componentI}
\begin{pro}\la{pdec}
 Let all conditions of Theorem \ref{main} hold. Then $t_*=\infty$, and
\be\la{Zdec}
\Vert Z(t)\Vert_{-\beta}\le\frac {C(\rho,\ov v,d_\beta)}{(1+|t|)^2},\qquad t\ge0.
\ee
\end{pro}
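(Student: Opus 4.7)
The plan is a Duhamel-type bootstrap on the frozen equation \eqref{redy1}. For each fixed $t_1\in[0,t_*')$ I would start from
\be
Z^+(t_1)=e^{{\bf A}_1 t_1}Z^+(0)+\int_0^{t_1}e^{{\bf A}_1(t_1-s)}\bigl[{\bf B}(s)Z^+(s)+\ti N^+(s)\bigr]\,ds,
\ee
and route it through Proposition~\ref{lindecay}. Since \eqref{frozenest} only applies to data in ${\cal Z}_{v_1}\cap\cE_\beta$, the key observation will be that at $t=t_1$ one has $v(t_1)=v_1$, so the transversality \eqref{PZ} reads ${\bf\Pi}_{v_1}Z^+(t_1)=0$. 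Because ${\cal Z}_{v_1}$ is ${\bf A}_1$-invariant (Remark~\ref{rZ}), the projector ${\bf P}_{v_1}$ commutes with $e^{{\bf A}_1 t}$, and applying ${\bf P}_{v_1}$ to the identity above gives
\be
Z(t_1)=e^{{\bf A}_1 t_1}{\bf P}_{v_1}Z^+(0)+\int_0^{t_1}e^{{\bf A}_1(t_1-s)}{\bf P}_{v_1}\bigl[{\bf B}(s)Z^+(s)+\ti N^+(s)\bigr]\,ds,
\ee
with every argument of $e^{{\bf A}_1 t}$ now in ${\cal Z}_{v_1}\cap\cE_\beta$, so Proposition~\ref{lindecay} supplies the decay factor $(1+t_1-s)^{-2}$.

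Next I would introduce the majorant
\be
m(T):=\sup_{0\le t\le T}(1+t)^2\Vert Z(t)\Vert_{-\beta},\quad 0\le T<t_*',
\ee
and bound the integrand in $\cE_\beta$ via \eqref{B1Z1est}, \eqref{N1est} and the boundedness of ${\bf P}_{v_1}$ on $\cE_\beta$ by $C(\ti v)\Vert Z(s)\Vert_{-\beta}^2$. Substituting $\Vert Z(s)\Vert_{-\beta}\le m(T)(1+s)^{-2}$ and using the standard convolution estimate $\int_0^{t_1}(1+t_1-s)^{-2}(1+s)^{-4}\,ds\le C(1+t_1)^{-2}$, together with $\Vert Z^+(0)\Vert_\beta=d_\beta$, should yield
\be
(1+t_1)^2\Vert Z(t_1)\Vert_{-\beta}\le C_1 d_\beta+C_2\,m(T)^2,\quad 0\le t_1\le T,
\ee
and hence, after taking the sup in $t_1$, the quadratic inequality $m(T)\le C_1 d_\beta+C_2\, m(T)^2$ on $[0,t_*')$.

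A continuity argument will then close the loop: $m(0)\le C d_\beta$ and $m$ is continuous, so for $d_\beta$ sufficiently small the quadratic inequality traps $m$ below its smaller root, giving $m(T)\le 2C_1 d_\beta$ throughout $[0,t_*')$. This yields $\Vert Z(t)\Vert_{-\beta}\le 2C_1 d_\beta(1+t)^{-2}\ll z_{-\beta}(\ti v)$, so the exit condition defining $t_*$ is never triggered; by Lemma~\ref{mod} the drift $\dot c$ lies in $L^1$, keeping $\ov\ell_1$ uniformly small and preserving the condition defining $t_*'$. Hence $t_*=t_*'=\infty$ and \eqref{Zdec} follows. The main obstacle I expect is the projection step: rigorously justifying that ${\bf\Pi}_{v_1}Z^+(t_1)=0$ after the $y_1=y+\ell_1(t)$ shift of \eqref{Z1} (which affects how $Z^+(t_1)$ and the tangent vectors $\tau_j(v_1)$ pair under $\Om$), and that ${\bf P}_{v_1}$ commutes with $e^{{\bf A}_1 t}$ with constants uniform in $|v_1|\le\ti v<1$. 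A secondary delicate point is confirming that ${\bf B}(s)Z^+(s)$ is truly quadratic in $\Vert Z\Vert_{-\beta}$, as \eqref{B1Z1est} asserts via $v(t)-v_1=\int_{t_1}^t\dot v\,ds$ and Lemma~\ref{mod}: a linear dependence would prevent the convolution from producing the $(1+t_1)^{-2}$ decay and the bootstrap would not close.
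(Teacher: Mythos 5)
Your proposal is correct and follows the paper's strategy in all essentials: the frozen Duhamel representation for \eqref{redy1}, commutation of ${\bf P}_1={\bf P}_{v_1}$ with $e^{{\bf A}_1t}$ via the invariance of ${\cal Z}_{v_1}$ (Remark \ref{rZ}), the decay \eqref{frozenest} of Proposition \ref{lindecay}, a majorant bootstrap, and the exit-time bookkeeping $t_*''\Rightarrow t_*'\Rightarrow t_*=\infty$ using Lemma \ref{mod} and the bound $\ov\ell_1\le 1$. The one genuine deviation is that you evaluate the Duhamel identity only at the endpoint $t=t_1$, where $\ell_1(t_1)=0$ and $v(t_1)=v_1$, so that ${\bf P}_1Z^+(t_1)=Z(t_1)$ holds exactly by \eqref{PZ} and your ``main obstacle'' about the shift disappears; since $t_1<t_*''$ is arbitrary this still feeds the majorant at every time. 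The paper instead estimates ${\bf P}_1Z^+(t)$ for all intermediate $t\in[0,t_1]$ and then needs the separate comparison Lemma \ref{Z1P1Z1} ($\Vert Z(t)\Vert_{-\beta}\le C\Vert{\bf P}_1Z^+(t)\Vert_{-\beta}$), which your endpoint trick renders unnecessary — a small but real economy. Two minor corrections: (a) by \eqref{B1Z1est} the term ${\bf B}(s)Z^+(s)$ is not pointwise quadratic in $\Vert Z(s)\Vert_{-\beta}$; it carries the factor $|v(s)-v_1|\le\int_s^{t_1}\Vert Z(\tau)\Vert_{-\beta}^2d\tau$, so the majorant inequality closes with a cubic term as well, $M\le C_1d_\beta+C_2(M^2+M^3)$, exactly as in the paper's \eqref{mest} — harmless for the trapping argument, but the inequality you wrote should include $M^3$; (b) since the weighted norms are not translation invariant, $\Vert Z^+(0)\Vert_\beta\le C\Vert Z(0)\Vert_\beta$ (and likewise the equivalence of $\Vert Z^+(s)\Vert_{-\beta}$ with $\Vert Z(s)\Vert_{-\beta}$) uses the constraint $\ov\ell_1\le1$ built into $t_*'$ in \eqref{t*'}; this should be said explicitly, and it is also where the uniformity of the constants in $t_1$ comes from.
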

\begin{proof}
{\it Step i)} Equation (\ref{redy1}) leads to
$$
{\bf P}_1Z^+(t)=e^{{\bf A}_1t}{\bf P}_1Z^+(0)+\int_0^te^{A_1(t-s)}{\bf P}_1\big({\bf B}(s)Z^+(s)+\ti N^+(s)\big)ds.
$$
We have used here that projection ${\bf P}_1:={\bf P}_{v(t_1)}$ commutes with the group $e^{{\bf A}_1t}$ since the space 
${\cal Z}_1:={\bf P}_1{\cal E}$ is invariant with respect to $e^{{\bf A}_1t}$ by Remark \ref{rZ}.
Applying (\ref{frozenest}), we obtain:
\be\la{bPZ}
\Vert {\bf P}_1Z^+(t)\Vert_{-\beta}
\le\frac{C\Vert {\bf P}_1Z^+(0)\Vert_{\beta}}{(1+|t|)^2}+C\int_0^t\frac{\Vert {\bf P}_1[{\bf B}(s)Z^+(s)+\ti N^+(s)]\Vert_{\beta}}{(1+|t-s|)^2}ds.
\ee
The operator ${\bf P}_1={\bf I}-{\bf\Pi}_1$ is continuous in ${\cal E}_\beta$ by (\ref{Piv}).
Hence, (\ref{B1Z1est}),  (\ref{N1est}), and (\ref{bPZ})  imply that 
\be\la{duhest}
\Vert {\bf P}_1Z^+(t)\Vert_{-\beta}
\le\frac{C\Vert Z(0)\Vert_{\beta}}{(1+|t|)^2}
+C\int_0^t\frac{\Vert Z(s)\Vert_{-\beta}}{(1+|t-s|)^2}\left[\int_s^{t_1}\Vert Z(\tau)\Vert_{-\beta}^2 d\tau+\Vert Z(s)\Vert_{-\beta}\right]ds
\ee
 for   $t_1<t_*'$  and $0\le t\le t_1$. Finally,  we are going to  replace ${\bf P}_1Z^+(t)$ by $Z(t)$ in the left hand side of (\ref{duhest}).
For  this we introduce the ``majorant'' 
\be\la{maj1}
M(t):=
\sup_{s\in[0,t]}(1+s)^2\Vert Z(s)\Vert_{-\beta}
\ee
and reduce further the exit time. Denote by $\ve$ a fixed positive number which we will specify below.
\begin{defin} $t_{*}''$ is the exit time
\be\la{t*''}
t_*''=\sup \{t\in[0,t_*'):\ M(s)\le \ve,~~0\le s\le t\}.
\ee
\end{defin}
\begin{lemma}\la{Z1P1Z1} 
For sufficiently small $\ve>0$ and  $t_1<t_*''$,
the following bounds hold:
\be\la{Z1P1est}
\Vert Z(t)\Vert_{-\beta}\le C\Vert {\bf P}_1Z^+(t)\Vert_{-\beta},\quad0\le t \le t_1,
\ee
where $C$ depends only on $\rho$ and $\ti v$.
\end{lemma}
The proof is based on the symplectic orthogonality \eqref{PZ} and is similar to that of Lemma 9.2 in \cite{IKV2012}.
Lemma \ref{Z1P1Z1} together with \eqref{duhest}  imply:
$$
\Vert Z(t)\Vert_{-\beta}
\le C\Big(\frac{\Vert Z(0)\Vert_{\beta}}{(1+t)^2}+\int\limits_0^t\frac{\Vert Z(s)\Vert_{-\beta}}{(1+t-\!s)^2}
\Big[\int\limits_s^{t_1}\Vert Z(\tau)\Vert_{-\beta}^2 d\tau+\Vert Z(s)\Vert_{-\beta}\Big]ds\Big),~~t_1<t_*''.
$$
Multiplying both sides  by $(1+t)^2$ and taking the supremum in $t\in[0,t_1]$, we get:
$$
M(t_1) \le C\Big(\Vert Z(0)\Vert_{\beta}+\!
\sup_{t\in[0,t_1]}\int\limits_0^t \frac{(1+t)^2}{(1+t-\!s)^2}\Big[\frac{M(s)}{(1+s)^{2}}
\int\limits_s^{t_1}\frac{M^2(\tau)d\tau}{(1+\tau)^{4}}+\frac{M^2(s)}{(1+s)^{4}}\Big]ds\Big),~~t_1<t_*''.
$$
Taking into account that $M(t)$ is a monotonically increasing function, we get:
\beqn\la{mest}
M(t_1)&\le& C\Vert Z(0)\Vert_{\beta}+C_1\big(M^3(t_1)+M^2(t_1)\big),~~~~t_1<t_*''.
\eeqn
This inequality implies that $M(t_1)$ is bounded for $t_1<t_*''$
and moreover
\be\la{m2est}
M(t_1)\le C_2\Vert Z(0)\Vert_{\beta},~~~~~~~~~t_1<t_*''\,,
\ee
since $M(0)$ is sufficiently small by (\ref{close}).
\\
{\it Step ii)} The constant $C_2$ in  (\ref{m2est}) does not depend on
$t_*$, $t_*'$, and $t_*''$ by Lemma \ref{Z1P1Z1}.
We choose $d_{\beta}$ in (\ref{close}) so small that
$\Vert Z(0)\Vert_{\beta}<\ve/(2C_2)$.
This is possible due to (\ref{closeZ}).
Then  (\ref{m2est}) implies that $t''_*=t'_*$ and therefore  (\ref{m2est}) holds for all $t_1<t'_*$.
By (\ref{vw}), 
$$
u(s)-v(t_{1})=u(s)-v(s)+v(s)-v(t_{1})=\dot c(s)+\ds\int_s^{t_1}\dot v(\tau)d\tau.
$$
Hence, Lemma \ref{mod}  and (\ref {dd1})   imply that
\beqn\nonumber
 |\ell_{1}(t)|&\le& \int_{t}^{t_{1}}\left( |\dot{c}(s)|+\int_{s}^{t_{1}}|\dot{v}(\tau )|d\tau\right)ds\\
\la{d1est} 
& \le& CM^2(t_1)\int_{t}^{t_{1}}\left(\frac{1}{(1+s)^4}+\int_{s}^{t_{1}}\frac{d\tau}{1+\tau)^4}\right)ds\le C_1M^2(t_1)\le C_2\ve^2,\quad t<t_*'.
\eeqn 
  We choose $\ve$ so
small that the right hand side in (\ref{d1est}) does not exceed one. Then $t'_*=t_*$.
Therefore, (\ref{m2est}) holds for all $t_1<t_*$.  Hence,
\be\la{Zt}
\Vert Z(t)\Vert_{-\beta}<z_{-\beta}(\ti v)/2,~~~~~0\le t < t_*.
\ee
if $\Vert Z(0)\Vert_{\beta}$ is sufficiently small.  Finally, this implies that $t_*=\infty$, hence also $t''_*=t'_*=\infty$ and
(\ref{m2est}) holds for all $t_1>0$ if $d_{\beta}$ is small enough. 
\end{proof}
\section{Soliton Asymptotics}\la{sol-as}
\setcounter{equation}{0}
Here we prove our main Theorem \ref{main}. 
First we will prove the asymptotics (\ref{qq-as}) for the vector components
and afterwards the asymptotics (\ref{AP-as}) for the fields.
\\
{\bf Asymptotics for the vector components}.
From (\ref{dot-q}),  (\ref{AA}), (\ref{reduced}), and  (\ref{redN})
it follows  that
\beqn\nonumber
\dot q(t)&=&\dot b(t)+\dot r=v(t)+\dot c(t)+\frac 1m(\pi(t)-\langle\Lam(t),\rho\rangle)-{\w}JPr+{\cal O}(\Vert Z(t)\Vert_{-\beta})\\
\la{dq}
&=&v(t)+\dot c(t)+{\cal O}(\Vert Z(t)\Vert_{-\beta}).
\eeqn
Further, (\ref{parameq}) and (\ref{Zdec}) imply that
\be\la{bv}
|\dot c(t)|+|\dot v(t)|\le \frac {C_1(\rho,\ti v,d_\beta)}{(1+t)^{4}},~~~~~~t\ge0.
\ee
Therefore,
\be\la{cv-as} 
c(t)=c_+ +{\cal O}(t^{-3}),\qquad v(t)=v_+ +{\cal O}(t^{-3}),\quad  t\to\infty.
\ee 
Hence, 
(\ref{Zdec}) and  (\ref{dq})--(\ref{bv}) imply that
\be\la{qbQ}
\dot q(t)=v_++{\cal O}(t^{-2}),\quad b(t)=c(t)+\int_0^tv(s)ds=v_+t+a_++{\cal O}(t^{-2})
\ee
Hence, asymptotics (\ref{qq-as}) for $q(t)$ follows by \eqref{add} and (\ref{Zdec}).  
Finally, \eqref{pM}, \eqref{add}, \eqref{Zdec} and \eqref{cv-as} imply that
\beqn\nonumber
\om(t)&\!\!\!=\!\!\!&\frac 1{I}\Big(M+\langle A(x,t),J\varrho(x\!-\!q(t))\rangle\Big)=\frac 1{I}\Big(M+\langle A_{v(t)}(y)+\Lam(y,t),J\varrho(y-r(t))\rangle\Big)\\
\la{ww}
&\!\!\!=\!\!\!&{\w}(t)+\frac 1I\Big(\langle A_{v(t)},J(\varrho(y\!-\!r(t))-\varrho(y))\rangle+\Lambda(y,t),J(\varrho(y\!-\!r(t))\rangle\Big)
={\w}(t)+ {\cal O}(t^{-2}),~~ t\to\infty. 
\eeqn
Here ${\w}(t)={\w}(v(t))$.
\\
{\bf Asymptotics for the fields}.
For the field part of the solution, $F(x,t)=(A(x, t), \Pi(x, t))$, we  define the accompanying soliton field as 
$F_{{\rm v}(t)}(t) = (A_{{\rm v}(t)}(x-q(t)), \Pi_{{\rm v}(t)}(t)(x-q(t)))$, where we now set 
${\rm v}(t) = \dot q (t)$, cf. \eqref{dq}. Then for the difference 
$\Phi(t)=(\Lam(t),\Psi(t)) = F(t) -F_{{\rm v}(t)}(t)$  the first two equations of  \eqref{mls3} imply 
the inhomogeneous  equation
\be\la{ZZ}
\dot \Phi(x,t)=\!\begin{pmatrix} 0& 1\\
\Delta&0
\end{pmatrix}\!\Phi(x,t)-\!R(t),\quad R(t)=\dot {\rm v}(t)\cdot\!\na_{{\rm v}}F_{{\rm v}(t)}(x\!-\!q(t))+(\w(t)\!-\!\om(t))\!\begin{pmatrix}
0\\ J\varrho(x\!-\!q(t))
\end{pmatrix}.
\ee
Hence,
\beqn\nonumber
\Phi(t)&=&W_0(t)\Phi(0)-\int_0^t W_0(t-s)R(s)ds\\
\nonumber
&=&W_0(t)\Big(\Phi(0)-\int_0^{\infty}W_0(-s)R(s)ds\Big)+\int_t^{\infty}W_0(t-s)R(s)ds=W_0(t)\Psi_++r_+(t).
\eeqn
The last two equations of \eqref{mls3}  imply
\beqn\nonumber
m\dot {\rm v}(t)&=&m\ddot q(t)=-\langle  A(x,t)\cdot \dot q(t),\nabla\rho(x\!-\!q(t))\rangle+\om(t) \langle \na\cdot JA(x,t),  \varrho(x\!-\!q(t))\rangle\\
\la{rmv}
&-&\langle \Pi(x,t),\rho(x\!-\!q(t))\rangle+\langle A(x,t),\dot q(t)\cdot\na\rho(x\!-\!q(t))\rangle\Big)
\eeqn
where $A(x,t)=A_{{\rm v}(t)}(x-q(t))+\Lam(x,t)$, $\Pi(x,t)=-({\rm v}(t)\cdot\na)A_{{\rm v}(t)}(x-q(t))+\Psi(x,t)$.
One has
\beqn\nonumber
&&-\langle  A_{{\rm v}(t)}\cdot {\rm v}(t),\nabla\rho\rangle+\om(t) \langle \na\cdot JA_{{\rm v}(t)},  \varrho\rangle
+\langle ({\rm v}(t)\cdot\na) A_{{\rm v}(t)},\rho)\rangle+\langle A_{{\rm v}(t)}, {\rm v}(t)\cdot\na\rho)\rangle\\
\nonumber
&&=-\langle  A_{{\rm v}(t)}\cdot {\rm v}(t),\nabla\rho\rangle+\om(t) \langle \na\cdot JA_{{\rm v}(t)},  \varrho\rangle
=-{\w}\langle  \frac{J\na\hat\rho\cdot {\rm v}(t)}{\hat D_0(k,{\rm v}(t)}),k\hat\rho\rangle+\om(t) \langle \frac{({\rm v}(t)\cdot Jk)\hat\rho}{\hat D_0(k,{\rm v}(t))}),  \na\hat\rho\rangle\\
\nonumber
&&=({\w}(t)-\om(t))P({\rm v}(t))J{\rm v}(t)={\cal O}(t^{-2}),~~ t\to\infty. 
\eeqn
by \eqref{ww}  and \eqref{qbQ}. Hence, $\dot {\rm v}(t)={\cal O}(t^{-2})$ by \eqref{Zdec}, \eqref{qbQ}  and \eqref{rmv}.
Now \eqref{ww} and \eqref{ZZ} imply that  $\Vert R(t)\Vert_{\cal F}  = {\cal O}(t^{-2})$.
Therefore $\Psi_{+} =\Phi(0)-\int_0^{\infty}W_0(-s)R(s)ds\in{\cal F}$ and  
$\Vert r_{+}(t)\Vert_{\cal F}= {\cal O}(t^{-1})$ due to the unitarity of $W_0(t)$.
\hfill $\Box$ 
\setcounter{equation}{0}
\section{Linearized dynamics}\la{lin-dyn}
\subsection{Solving the linearized equation }\la{lin-sol}
We change variables in equation \eqref{Avv} to simplify its structure. Put $\phi=\frac{1}{m}\big(\pi-\langle \Lam,\rho\rangle\big)$. 
If we prove a decay fot $\phi$ and $\Lam$, then $\pi$  has the corresponding decay as well.
By \eqref{AA} and \eqref{B-op},
\beqn\nonumber
&&\dot\phi=\frac{1}{m}\big(\dot\pi-\langle \dot\Lam,\rho\rangle\big)
=\frac{1}{m}\Big(B^*\Lam-Sr-\frac{\w}{m} PJ\pi-\langle \Psi+v\cdot\na\Lam,\rho\rangle\Big)\\
\nonumber
&&=\frac{1}{m}\Big(-\langle v\cdot \Lam,\na\rho\rangle+{\w}\langle \na\cdot J\Lam,\varrho\rangle-{\w} PJ\phi
+\nu PJv -S^{-}r-\langle \Psi+v\cdot\na\Lam,\rho\rangle\Big)
\eeqn
where 
\be\la{nu-def}
\nu=\frac{1}{I}\big(\langle \Lam, J\varrho\rangle-r\cdot PJv\big),\quad S^{-}r=(\frac {\w^2 }m P^2+Q+{\w}^2 F)r.
\ee
We denote $v=v_1$,  ${\cal X}=(\Lambda, \Psi, r, \phi)$.  Equation  \eqref{Avv} implies
\be\la{bfA}
\dot{\cal X}\!={\cal A}{\cal X}=\left(
\!\ba{l}
\Psi+v\cdot\na\Lambda \\
\De\Lambda+v\cdot\na\Psi+ {\cal P}[\rho\phi - \frac {\w}m\rho JPr-\!v(r\cdot\na\rho)]+\big({\w}(r\cdot \na)-\nu\big)J\varrho \\
\phi-\frac{\w}m JPr\\
-\frac{1}m\Big(\langle v\cdot \Lam,\na\rho\rangle-{\w}\langle \na\cdot \!J\Lam,\varrho\rangle+{\w} PJ \phi-\nu PJv+S^{-}r
+\langle \Psi+v\cdot\na\Lam,\rho\rangle\Big)
\ea\!\!\!\right)
\ee
Now we construct and study the resolvent of $\cal A$.  We apply the Fourier-Laplace transform
\be\la{FL}
\ti {\cal X}(\lam)=\int_0^\infty e^{-\lam t}{\cal X}(t)dt,~~~~~~~\rRe\lam>0
\ee
According to Proposition \ref{lindecay}, we expect that
the solution ${\cal X}(t)$ is bounded in the norm $\Vert\cdot\Vert_{-\beta}$.
Then the integral (\ref{FL}) converges and is analytic for $\rRe\lam>0$, and
\be\la{PW}
\Vert\ti {\cal X}(\lam)\Vert_{-\beta}\le \ds\frac{C}{\rRe\lam},~~~~~~~\rRe\lam>0.
\ee
Let us derive an equation for $\ti {\cal X}(\lam)$ which is equivalent to the
Cauchy problem for (\ref{bfA}) with the initial condition ${\cal X}(0)={\cal X}_0\in\cE_{-\beta}$.
Applying the Fourier-Laplace transform to (\ref{bfA}), we get
\be\la{FLAs}
\ti {\cal X}(\lam)=-({\cal A}-\lam)^{-1}{\cal X}_0,~~~~~~~~\rRe\lam>0.
\ee
Due to \eqref{bfA}, equation (\ref{FLAs}) reads
\be\la{eq1}
\left\{\!
\ba{l}
\ti\Psi+v\cdot\na\ti\Lambda -\lam\ti\Lambda=-\Lambda_0\\
\De\ti\Lambda+v\cdot\!\na\ti\Psi+{\cal P}\big[\big(\ti\phi - \frac{\w}m JP \ti r-v(\ti r\cdot\na)\big)\rho\big]
+\big( {\w}(\ti r\cdot\na)\!-\ti\nu)J\varrho-\lam\ti\Psi=-\Psi_0\\
\ti\phi-\frac{\w}m JP\ti r-\lam \ti r=-r_0\\
\langle v\cdot \ti\Lam,\na\rho\rangle+{\w} PJ\ti\phi-{\w}\langle \na\cdot \!J\ti\Lam,\varrho\rangle+S^{-}\ti r-\ti\nu PJv
+\langle \ti\Psi+v\cdot\na\ti\Lam,\rho\rangle+m\lam\ti\phi=m\phi_0,
\ea
\right.
\ee
We rewrite this system as
\be\la{eq-main}
\left\{\!
\ba{l}
\ti\Psi+v\cdot\na\ti\Lambda -\lam\ti\Lambda=-\Lambda_0\\
\De\ti\Lambda+v\cdot\!\na\ti\Psi+{\cal P}\big[\big(\lam\ti r-v(\ti r\cdot\na)\big)\rho\big]
+\big( {\w}(\ti r\cdot\na)\!-\ti\nu)J\varrho-\lam\ti\Psi=-\Psi_0+{\cal P}[r_0\rho]\\
\ti\phi-\frac{\w}m P\ti r-\lam \ti r=-r_0\\
\langle v\cdot \ti\Lam,\na\rho\rangle+({\w} PJ+m\lam)(\lam+\frac{\w}m JP)\ti r-{\w}\langle \na\cdot \!J\ti\Lam,\varrho\rangle
+S^{-}\ti r-\ti\nu PJv+\lam\langle \ti\Lambda,\rho\rangle\\
=\pi_0+({\w} PJ\!+m\lam)r_0
\ea\right.
\ee
{\it Step i)} Let us study the first two equations. In Fourier space
 they become
$$
\left.
\ba{rcl}
\hat{\ti\Psi}(k)-i(vk)\hat{\ti\Lam}(k)-\lam\hat{\ti \Lam}(k)&=&-\hat\Lam_0(k)
\\
-k^2\hat{\ti \Lam}(k)-i(vk)\hat{\ti\Psi}(k)-\lam\hat{\ti \Psi}(k)&=&-\hat\Psi_0(k)+\widehat{{\cal P}r_0\rho}(k)-K(k)
\ea\right|,~~~(vk)=v\cdot k,~~k\in\R^2\5.
$$
where
\be\la{hat-Pi}
K(k)=\lam\widehat{{\cal P}\ti r\rho}(k)+i(k\ti r)\widehat{{\cal P} v\rho}(k)+\big(i\ti \nu-{\w} (k \ti r)\big)J\na\hat\rho,
\quad (k\ti r)=k\cdot\ti r.
\ee 
Let us invert the matrix of the system and obtain
$$
\left(
\ba{cc}
-(i(vk)+\lam) & 1 \\
-k^2 & -(i(vk)+\lam)
\ea
\right)^{-1}=\frac{1}{\hat D(\lam)}\left(
\ba{cc}
-(i(vk)+\lam) & -1 \\
k^2 & -(i(vk)+\lam)
\ea\right),
$$
where $\hat D(\lam):=(i(vk)+\lam)^2+k^2$ does not vanish since $\rRe\lam>0$ and $|v|<1$. 
Hence,
\be\la{hat-Lam}
\hat{\ti \Lam}=\frac{(i(vk)+\lam)\hat\Lam_0+\hat\Psi_0-\widehat{{\cal P}r_0\rho}+K}{\hat D(\lam)}=\frac{K_0}{\hat D(\lam)}+\frac{K}{\hat D(\lam)},
\ee
where
\be\la{K0}
K_0:=(i(vk)+\lam)\hat\Lam_0+\hat\Psi_0-\widehat{{\cal P}r_0\rho}.
\ee
Substituting \eqref{hat-Lam} into the forth equation of \eqref{eq-main} , we get
\be\la{eq-Pi}
i\langle \frac{v\cdot K}{\hat D(\lam)},k\hat\rho\rangle-{\w}\langle \frac{k\cdot J K}{\hat D(\lam)},\na\hat\rho\rangle
+\big(Q+{\w}^2 F+{\w}\lam(PJ+JP)+m\lam^2\big)\ti r-\ti\nu PJv+\lam\langle \frac{K}{\hat D(\lam)},\hat\rho\rangle=G_0(\lam),
\ee
where
\be\la{F1}
G_0(\lam)=\pi_0+({\w} PJ+m\lam)r_0-i\langle\frac{ v\cdot K_{0}}{\hat D(\lam)},k\hat\rho\rangle
+{\w}\langle\frac{ k\cdot JK_0}{\hat D(\lam)},\na\hat\rho\rangle-\lam\langle \frac{K_0}{\hat D(\lam)},\hat\rho\rangle.
\ee
\begin{defin}\la{def-p-lam} 
We will denote  by $P_{jl}(\lam)$, $Q_{jl}(\lam)$, $F_{jl}(\lam)$, etc. the function defined in \eqref{c-ro}, \eqref{h-ro} 
and \eqref{f-ro} with $\hat D(\lam)$ instead of $\hat D_0$.
\end{defin}
Due to \eqref{Pi-e} and \eqref{hat-Pi},
\beqn\nonumber
&&\langle\frac {K}{\hat D(\lam)},\hat\rho\rangle=\lam\langle \frac{\widehat{{\cal P}\ti r\rho}}{\hat D(\lam)},\hat\rho\rangle
-{\w}\langle \frac{(k {\ti r})J\na\hat\rho}{\hat D(\lam)},\hat\rho\rangle
+i\langle\frac{(k\ti r)\widehat{{\cal P} v\rho}}{\hat D(\lam)},\hat\rho\rangle+i\ti \nu\langle\frac{ J\na\hat\rho}{\hat D(\lam)},\hat\rho\rangle\\
\nonumber
&&=\lam\int\frac{|\hat\rho|^2(\ti r\cdot Jk)}{k^2\hat D(\lam)}\begin{bmatrix} k_2\\-k_1\end{bmatrix} dk
-{\w}\!\int\frac{\hat\rho(k{\ti r})}{\hat D(\lam)}\begin{bmatrix} \na_2\hat\rho\\ -\na_1\hat\rho\end{bmatrix}dk
+i\!\int\frac{|\hat\rho|^2(k\ti r)(v\cdot Jk)}{k^2\hat D(\lam)}\begin{bmatrix} k_2\\-k_1\end{bmatrix} dk +i\ti \nu f(\lam)\\
\la{Pi-1}
&&=\big(\lam U(\lam)-{\w} JP(\lam) +i JS(\lam)\big)\ti r+i\ti \nu f.
\eeqn
Here
\beqn\la{Sj}
&& f(\lam)=\langle\frac{J\na\hat\rho}{\hat D(\lam)},\hat\rho\rangle,\qquad 
S_{jl}(\lam)=\int\frac{ (v\cdot Jk)|\hat\rho|^2k_j k_l}{k^2\hat D(\lam)} dk,~~j,l=1,2,\\
\la{tau-p}
&&U_{jj}(\lam)=\int\frac{(k^2-k_j^2)|\hat\rho|^2dk}{k^2\hat D(\lam)},
\quad U_{jl}(\lam)=-\int\frac{k_jk_l|\hat\rho|^2dk}{k^2\hat D(\lam)},~~j\ne l. 
 \eeqn
Further, 
\beqn\nonumber
\langle\frac{ k\cdot JK}{\hat D(\lam)},\na\hat\rho\rangle
&\!\!\!=\!\!\!&-\lam\int\frac{(\ti r\cdot Jk)\hat\rho}{\hat D(\lam)}\na\hat\rho dk+{\w}\int\frac{(k\ti r)(k\cdot\na\hat\rho)}{\hat D(\lam)}\na\hat\rho dk
-i\!\int\! \frac{(k\ti r) (v\cdot Jk)\hat\rho}{\hat D(\lam)}\na\hat\rho dk-i\ti \nu g(\lam)\\
\label{Pi-5}
&\!\!\!=\!\!\!&\big(\lam P(\lam)J+{\w} F(\lam)-iR(\lam)\big)\ti r-i\ti \nu h(\lam),
\eeqn
where 
\be\la{fj-def}
h(\lam):=\langle \frac{k\cdot\na\hat\rho}{\hat D(\lam)},\na\hat\rho\rangle=\int\frac{k|\na\hat\rho|^2}{\hat D(\lam)} dk,
\quad R_{jl}(\lam)=\int \frac{(v\cdot Jk)\hat\rho k_j\na_l\hat\rho}{\hat D(\lam)} dk.
\ee
Moreover, \eqref{Pi-e} and \eqref{hat-Pi}   imply
\beqn\nonumber
&&\langle \frac {v\cdot K}{\hat D(\lam)},k\hat\rho\rangle
=i\langle\frac{v^2k^2-(vk)^2}{k^2\hat D(\lam)}\hat\rho(k\ti r), k\hat\rho\rangle
+i\ti\nu\langle \frac{v\cdot J\na\hat\rho}{\hat D(\lam)},k\hat\rho\rangle-{\w}\langle\frac{(k\ti r)(v\cdot J\nabla\hat\rho)}{\hat D(\lam)},k\hat\rho\rangle\\
\label{Pi-3}
&&+\lambda\langle\frac{(v\cdot Jk)(r\cdot Jk)\hat\rho}{k^2\hat D(\lam)},k\hat\rho\rangle=\big(iQ(\lam)-{\w} R(\lam)-\lambda S(\lam)J\big)\ti r-i\ti\nu P(\lam)Jv.
\eeqn
since 
$$
\int\frac{k_jk_l\hat\rho(v\cdot J\na\hat\rho)}{\hat D(\lam)} dk=\int \frac{(v\cdot Jk)\hat\rho k_j\na_l\hat\rho}{\hat D(\lam)} dk.
$$
Substituting \eqref{Pi-1},  \eqref{Pi-5} and  \eqref{Pi-3}  into the LHS of \eqref{eq-Pi}, we obtain
\be\la{M12}
\Big(\breve Q+{\w}^2\breve F+\lam^2 U(\lam)+{\w}\lam\big(\breve PJ\!+\!J\breve P\big)+i\lam(JS(\lam)-S(\lam)J)+m\lam^2\Big)\ti r
+\ti\nu\big(i\lam f(\lam)+i{\w} h(\lam)-\breve PJv\big)=G_0(\lam),
\ee
where  $\breve Q=\breve Q(\lam):=Q-Q(\lam)$, etc.  
Let us  express $\ti\nu$ in terms of $\ti r$. By  \eqref{Pi-e} and  \eqref{hat-Pi}, 
\beqn\nonumber
\!\!\!\!\!&&\!\!\!\!\!\langle \frac {K}{\hat D(\lam)}, J\hat\varrho\rangle
=-\langle \frac {(k\ti r)\hat\rho}{\hat D(\lam)}v, J\na\hat\rho\rangle
+i\lambda\langle \frac {\hat\rho}{\hat D(\lam)}\ti r, J\na\hat\rho\rangle
-i{\w}\langle\frac{(k\ti r)\na\hat\rho}{\hat D(\lam)},\na\hat\rho\rangle-\ti \nu\langle\frac{\na\hat\rho}{\hat D(\lam)},\na\hat\rho\rangle\\
\nonumber
\!\!\!\!\!&&\!\!\!\!\!=\int\frac {\hat\rho(k\ti r)(v_2\na_1\hat\rho-v_1\na_2\hat\rho)}{\hat D(\lam)} dk+i\lam\ti r\cdot f(\lam)
-i{\w}\ti r\cdot\int\frac{k|\na\hat\rho|^2}{\hat D(\lam)} dk-\ti \nu\varkappa(\lam)\\
\label{Pi-4}
&&=\ti r\cdot \big(P(\lam)Jv+i\lam f(\lam)-i{\w} h(\lam)\big)-\ti\nu\varkappa(\lam),
\eeqn
where we denote  (cf. \eqref{kappa}) 
$\varkappa(\lam)=\int\frac{|\na\hat\rho|^2}{\hat D(\lam)}dk$.
 Hence \eqref{nu-def}  and \eqref{hat-Lam} imply
\beqn\nonumber
I\ti\nu&=&\ti r\cdot \big(P(\lam)Jv+i\lam f(\lam)-i{\w} h(\lam)\big)-\ti\nu\varkappa(\lam)+\langle \frac{K_0}{\hat D(\lam)},J\hat\varrho\rangle-\ti r\cdot PJv\\
\nonumber
&=&\ti r\cdot \big(i\lam f(\lam)-i{\w} h(\lam)-\breve P(\lam)Jv\big)-\ti\nu\varkappa(\lam)+\langle \frac{K_0}{\hat D(\lam)},J\hat\varrho\rangle.
\eeqn
Therefore,
\be\la{nu-rep}
\ti\nu=\frac{1}{\kappa(\lam)}\Big(\ti r\cdot \big(i\lam f(\lam)-i{\w} h(\lam)-\breve P(\lam)Jv\big)+\langle \frac{K_0}{\hat D(\lam)},J\hat\varrho\rangle\Big),
\quad \kappa(\lam):=I+\varkappa(\lam).
\ee
Evidently, $\kappa(0)>0$. Moreover, $\kappa(\lam)\ne 0$ for $\rRe\lam>0$. In Appendix  we will show that $\kappa(i\mu+0)\ne 0$ for $\mu\in\R$
 in the case  $v\ne 0$. 

Substituting \eqref{nu-rep} into \eqref{M12}, we get
\beqn \nonumber
&&\Big(\breve Q+{\w}^2\breve F+\lam^2 U(\lam)+{\w}\lam\big(\breve PJ\!+\!J\breve P\big)+i\lam(JS(\lam)-S(\lam)J)+m\lam^2\Big)\ti r\\
\la{M13}
&&+r\cdot\frac{i\lam f(\lam)-i{\w} h(\lam)-\breve P(\lam)Jv}{\kappa(\lam)}\Big(i\lam f(\lam)+i{\w} h(\lam)-\breve P(\lam)Jv\Big)=G_0^+(\lam),
\eeqn
where
\be\la{F1+}
G^+_0(\lam):=G_0(\lam)-\frac{1}{\kappa(\lam)}\langle \frac{\hat K_0}{\hat D(\lam},J\hat\varrho\rangle\Big(i\lam f(\lam)+i{\w} h(\lam)-\breve PJv\Big)
\ee
We rewrite \eqref{M13}  as the following system
\be\la{M-def}
L(\lam)\begin{pmatrix} \ti r_1\\\ \ti r_2\end{pmatrix}=G_0^+(\lam),
\quad L_{jl}(\lam)=a_{jl}(\lam)+b_{jl}(\lam)
\ee
Here, 
\beqn\nonumber
a_{jl}&=&\breve Q_{jl}+{\w}^2\breve F_{jl}+\lam^2(m\delta_{jl}+U_{jl})+{\w}\lam\big(\breve PJ\!+\!J\breve P\big)_{jl}+i\lam(JS(\lam)-S(\lam)J)_{jl}\\
\label{aabb}
b_{jl}&=&\frac{1}{\kappa}\big(i\lam f_j-i{\w} h_j-(\breve PJv)_j\big)\big(i\lam f_l+i{\w} h_l-(\breve PJv)_l\big)
\eeqn
Note that 
\be\la{PPJ}
 \breve PJ+J\breve P
=\begin{pmatrix} 0 & \breve P_{11} +\breve P_{22} \\ -\breve P_{11} -\breve P_{22}& 0 \end{pmatrix},\quad
 JS-SJ=\begin{pmatrix} 2S_{12} & S_{22} -S_{11} \\ S_{22} -S_{11} & -2S_{12} \end{pmatrix}
\ee
Moreover.
$P_{12}(\lam)=F_{12}(\lam)=U_{12}(\lam)=S_{11}(\lam)=S_{22}(\lam)=0$, $f_2(\lam)=h_2(\lam)=0$.
Hence
\beqn\nonumber
a_{jj}&=&\breve Q_{jj}+{\w}^2\breve F_{jj}+\lam^2(m+U_{jj})+2i(-1)^{j+1}S_{12},\quad j=1,2,\\
\la{aa}
a_{12}&=&{\w}\lam(  \breve P_{11} +\breve P_{22}),\quad a_{21}=-{\w}\lam(\breve P_{11} +\breve P_{22})\\
\la{bb}
b_{11}&=&\frac{1}{\kappa}\big( {\w}^2 h_1^2-\lam^2 f_1^2\big),\quad  b_{22}=\frac{1}{\kappa}P_{22}^2v^2,\quad b_{12}=b_{21}=0
\eeqn
The invertibility of the matrix $L(\lam)$ for $\rRe\lam>0$ follows from the next lemma.
\begin{lemma}\la{KK} 
The operator ${\cal A}-\lam:{\cal E}\to{\cal E}$ has a bounded inverse operator for $\rRe\lam>0$.
\end{lemma}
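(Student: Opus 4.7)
The plan is to invert ${\cal A} - \lambda$ in two stages: establish triviality of $\ker({\cal A} - \lambda)$ via the Hamiltonian positivity of Section 7, and then read off the explicit resolvent from the Fourier--Laplace reduction developed in Section 10.1.

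Since the substitution $\phi = (\pi - \langle\Lam,\rho\rangle)/m$ is a bounded invertible linear map of $\cE$ conjugating ${\bf A}_{v_1,v_1}$ to ${\cal A}$, it suffices to show $\ker({\bf A}_{v_1,v_1} - \lambda) = 0$ for $\rRe\lambda > 0$. Suppose $X = U + iV \in \cE_{\C}$ lies in this kernel with $\lambda = \mu + i\sigma$, $\mu > 0$. Then the real-valued function $Y(t) := e^{\mu t}[\cos(\sigma t)U - \sin(\sigma t)V]$ is a classical $\cE$-solution of $\dot Y = {\bf A}_{v_1,v_1}Y$. By Lemma \ref{haml} the Hamilton function is conserved, so $e^{2\mu t}{\cal H}_{v_1}(W(t)) \equiv {\cal H}_{v_1}(U)$ where $W(t) = \cos(\sigma t)U - \sin(\sigma t)V$ is uniformly bounded in $\cE$ over all $t \in \R$. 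Consequently ${\cal H}_{v_1}(W(t))$ is bounded while $e^{2\mu t} \to \infty$ as $t \to \infty$; combined with ${\cal H}_{v_1} \ge 0$ (Lemma \ref{ceig}) and $\mu > 0$ this forces ${\cal H}_{v_1}(U) = 0$, and evaluating at $t = \pi/(2\sigma)$ (or splitting the equation into real and imaginary parts when $\sigma = 0$) also gives ${\cal H}_{v_1}(V) = 0$. A point where a continuous non-negative quadratic functional vanishes is a global minimum and hence a critical point, so $D{\cal H}_{v_1}(U) = D{\cal H}_{v_1}(V) = 0$; Lemma \ref{haml} now yields ${\bf A}_{v_1,v_1}X = {\bf J}D{\cal H}_{v_1}(X) = 0$, and combined with ${\bf A}_{v_1,v_1}X = \lambda X$ and $\lambda \ne 0$ this forces $X = 0$.

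For surjectivity and the explicit inverse, the reduction in Section 10.1 transforms $({\cal A} - \lambda)\ti{\cal X} = -{\cal X}_0$ into the $2 \times 2$ linear system $L(\lambda)\ti r(\lambda) = G_0^+(\lambda)$ together with the scalar formula \eqref{nu-rep} for $\ti\nu$; after solving these algebraic problems, $\hat{\ti\Lam}$, $\hat{\ti\Psi}$ and $\ti\phi$ are recovered by explicit pointwise formulae. Non-vanishing of $\kappa(\lambda)$ and $\det L(\lambda)$ on $\{\rRe\lambda > 0\}$ is then immediate from the injectivity just proved: any zero, by running the reduction with zero right-hand side, would produce a non-trivial element of $\ker({\cal A} - \lambda)$. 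Since $\hat D(\lambda) = (\lambda + i(v\cdot k))^2 + k^2$ vanishes only at $\lambda = -i(v\cdot k) \pm i|k| \in i\R$, the quotient $(1 + k^2)/|\hat D(\lambda)|$ is uniformly bounded over $k \in \R^2$ for each $\lambda$ with $\rRe\lambda > 0$, and Plancherel applied to \eqref{hat-Lam} together with the vanishing moments \eqref{zero2} of $\hat\rho$ delivers the bound $\Vert\ti{\cal X}(\lambda)\Vert_{\cE} \le C(\lambda)\Vert{\cal X}_0\Vert_{\cE}$.

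The main obstacle is the injectivity step, specifically the passage from ${\cal H}_{v_1}(U) = 0$ to ${\bf A}_{v_1,v_1}U = 0$. Both inputs are essential: the positivity ${\cal H}_{v_1} \ge 0$ forces $U$ to be a critical point of the functional, and the Hamilton identity ${\bf A}_{v_1,v_1} = {\bf J}D{\cal H}_{v_1}$ converts criticality into annihilation by the generator. Without either one the energy-conservation argument concludes only that $U$ lies in the null cone of ${\cal H}_{v_1}$, on which ${\bf A}_{v_1,v_1}$ need not act trivially and could a priori support growing modes.
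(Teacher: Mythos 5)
Your argument is self-contained and takes a genuinely different route from the paper, which simply defers this lemma to Lemma 7.1 of \cite{K2025} (and in fact later \emph{deduces} the invertibility of $L(\lam)$ from it, whereas you go in the opposite direction). The injectivity half is essentially correct: since ${\cal H}_{v_1}$ is a continuous nonnegative quadratic form, ${\cal H}_{v_1}(U)=0$ does force $D{\cal H}_{v_1}(U)=0$ (Cauchy--Schwarz for the semidefinite bilinear form), hence ${\bf A}_{v_1,v_1}U={\bf J}D{\cal H}_{v_1}(U)=0$, and together with ${\bf A}_{v_1,v_1}X=\lam X$, $\lam\ne 0$, this gives $X=0$. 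Two small repairs: the contradiction comes from $t\to-\infty$, not $t\to+\infty$ (from ${\cal H}_{v_1}(W(t))=e^{-2\mu t}{\cal H}_{v_1}(U)$ and boundedness of ${\cal H}_{v_1}$ on the compact orbit of $W(t)$ you must let $t\to-\infty$ to force ${\cal H}_{v_1}(U)=0$); and the boundedness of the conjugation $\pi\mapsto\phi$ on $\cE$ deserves a sentence, since $\langle\Lam,\rho\rangle$ is continuous on $\dot{\bf H}^1$ only thanks to $\hat\rho(0)=0$.

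The genuine gap is the claim that ``non-vanishing of $\kappa(\lambda)$ and $\det L(\lambda)$ \dots is immediate from the injectivity just proved.'' For $\det L$ this is fine \emph{once $\kappa(\lam)\ne0$ is known}: a vector in $\ker L(\lam)$, fed back through the reduction with zero data, does give a nontrivial element of $\ker({\cal A}-\lam)$ in $\cE$. But a zero of $\kappa(\lam)=I+\varkappa(\lam)$ does not produce a kernel element by this recipe. Indeed, with zero data the consistency relation behind \eqref{nu-rep} reads $\kappa(\lam)\,\ti\nu=\ti r\cdot\big(i\lam f(\lam)-i{\w}h(\lam)-\breve P(\lam)Jv\big)$; at a zero of $\kappa$ you may take $\ti r=0$ and $\ti\nu$ arbitrary, but then \eqref{M12} forces $\ti\nu\big(i\lam f(\lam)+i{\w}h(\lam)-\breve PJv\big)=0$, which kills $\ti\nu$ unless that coefficient vector happens to vanish; and for $\ti r\ne0$ you would additionally need the \eqref{M12}-matrix applied to $\ti r$ to lie in the span of that vector, which is not automatic. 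So injectivity rules out degeneracy of the full $3\times3$ system in the unknowns $(\ti r,\ti\nu)$ (whose determinant equals $\kappa\det L$ where $\kappa\ne0$), but not zeros of $\kappa$ itself --- and your construction of the inverse divides by $\kappa$ through \eqref{nu-rep}. Repair it either by proving $\kappa(\lam)\ne0$ for $\rRe\lam>0$ directly (the paper asserts this as a separate fact right after \eqref{nu-rep}; it is not a consequence of injectivity), or by not eliminating $\ti\nu$ at all: solve the $3\times3$ system for $(\ti r,\ti\nu)$ by Cramer's rule, whose determinant is nonzero precisely by your injectivity argument; this suffices for surjectivity and, combined with your Plancherel bound (and the routine check that the reconstructed $(\ti\Lam,\ti\Psi)$ are solenoidal and lie in ${\cal F}$), yields the bounded inverse.
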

The proof is similar to that of  Lemma 7.1 in \cite{K2025}. By \eqref{M-def},
 \beqn\nonumber
L^{-1}(\lam)&=&\frac{1}{{\rm det}\, L}\begin{pmatrix} 
a_{22} +b_{22}&-(a_{12}+b_{12})\\
-(a_{21}+b_{21})&  a_{11} +b_{11}
\end{pmatrix},\quad  {\rm where}\\\\
\nonumber
\la{detM}
 {\rm det}\, L&=&(a_{11} +b_{11})(a_{22} +b_{22})-(a_{12}+b_{12})(a_{21}+c_{21}),
\eeqn
\subsection{Asymptotics of matrix $L^{-1}(\lam)$}\la{det-as}
Now we assume that $v=(|v|,0)$.  
\begin{lemma}\la{Dr-as}
{\it i)} Let $\rho$ satisfy  \eqref{rosym} and   \eqref{zero1} 
and $f\in L^2_{\beta}$ with $\beta>4$. Then
\be\la{int-rho-f-as}
\langle\frac{\hat\rho}{\hat D(\lam)}, \hat f\rangle=\langle\frac{\hat\rho}{\hat D_0}, \hat f\rangle
-\lam \langle\frac{2i|v|k_1\hat\rho}{\hat D_0^2}, \hat f\rangle
-\lam^2\langle\frac{(k^2+3v^2k_1^2)\hat\rho}{\hat D_0^3}, \hat f\rangle
+{\cal O}(|\lam|^3),\quad \lam\to 0,\quad \rRe\lam> 0.
\ee
Moreover,
\beqn\nonumber
\langle\frac{k\hat\rho}{\hat D(\lam)}, \hat f\rangle&=&\langle\frac{k\hat\rho}{\hat D_0}, \hat f\rangle
-\lam \langle\frac{2i|v|k_1k\hat\rho}{\hat D_0^2}, \hat f\rangle
-\lam^2\langle\frac{(k^2+3v^2k_1^2)k\hat\rho}{\hat D_0^3}, \hat f\rangle\\
\la{int-rho-f-as+}
&+&4\lam^3\langle\frac{i|v|k_1(v^2k_1^2+k^2)k\hat\rho}{\hat D_0^4}, \hat f\rangle+{\cal O}(|\lam|^4),\quad \lam\to 0,\quad \rRe\lam> 0.
\eeqn
\\
{\it iii)} 
Let  $\rho_j$  satisfy  \eqref{rosym}   with $|\al|\le 3$ and  \eqref{zero1}. Then for $\lam\to 0$, $\rRe\lam> 0$,
\be\la{int-rho-as}
\langle\frac{\hat\rho_1}{\hat D(\lam)}, \hat\rho_2\rangle=\langle\frac{\hat\rho_1}{\hat D_0}, \hat\rho_2\rangle
-\lam^2\langle\frac{(k^2+3v^2k_1^2)\hat\rho_1}{\hat D_0^3}, \hat\rho_2\rangle
+\lam^4\langle\frac{(v^4k_1^4+10v^2k^2k_1^2+k^4)\hat\rho_1}{\hat D_0^5}, \hat\rho_2\rangle+{\cal O}(|\lam|^6).
\ee
These asymptotics can be differentiated  appropriate number of times.
\end{lemma}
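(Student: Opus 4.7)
The starting observation is that, with $v=(|v|,0)$,
$$
\hat D(\lam)=(i|v|k_1+\lam)^2+k^2=\hat D_0+\Delta,\qquad \Delta:=2i|v|k_1\lam+\lam^2,
$$
and $\hat D_0\ge(1-v^2)k^2$ vanishes only at $k=0$. The plan is to expand $1/\hat D(\lam)$ as a finite Taylor series in $\Delta/\hat D_0$, collect powers of $\lam$, and estimate the remainder after pairing with $\hat f$ or $\hat\rho_2$. Explicitly, for any $N\ge 0$,
$$
\frac{1}{\hat D(\lam)}=\sum_{n=0}^{N}\frac{(-\Delta)^{n}}{\hat D_0^{\,n+1}}+\frac{(-\Delta)^{N+1}}{\hat D_0^{\,N+1}\hat D(\lam)}.
$$
Expanding $\Delta^n=(2i|v|k_1\lam+\lam^2)^n$ by the binomial theorem, regrouping in powers of $\lam$, and simplifying with identities such as $\hat D_0+4v^2k_1^2=k^2+3v^2k_1^2$, one recovers precisely the polynomial coefficients displayed in \eqref{int-rho-f-as}--\eqref{int-rho-as}. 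In the third formula the spherical symmetry of each $\rho_j$ makes $\hat\rho_j(k)$ real and even in $k_1$, so all odd powers of $\lam$ in the expansion integrate to zero against $\hat\rho_1\hat\rho_2$, which accounts for the absence of $\lam^1,\lam^3,\lam^5$ terms.

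The analytic step is to justify integration term by term and the smallness of the remainder. Near $k=0$ the factor $1/\hat D_0^{\,n}$ has a singularity of order $2n$, which is compensated precisely by the vanishing-moment hypothesis \eqref{zero1}--\eqref{zero2}: it forces $\partial^\al\hat\rho(k)=O(|k|^{5-|\al|})$ for $|\al|\le 5$, so that $\hat\rho_1\hat\rho_2/\hat D_0^{\,5}=O(1)$ at the origin and all intermediate densities appearing in \eqref{int-rho-f-as+} are absolutely integrable. At infinity $\rho\in C_0^\infty$ gives Schwartz decay of $\hat\rho$, while $f\in L^2_{\beta}$ with $\beta>4$ places $\hat f$ in a Sobolev space of order exceeding the polynomial-in-$k$ weights that appear. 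For the remainder I would use $|\Delta|\le C(|k|+|\lam|)|\lam|$ together with a lower bound $|\hat D(\lam)|\ge c\hat D_0$ valid for small $|\lam|$ with $\rRe\lam>0$ (after separating the region $|k|\lesssim|\lam|$), which yields the claimed $O(|\lam|^{N+1})$ bound once paired with $\hat\rho_j$ or $\hat f$.

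The stated differentiability in $\lam$ follows by the same expansion applied to $\partial_\lam^{\,j}\hat D(\lam)^{-1}$: each differentiation multiplies by at most a factor of order $(|k|+|\lam|)/\hat D(\lam)$, which is again absorbed by the moment conditions on $\rho$, as long as the order $j$ stays within the limit set by $|\al|\le 4$. The main obstacle I expect is not a deep analytic one but careful bookkeeping: collecting the binomial contributions at each order in $\lam$ so as to match the precise polynomial coefficients in $k$ (in particular the degree-$4$ coefficient appearing in \eqref{int-rho-as}), simultaneously with a remainder estimate that is uniform as $\rRe\lam\to 0^+$. All of this rests on using the full strength of the vanishing-moment condition \eqref{zero1} up to order $4$.
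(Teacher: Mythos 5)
Your algebraic scheme coincides with the paper's own (purely formal) derivation: the paper only remarks that the expansions are obtained by differentiating $\hat D^{-1}(\lam)$ in $\lam$ --- which is exactly your finite Neumann expansion in $\Delta=2i|v|k_1\lam+\lam^2$ --- and delegates the rigorous proof to \cite{KK2023}. Your use of the moment condition \eqref{zero1} (integrability of $\hat\rho/\hat D_0^{\,n}$ near $k=0$, and the vanishing of odd powers of $\lam$ in \eqref{int-rho-as} by parity in $k_1$) is the right mechanism. One bookkeeping caveat: actually carrying out your expansion to order $\lam^4$ yields the coefficient $5v^4k_1^4+10v^2k^2k_1^2+k^4$ in \eqref{int-rho-as}, not $v^4k_1^4+10v^2k^2k_1^2+k^4$ (check, e.g., via partial fractions $\hat D(\lam)^{-1}=\frac{1}{2i|k|}\big[(\lam+i|v|k_1-i|k|)^{-1}-(\lam+i|v|k_1+i|k|)^{-1}\big]$); so your claim that the scheme ``recovers precisely the displayed coefficients'' should instead be the observation that the displayed degree-four coefficient appears to contain a misprint.

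The genuine weak point is the remainder bound, which you describe as mere bookkeeping. The inequality $|\hat D(\lam)|\ge c\,\hat D_0$ that you invoke fails exactly where the analysis is delicate: writing $\hat D(\lam)=(|k|-|v|k_1+i\lam)(|k|+|v|k_1-i\lam)$ and $\lam=\sigma+i\tau$ with $0<\sigma\ll|\tau|$, on the set $|k|-|v|k_1\approx\tau$ (which lies inside $|k|\lesssim C|\lam|$) one has $|\hat D(\lam)|\sim\sigma|\lam|\ll\hat D_0\sim|\lam|^2$. In that region, absolute-value estimates of the exact remainder --- and you leave the treatment of $|k|\lesssim|\lam|$ unspecified --- produce a factor $\log(|\lam|/\rRe\lam)$ that is not uniformly bounded in the right half-plane. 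But the lemma is used precisely on the boundary, at $\lam=i\mu+0$ (asymptotics of $L^{-1}$, of $G_0^+$, the condition \eqref{M-condition}, and the inversion integral \eqref{F-int} in the proof of Proposition \ref{lindecay}), and must moreover be differentiated twice there; uniformity up to $\rRe\lam\to0^+$ is therefore part of the statement and cannot be obtained from pointwise bounds alone. Closing this requires a limiting-absorption type argument, e.g.\ through the representation \eqref{g-rep} of $\hat D(\lam)^{-1}$ via the free 2D resolvent and its low-energy expansions in weighted Sobolev spaces, with the zero-energy logarithms removed by \eqref{zero1}; this is what the rigorous proof in \cite{KK2023} rests on. Relatedly, the number of admissible $\lam$-derivatives is governed by this boundary behaviour and by how many expansion terms the moments make integrable, not by the rule ``$j$ within $|\al|\le4$'' as you state.
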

Formally, the expansions one can obtain by differentiation of $\hat D^{-1}(\lam)$. The rigorous  proof is given in \cite{KK2023}.
\begin{lemma}
Let conditions \eqref{rosym} and \eqref{zero1} hold. Then
\be\la{M-as}
L^{-1}(\lam)
=\begin{pmatrix} \ell_{11}\lam^{-2}+\ell_{11}'\lam^{-1}+{\cal O}(|\lam|) & \ell_{12}\lam^{-1}+\ell_{12}'\lam+ {\cal O}(|\lam|)\\
-\ell_{12}\lam^{-1}-\ell_{12}'\lam+ {\cal O}(|\lam|)& \ell_{22}\lam^{-2}+\ell_{22}''\lam^{-1}+{\cal O}(|\lam|)
\end{pmatrix},~~\lam\to 0,~~\rRe\lam>0
\ee
with $\ell_{jj}>0$, $j=1.2$. The asymptotics  can be differentiated two times. 
\end{lemma}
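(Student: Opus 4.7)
The plan is to substitute the expansions of Lemma \ref{Dr-as} into the entries $a_{jl}(\lam)+b_{jl}(\lam)$ of $L(\lam)$, extract the leading behavior of $\det L(\lam)$, and invert via the cofactor formula. Throughout I work in the frame $v=(|v|,0)$ and use the identity
$$
\hat D(\lam;-k_1,k_2)=\hat D(-\lam;k_1,k_2),
$$
which together with radiality of $\hat\rho$ forces every integral of the form $\int g(k)/\hat D(\lam)^{n}\,dk$ to be an even function of $\lam$ when $g$ is even in $k_{1}$, and an odd function when $g$ is odd in $k_{1}$. This parity principle constrains which powers of $\lam$ appear in each ingredient of $L(\lam)$.

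First I determine the order of vanishing of every entry. The quantities $\breve Q_{jj}(\lam)$, $\w^{2}\breve F_{jj}(\lam)$, $\breve P_{jj}(\lam)$, $\kappa(\lam)-\kappa(0)$ come from $k_{1}$-even integrands, so by \eqref{int-rho-as} they are $O(\lam^{2})$ with only even powers. The term $\lam^{2}(m+U_{jj}(\lam))$ is visibly $O(\lam^{2})$. In contrast, $S_{12}(\lam)$, $h_{1}(\lam)$, $f_{2}(\lam)$ arise from $k_{1}$-odd integrands, hence are $O(\lam)$ with only odd powers, so $2i\lam S_{12}(\lam)$ and $b_{11}=\w^{2}h_{1}^{2}/\kappa$ both sit at order $\lam^{2}$. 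Collecting,
$$
(a_{jj}+b_{jj})(\lam)=\alpha_j\lam^{2}+O(\lam^{4}),\qquad a_{12}(\lam)=\w\lam(\breve P_{11}+\breve P_{22})=\gamma\lam^{3}+O(\lam^{5}),
$$
with $a_{21}=-a_{12}$, $b_{12}=b_{21}=0$, and $\alpha_1,\alpha_2,\gamma$ explicit real coefficients read off from Lemma \ref{Dr-as}.

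Second, combining the above,
$$
\det L(\lam)=(a_{11}+b_{11})(a_{22}+b_{22})-a_{12}a_{21}=\alpha_1\alpha_2\lam^{4}+O(\lam^{6}),
$$
and the cofactor formula
$$
L^{-1}(\lam)=\frac{1}{\det L(\lam)}\begin{pmatrix}a_{22}+b_{22} & -a_{12}\\ -a_{21} & a_{11}+b_{11}\end{pmatrix}
$$
gives $L^{-1}_{jj}(\lam)=\alpha_j^{-1}\lam^{-2}+O(1)$ on the diagonal and $L^{-1}_{12}(\lam)=-L^{-1}_{21}(\lam)=-\gamma(\alpha_1\alpha_2)^{-1}\lam^{-1}+O(\lam)$ off the diagonal. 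The parity argument, now applied to the quotient, rules out odd-$\lam$-power corrections on the diagonal and even-$\lam$-power corrections off the diagonal, producing the precise shape in \eqref{M-as}. The identification $\ell_{jj}=1/\alpha_j$ then reduces the positivity claim $\ell_{jj}>0$ to $\alpha_j>0$.

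The main difficulty is establishing $\alpha_1\alpha_2>0$, which simultaneously guarantees the invertibility of $L(\lam)$ near $\lam=0$ and fixes the sign. This is not a parity statement: it rests on the Wiener-type spectral condition \eqref{M-condition} together with the nonnegativity of the Hamilton functional from Lemma \ref{ceig}. Concretely, $\alpha_j$ coincides with the value of the quadratic form $h_{v}$ from \eqref{H4} on the direction $r=e_j$ after eliminating $\Lam,\Psi,\phi$ by the constraints that led to \eqref{M-def}, and the completion-of-squares identities in \eqref{H4} deliver strict positivity provided \eqref{M-condition} holds. Finally, the assertion that \eqref{M-as} can be differentiated twice is inherited directly from the corresponding property in Lemma \ref{Dr-as}: each ingredient of $L(\lam)$ admits a twice-differentiable expansion, and since $\det L(\lam)/\lam^{4}$ is bounded away from zero in a neighborhood of $\lam=0$, term-by-term division in the cofactor formula preserves two derivatives.
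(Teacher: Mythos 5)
Your derivation of the \emph{shape} of the expansion (parity in $k_1$ of the integrands forcing even/odd powers of $\lam$ in $a_{jl},b_{jl}$, then the cofactor formula) is in substance the same computation as the paper's, which records exactly these parity facts as $P_{jj}^{(k)}(0)=Q_{jj}^{(k)}(0)=F_{jj}^{(k)}(0)=U_{jj}^{(k)}(0)=0$ for $k=1,3$ and $S_{jl}^{(k)}(0)=0$ for $k=0,2,4$, and then expands via Lemma \ref{Dr-as}. (Your stronger claim that the diagonal of $L^{-1}$ contains no odd powers goes beyond what \eqref{M-as} asserts; it is harmless but unnecessary.)

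The genuine gap is the positivity $\ell_{jj}>0$, which is the actual content of the lemma. You propose to get $\alpha_1\alpha_2>0$ from the spectral condition \eqref{M-condition} together with Lemma \ref{ceig}. Neither can do this job. Condition \eqref{M-condition} concerns $\det L(i\mu+0)$ for $\mu\ne 0$ only; it carries no information about the Taylor coefficients of $L(\lam)$ at $\lam=0$ and is not used in this lemma. Lemma \ref{ceig} gives only ${\cal H}_{v}\ge 0$, and your assertion that $\alpha_j$ ``coincides with the value of $h_v$ from \eqref{H4} on $r=e_j$ after eliminating $\Lam,\Psi,\phi$'' is stated without any argument; even granted such an identification, nonnegativity would not give the strict inequality required (nor the nonvanishing of $\det L/\lam^4$ needed to invert at all). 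You also overlook that the $b_{11}$ contribution enters at the same order $\lam^2$ with a \emph{negative} leading coefficient: the paper computes $\gamma_{11}=\frac{{\w}^2}{\kappa(0)}(h_1'(0))^2=-4\frac{{\w}^2v^2}{I+\varkappa(0)}\Big(\int\frac{k_1^2|\na\hat\rho|^2}{\hat D_0^2}dk\Big)^2<0$, so the constants whose positivity matters are $\alpha_{11}+\gamma_{11}$ and $\alpha_{22}$, not your $\alpha_1,\alpha_2$ as abstractly defined. The paper establishes this by explicit evaluation of the coefficients \eqref{alij}, \eqref{al22} and elementary estimates: a pointwise inequality dominating the negative term $-4v^2\int\frac{k_1^2k_2^2|\hat\rho|^2}{k^2\hat D_0^2}dk$ inside $\alpha_{22}$, and a Cauchy--Schwarz bound \eqref{al-b} showing $|\gamma_{11}|<\alpha_{11}$. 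Without an argument of this concrete kind, your proposal proves only the form of \eqref{M-as}, not the sign (or even the nonvanishing) of the leading coefficients, so both the invertibility of $L(\lam)$ near $\lam=0$ and the claim $\ell_{jj}>0$ remain unproved.
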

\begin{proof}
Note that
$$
P_{jj}^{(k)}(0)=Q_{jj}^{(k)}(0)=F_{jj}^{(k)}(0)=U_{jj}^{(k)}(0)=0, \quad j=1,2,\quad k=1,3
$$
by \eqref{c-ro}, \eqref{h-ro}, \eqref{f-ro},  \eqref{tau-p} . Further, 
$$
S_{jl}^{(k)}(0)=0,\quad k=0,2,4
$$
by  \eqref{Sj}. Hence asymptotics \eqref{int-rho-f-as+} imply
\beqn\nonumber
a_{jj}(\lam)&=&\alpha_{jj}\lam^2+\beta_{jj}\lam^4+{\cal O}(|\lam|^6),\\
\la{a-j}
a_{12}(\lam)&=&a_{21}(\lam)=\alpha_{12}\lam^3+\beta_{12}\lam^5+{\cal O}(|\lam|^7),\quad \lam\to 0,\quad \rRe\lam>0.
\eeqn
where
\beqn\nonumber
\alpha_{jj}&=&-\frac 12Q_{jj}''(0)-\frac{\w^2}2F_{jj}''(0)+m+U_{jj}(0)+2i(-1)^{j+1}S_{12}'(0)\\
\nonumber
&=&v^2\int\frac{k_j^2k_2^2(k^2+3v^2k_1^2)|\hat\rho|^2}{k^2\hat D_0^3}dk
+{\w}^2\int\frac{(k\cdot\na\hat\rho)k_j\na_j\hat\rho(k^2+3v^2k_1^2)}{\hat D_0^3}dk+m\\
\la{alij}
&+&\int \frac{(k^2-k_j^2)|\hat\rho|^2}{k^2\hat D_0}dk+4(-1)^{j+1}v^2\int\frac{k_1^2k_2^2|\hat\rho|^2}{k^2\hat D^2_0}dk>0
\eeqn
One has $\alpha_{11}>0$. Moreover,
\beqn\nonumber
\alpha_{22}&=&v^2\int\frac{k_2^4(k^2+3v^2k_1^2)|\hat\rho|^2}{k^2\hat D_0^3}dk
+{\w}^2\int\frac{k_2^2|\na\hat\rho|^2(k^2+3v^2k_1^2)}{\hat D_0^3}dk+m\\
\la{al22}
&+&\int \frac{k_1^2|\hat\rho|^2}{k^2\hat D_0}dk-4v^2\int\frac{k_1^2k_2^2|\hat\rho|^2}{k^2\hat D^2_0}dk>0
\eeqn
since
\beqn\nonumber
4\frac{v^2k_1^2k_2^2|\hat\rho|^2}{k^2\hat D_0^2}
&=& \frac{k_1^2|\hat\rho|^2}{k^2\hat D_0} \Big[ 2 \frac{|v|}2\cdot \frac{|v|k_2^2}{\hat D_0}+3\big(2\frac{1}{2}\cdot\frac{v^2k_2^2}{\hat D_0}\big)\Big]
\le  \frac{k_1^2|\hat\rho|^2}{k^2\hat D_0} \Big[\frac{v^2}4+\frac{v^2k_2^4}{\hat D_0^2}+3(\frac 14+\frac{|v|^4k_2^4}{\hat D_0^2})\Big]\\
\nonumber
&\le&  \frac{k_1^2|\hat\rho|^2}{k^2\hat D_0} \Big[1+\frac{v^2k_2^4(1+3v^2)}{\hat D_0^2}\Big]
\le  \frac{k_1^2|\hat\rho|^2}{k^2\hat D_0}+\frac{v^2k_2^4(k^2+3v^2k_1^2)|\hat\rho|^2}{k^2\hat D_0^3}
\eeqn
Now consider $b_{jl}$. Formulas \eqref{Sj}, \eqref{fj-def} and  \eqref{bb}  imply
\be\la{bb-ex}
b_{11}=\gamma_{11}\lam^2+\mu_{11}\lam^3+\nu_{11}\lam^4+{\cal O}(|\lam|^5), 
\quad  b_{22}=\mu_{jl}\lam^3+\nu_{11}\lam^4+{\cal O}(|\lam|^5),
\ee
where
$$
\gamma_{11}=\frac{{\w}^2}{\kappa(0)}(h_1'(0))^2=
-4\frac{{\w}^2v^2}{I+\varkappa(0)}\Big(\int\frac{k_1^2|\na\hat\rho|^2}{\hat D_0^2}dk\Big)^2
$$
One has
\beqn\nonumber
|\gamma_{11}|
&\le& \frac{4{\w}^2v^2}{I+\varkappa(0)} \int\frac{k_1^4|\na\hat\rho|^2}{\hat D_0^3}dk\cdot\int\frac{|\na\hat\rho|^2}{\hat D_0}dk 
\le 4{\w}^2v^2\int\frac{k_1^4|\na\hat\rho|^2}{\hat D_0^3}dk\\
\la{al-b}
&\le& {\w}^2\int\frac{k_1^2|\na\hat\rho|^2)(k^2+3v^2k_1^2)}{\hat D_0^3}dk<\alpha_{11}
\eeqn
Now \eqref{detM} together with \eqref{a-j}, \eqref{bb-ex} and \eqref{al-b}  imply that
\beqn\nonumber
{\rm det}\, L&=&(\al_{11}+\gamma_{11})\al_{22}\lam^4+\ell_5\lam^5+\ell_6\lam^6+{\cal O}(|\lam|^7)\\
\la{m-as}
\frac{1}{{\rm det}\, L}&=&\frac{1}{(\al_{11}+\gamma_{11})\al_{22}}\lam^{-4}+q_3\lam^{-3}+q_2\lam^{-2}+{\cal O}(\lam^{-1})
\eeqn
Finally, \eqref{M-as}, with $\ell_{11}=a_{22}^{-1}$ and  $\ell_{22}=(a_{11}+\gamma_{11})^{-1}$,  follows by  \eqref{detM} and \eqref{m-as}. 
\end{proof}
To obtain asymptotics for large $\lam$, we return to the coordinate representation. 
One has
$$
\hat D(\lam,k)=k^2+(i|v|k_1+\lam)^2=\frac{1}{\ga^2}(k_1+i|v|\lam\gamma^2)^2+k_2^2+\ga^2\lam^2,\qquad \ga:=1/\sqrt{1-v^2} 
$$
Hence,
\beqn\nonumber
{\cal R}(\lam,y):&\!\!=\!\!&F^{-1}_{k\to y}\ds\frac{1}{\hat D(\lam,k)}=\frac{1}{2\pi}\int \frac{e^{-iky}dk}{\hat D(\lam,k)}
=\frac{e^{-|v|\lam\gamma^2 y_1}}{2\pi}\int \frac{e^{-iky}dk}{\frac{1}{\ga^2}k_1^2+k_2^2+\ga^2\lam^2}\\
\la{g-rep0}
&\!\!=\!\!&
\gamma e^{-|v|\lam\gamma \tilde y_1}\int \frac{e^{-ik\tilde y}dk}{k_1^2+k_2^2+\ga^2\lam^2},~~ \tilde y=(\gamma y_1,y_2).
\eeqn
Let ${\cal R}(\lam)$ be  an integral operator with the integral kernel ${\cal R}(\lam,x-y)$. By \eqref{g-rep0}, 
\be\la{g-rep}
{\cal R}(\lam,x-y)=\gamma e^{-|v|\lam\gamma (\tilde x_1-\tilde y_1)}R(-\gamma^2\lam^2,(\tilde x-\tilde y)),
\ee
where  $R(\zeta, x-y)$ is the integral kernel of the resolvent $R(\zeta)=(-\Delta-\zeta)^{-1}$ of the 2D Laplacian. 
Recall that for $R(\zeta)$ the following asymptotics hold  \cite{A, KK2012}: 
 For  $s,k=0,1,2,...$  and 
$|l|\le 2$,  
\be\la{A}
 \Vert R^{(k)}(\zeta)\Vert_{H^s_\si\to H^{s+l}_{-\si}}
 ={\cal O}(|\zeta|^{-\frac{1+k-l}2}), \quad |\zeta|\to\infty,\quad \zeta\in\C\setminus [0,\infty),\quad \si>1/2+k.
 \ee
The asymptotics  and formula \eqref{g-rep} imply
\be\la{g-as}
 \Vert {\cal R}^{(k)}(\lam)\Vert_{H^s_\si \to H^{s+l}_{-\si}}
 ={\cal O}(|\lam|^{-1+l}), \quad |\lam|\to\infty,\quad\rRe\lam>0,\quad \si>1/2+k,\quad |l|\le 2.
 \ee
 \begin{lemma}\la{Lem-q-est}
Let conditions \eqref{rosym} holds. Then 
\be\la{M-as-}
L^{-1}(\lam)=\begin{bmatrix} 
{\cal O}(|\lam|^{-2})&{\cal O}(|\lam|^{-5})\\
{\cal O}(|\lam|^{-5})& {\cal O}(|\lam|^{-2})\\
\end{bmatrix},\quad \lam\to\infty,\quad \rRe\lam>0.
\ee
The asymptotics can be differentiated two times.
 \end{lemma}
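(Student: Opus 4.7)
The plan is to expand each entry of $L(\lam)$ for $|\lam|\to\infty$, $\rRe\lam>0$, using the resolvent estimate \eqref{g-as}, and then invert by the cofactor formula. All ingredient functions $P(\lam), Q(\lam), F(\lam), U(\lam), S(\lam), f(\lam), h(\lam), \varkappa(\lam)$ are integrals of the form $\int \phi(k)\psi(k)/\hat D(\lam)\,dk$, where $\phi$ and $\psi$ are products of $\hat\rho$ or $\na\hat\rho$ with monomials in $k$. Writing each such integral as a pairing $\langle \check\phi,\mathcal{R}(\lam)\check\psi\rangle$ with $\check\phi,\check\psi\in C_0^\infty$, the bound \eqref{g-as} applied with $l=0$ gives that each of these quantities is $\mathcal{O}(|\lam|^{-2})$ as $|\lam|\to\infty$, $\rRe\lam>0$. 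In particular $\kappa(\lam)=I+\varkappa(\lam)\to I$, and $\breve P(\lam),\breve Q(\lam),\breve F(\lam)$ converge to their static values with remainder $\mathcal{O}(|\lam|^{-2})$.

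Substituting these orders into \eqref{aa} and \eqref{bb} gives $a_{jj}(\lam)=m\lam^2+\mathcal{O}(1)$ and $b_{jj}(\lam)=\mathcal{O}(1)$ for $j=1,2$, hence $L_{jj}(\lam)=m\lam^2+\mathcal{O}(1)$. For the off-diagonal entry the key point is the cancellation $\mathrm{tr}\,P=P_{11}+P_{22}=0$, which follows by integration by parts from the identity $\sum_j k_j\pa_j\hat D_0=2\hat D_0$. This gives $\breve P_{11}(\lam)+\breve P_{22}(\lam)=-(P_{11}(\lam)+P_{22}(\lam))=\mathcal{O}(|\lam|^{-2})$, and therefore $a_{12}(\lam)=\w\lam(\breve P_{11}+\breve P_{22})=\mathcal{O}(|\lam|^{-1})$. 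Combined with $b_{12}(\lam)=\mathcal{O}(|\lam|^{-3})$, arising from the vanishing of the appropriate components of $f$, $h$ and $\breve PJv$ under the radial symmetry of $\rho$, this yields $L_{12}=-L_{21}=\mathcal{O}(|\lam|^{-1})$.

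Given these entry asymptotics, $\det L=L_{11}L_{22}-L_{12}L_{21}=m^2\lam^4\bigl(1+\mathcal{O}(|\lam|^{-2})\bigr)$, so $(\det L)^{-1}=m^{-2}\lam^{-4}\bigl(1+\mathcal{O}(|\lam|^{-2})\bigr)$. The cofactor formula then gives $L^{-1}_{jj}=L_{3-j,3-j}/\det L=\mathcal{O}(|\lam|^{-2})$ and $L^{-1}_{12}=-L_{12}/\det L=\mathcal{O}(|\lam|^{-1})\cdot\mathcal{O}(|\lam|^{-4})=\mathcal{O}(|\lam|^{-5})$, matching \eqref{M-as-}. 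For the differentiability, each $\lam$-derivative of $1/\hat D(\lam,k)$ equals $-2(\lam+iv\cdot k)/\hat D(\lam,k)^2$, which for bounded $k$ is $\mathcal{O}(|\lam|^{-3})$, one factor $|\lam|^{-1}$ better than $1/\hat D$ itself; the same gain propagates to the integrals $P(\lam),Q(\lam),\dots$ and then to $L$, $\det L$, and $L^{-1}$ via the quotient rule, giving the claimed two-times differentiability.

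The main obstacle is identifying and justifying the off-diagonal cancellation: without $\mathrm{tr}\,P=0$ one would have $a_{12}(\lam)=\mathcal{O}(|\lam|)$, yielding only $L^{-1}_{12}=\mathcal{O}(|\lam|^{-3})$, which is insufficient. Once this cancellation is in hand, the remaining estimates are direct consequences of \eqref{g-as} applied to pairs of $C_0^\infty$ functions, and the bookkeeping of orders follows the same pattern as in the small-$\lam$ analysis of the preceding subsection.
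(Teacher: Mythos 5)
Your proposal follows essentially the same route as the paper's proof: bound the $\lam$-dependent coefficients $P(\lam),Q(\lam),F(\lam),U(\lam),S(\lam),f,h,\varkappa$ at high energies via the resolvent estimate \eqref{g-as}, read off the entrywise behaviour of $L$ from \eqref{aa}--\eqref{bb}, and invert by cofactors. Your explicit isolation of the off-diagonal cancellation ${\rm tr}\,P=P_{11}+P_{22}=0$ is a genuine plus: the paper never states it, although its assertion that the off-diagonal entries are ${\cal O}(|\lam|^{-1})$ tacitly requires it, since for large $\lam$ one has $\breve P_{11}+\breve P_{22}={\rm tr}\,P-\big(P_{11}(\lam)+P_{22}(\lam)\big)$.

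Three points in your write-up need repair, though none changes the outcome. (1) Your justification of ${\rm tr}\,P=0$ by homogeneity plus integration by parts is incomplete: in 2D, $\na_k\cdot\big(k/\hat D_0\big)$ vanishes only for $k\ne 0$ and carries a point mass $c_v\,\delta(k)$ with $c_v=\int_0^{2\pi}(1-v^2\cos^2\theta)^{-1}d\theta>0$, so in fact ${\rm tr}\,P=-\tfrac{c_v}{2}\,\hat\rho(0)^2$. The cancellation therefore rests on the neutrality condition \eqref{zero1} (i.e. $\hat\rho(0)=0$), not on scaling alone, and would fail for a non-neutral $\rho$; under \eqref{zero1} the radial factor $\int_0^\infty\pa_r\big(\hat\rho^2/2\big)dr=-\hat\rho(0)^2/2=0$ factors out, so actually every entry of the static $P$ vanishes. (2) The exponent bookkeeping in \eqref{g-as} is off: $l=0$ yields only ${\cal O}(|\lam|^{-1})$ for the coefficient integrals, which would give $a_{12}={\cal O}(1)$ and hence only $L^{-1}_{12}={\cal O}(|\lam|^{-4})$; you need $l=-1$ (or $l=-2$, which is what the paper effectively uses to get the $|\lam|^{-3}$ bound for $\varkappa^{(k)}$), and this is available because each coefficient is a pairing of ${\cal R}^{(k)}(\lam)$ against fixed smooth compactly supported functions. (3) For the two-fold differentiability, the pointwise argument ``for bounded $k$'' does not control the region $|k|\sim\gamma|\lam|$, where $\hat D(\lam,k)$ can become arbitrarily small as $\rRe\lam\to 0^+$; the uniform statement you need is precisely \eqref{g-as} with $k=1,2$ and weights $\si>1/2+k$, which is how the paper obtains the derivative bounds and which you should invoke instead of the quotient-rule heuristic.
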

 \begin{proof}
 Applying \eqref{g-as} with $s=1$, $l=-1$, $k=0,1,2$ and $\si>1/2+k$, we obtain 
 \beqn\nonumber
|\varkappa^{(k)}(\lam)|&=&|\langle {\cal R}^{(k)}(\lam)\varrho,\varrho\rangle|
\le C\Vert {\cal R}^{(k)}(i\mu+0)\varrho\Vert_{H^{-1}_{-\si}}\Vert\varrho\Vert_{H^1_{\si}}\\
\la{q-as}
&\le& C_1|\lam|^{-3}\Vert\varrho\Vert_{H^1_{\si}}^2\le C(\rho) |\lam|^{-3},\quad \lam\to\infty,\quad \rRe\lam>0..
\eeqn
For $Q_{jj}^{(k)}(\lam)$, $F_{jj}^{(k)}(\lam)$,   $P_{jj}^{(k)}(\lam)$,
 $U_{jj}^{(k)}(\lam)$, $S_{jl}^{(k)}(\lam)$, $f_1(\lam)$, $h_1(\lam)$ the similar bounds holds.  
 Therefore, \eqref{aa}   implies
 \beqn\nonumber
&& a_{jj}^{(k)}(\lam)+b_{jj}^{(k)}(i\lam)= m\lam^{2-k}+{\cal O}(|\lam|^{-1}),\\
\nonumber
&&a_{jl}^{(k)}(\lam)+b_{jl}^{(k)}(i\lam)={\cal O}(|\lam|^{-1}),\quad j\ne l,\quad \lam\to\infty,\quad \rRe\lam>0,\quad k=0,1,2.
\eeqn 
Hence,  \eqref{M-as-} follows. 
\end{proof}
\subsection{Asymptotics of $G_0^+(\lam)$}\la{pr-as}
\begin{lemma}\la{F-as}
Let $\rho$ satisfies  \eqref{rosym} and \eqref{zero1}, and $(\Lam_0,\Psi_0)\in {\cal E}_{\sigma}$ with $\beta>4$. Then
\be\la{F12-as}
G_0^+(\lam)=g_{2}\lam^2+g_{3}\lam^3+{\cal O}(\lam^4),\quad \lam\to 0,\quad \rRe\lam>0.
\ee
\end{lemma}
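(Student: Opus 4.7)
The argument is a careful Taylor expansion of $G_0^+(\lam)$ about $\lam=0$, followed by the observation that the constant and linear coefficients must vanish by virtue of the four symplectic orthogonality conditions $\Om(X_0,\tau_j(v_1))=0$, $j=1,\dots,4$, which are in force since we are working with $X_0\in{\cal Z}_{v_1}$ (the scope inherited from Proposition \ref{lindecay}). Without these four linear constraints on $(\Lam_0,\Psi_0,r_0,\pi_0)$ the claim would already fail at the constant term, so the proof must invest its effort in identifying the algebraic cancellation.

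First I would substitute $K_0=(i(vk)+\lam)\hat\Lam_0+\hat\Psi_0-\widehat{{\cal P}r_0\rho}$ from \eqref{K0} into \eqref{F1} and split each of the three scalar products, as well as the correction term in \eqref{F1+}, into its contributions from $\Lam_0$, $\Psi_0$, and $r_0$. To every piece of the form $\langle\frac{(\cdot)}{\hat D(\lam)},(\cdot)\rangle$ I apply Lemma \ref{Dr-as}: the hypothesis $(\Lam_0,\Psi_0)\in {\cal E}_\beta$ with $\beta>4$ legitimates \eqref{int-rho-f-as}--\eqref{int-rho-f-as+}, and conditions \eqref{rosym}, \eqref{zero1} guarantee convergence of the resulting coefficient integrals. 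Collecting powers of $\lam$ produces an expansion
$$
G_0^+(\lam)=g_0+g_1\lam+g_2\lam^2+g_3\lam^3+{\cal O}(|\lam|^4),
$$
in which each $g_k$ is an explicit linear functional of $(\Lam_0,\Psi_0,r_0,\pi_0)$ and of the soliton data $A_{v_1}$, $\Pi_{v_1}$, $p_{v_1}$.

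Next I would identify $g_0$ and $g_1$ with symplectic products. Using the Fourier representation \eqref{solit3} of $A_{v_1}$ together with the integral definitions \eqref{c-ro}, \eqref{h-ro}, \eqref{f-ro}, and the auxiliary identities \eqref{A-r-rho}, \eqref{Acdot}, one recognises the constant term $g_0$ as a linear combination of the products $\Om(X_0,\tau_{j+2}(v_1))$, $j=1,2$ (the component involving $r_0\cdot\pa_{v_j}p_{v_1}$ is matched by the $\pi_0$-piece and by the $\w PJ r_0$-term, while the field integrals $\langle\Lam_0,\pa_{v_j}\Pi_{v_1}\rangle$, $\langle\Psi_0,\pa_{v_j}A_{v_1}\rangle$ emerge from differentiating $\hat A_{v_1}$ under the integrals). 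Symplectic orthogonality then forces $g_0=0$. The same bookkeeping applied to the $O(\lam)$ piece matches $g_1$ with a combination of $\Om(X_0,\tau_j(v_1))$, $j=1,2$, which also vanishes; the explicit $e_j$-pieces in $\tau_j$ now account for the $\pi_0$-contribution that appears there. The coefficients $g_2,g_3$ do not in general vanish and are simply bounded in terms of $\Vert(\Lam_0,\Psi_0)\Vert_{{\cal F}_\beta}+|r_0|+|\pi_0|$, while the ${\cal O}(|\lam|^4)$ remainder is inherited from the corresponding remainder in Lemma \ref{Dr-as}.

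\textbf{Main obstacle.} The analytic content is already packed into Lemma \ref{Dr-as}; the heart of the work is the algebraic identification $g_0=-\sum c_j\,\Om(X_0,\tau_{j+2}(v_1))$ and the analogous one for $g_1$. This requires disentangling the several $\lam=0$ and $\lam$-linear contributions coming from the expansions of $\widehat{{\cal P}r_0\rho}/\hat D(\lam)$, $\hat\Lam_0/\hat D(\lam)$, $\hat\Psi_0/\hat D(\lam)$, the correction factor $1/\kappa(\lam)$, and the bracket $(i\lam f(\lam)+i\w h(\lam)-\breve PJv)$ in \eqref{F1+}, then reassembling them with the help of $\pa_{v_j}\hat A_{v_1}$, $\pa_{v_j}p_{v_1}$ and the symmetry simplifications $P_{12}=F_{12}=U_{12}=0$, $S_{11}=S_{22}=0$ available after the normalisation $v_1=(|v_1|,0)$. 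No new estimate is needed beyond Lemma \ref{Dr-as}; the difficulty is purely combinatorial.
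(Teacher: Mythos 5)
Your overall strategy is the same as the paper's: expand $G_0^+(\lam)$ at $\lam=0$ using Lemma \ref{Dr-as} (after splitting $K_0$ into its $\Lam_0$, $\Psi_0$, $r_0$ pieces and expanding $V(\lam)=i\lam f(\lam)+i\w h(\lam)-\breve PJv$ and $1/\kappa(\lam)$), and then kill $g_0$ and $g_1$ by the symplectic orthogonality $\Om(X_0,\tau_j(v_1))=0$, $j=1,\dots,4$, exactly as the paper does in Appendix C. However, your key algebraic identification is inverted, and as stated it cannot work. The component $\pi_0$ enters $G_0^+(\lam)$ only through the bare constant term in \eqref{F1} (the kernel $K_0$ in \eqref{K0} involves only $\Lam_0,\Psi_0,r_0$), so $\pi_0$ sits in $g_0$, not in $g_1$. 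But $\Om(X_0,\tau_{j+2}(v_1))$ contains no $\pi_0$ at all — it pairs $r_0$ with $\pa_{v_j}p_{v}$ — so $g_0$ cannot be a linear combination of these products, contrary to your claim; the only conditions carrying a $\pi_0\cdot e_j$ term are $\Om(X_0,\tau_j(v_1))=0$, $j=1,2$ (the translation directions), and it is these that give $g_0=0$ (the paper's \eqref{soc0}--\eqref{soc1} and \eqref{F1-zero}).

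Symmetrically, the term $m r_0$ comes from $m\lam\, r_0$ in \eqref{F1} and therefore sits in $g_1$; its cancellation requires $\Om(X_0,\tau_{j+2}(v_1))=0$, whose $r_0\cdot\pa_{v_j}p_v=r_0\cdot\bigl(me_j+\langle\pa_{v_j}A_v,\rho\rangle\bigr)$ piece supplies exactly the needed $m r_0$ (the paper's \eqref{X0-2}, \eqref{soc2}). Moreover, the $\w PJ r_0$ piece of $g_0$, which you propose to match against $r_0\cdot\pa_{v_j}p_{v_1}$, is in fact not matched against any orthogonality condition: it cancels internally against the $\widehat{{\cal P}r_0\rho}$ contribution of $K_0$, via $PJr_0+\int \hat\rho\,(r_0\cdot Jk)\,\na\hat\rho\,\hat D_0^{-1}dk=0$, as in \eqref{F1-zero}. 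So the combinatorial heart of your argument — which you yourself single out as the main content — is carried out with the wrong pairing of coefficients and tangent vectors; swapping the roles of $\tau_j$ and $\tau_{j+2}$ (and noting the internal $r_0$-cancellation at order $\lam^0$) brings your plan into line with the paper's proof.
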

The asymptotics can be differentiated two times.
\begin{proof}
By  and  \eqref{F1+}, 
$$
G_0^+(\lam)=G_0(\lam)-\frac{1}{\kappa(\lam)}\langle \frac{\hat K_0}{\hat D(\lam)},J\hat\varrho\rangle V(\lam),\quad
V(\lam)=i\lam f(\lam)+i{\w} h(\lam)-\breve PJv
$$
where
\be\la{V-as}
V(\lam)={\rm v}_0 +{\rm v}_1\lam +{\rm v}_2\lam^2+{\rm v}_3\lam^3+{\cal O}(\lam^4),\quad \lam\to 0,\quad \rRe\lam>0.
\ee
by definitions   \eqref{c-ro}, \eqref{Sj}, \eqref{fj-def}   asymptotics \eqref{int-rho-f-as+}.
Moreover, in the case $v=(|v|,0)$, \eqref{K0} and \eqref{F1} imply
\beqn\nonumber
G_0(\lam)&=&\pi_0+({\w} PJ+m\lam)r_0-i|v|\langle \frac{\cdot K_{01}}{\hat D(\lam)},k\hat\rho\rangle
+{\w}\langle \frac{k\cdot J\hat K_0}{\hat D(\lam)},\na\hat\rho\rangle-\lam\langle \frac{\hat K_0}{\hat D(\lam)},\hat\rho\rangle\\
\nonumber
&=&\pi_0+({\w} PJ+m\lam)r_0\\
\nonumber
&-&i|v|\Big(i|v|\langle \frac{k\hat\rho}{\hat D(\lam)},k_1\ov{\hat\Lam}_{01}\rangle+\lam\langle \frac{\hat\rho}{\hat D(\lam)},k\ov {\hat\Lam}_{01}\rangle
 +\langle  \frac{k\hat\rho}{\hat D(\lam)},\ov{\hat\Psi}_{01}\rangle-\langle \frac{k\hat\rho}{\hat D(\lam)}, (\widehat{{\cal P}r_0\rho})_1\rangle\Big)\\
\nonumber
&+&{\w}\Big(i|v|\langle \frac{k_1\na\hat\rho}{\hat D(\lam)},k\cdot\! J\ov{\hat\Lam}_{0}\rangle
+\lam\langle \frac{\na\hat\rho}{\hat D(\lam)},k\cdot\! J\ov{\hat\Lam}_{0}\rangle
+\langle  \frac{\na\hat\rho}{\hat D(\lam)},k\cdot\! J\ov{\hat\Psi}_{0}\rangle+\langle \frac{\na\hat\rho}{\hat D(\lam)},(r_0\cdot\! Jk)\hat\rho\rangle\Big)\\
\la{F-split}
&+&\lam\Big(i|v|\langle \frac{k_1\hat\rho}{\hat D(\lam)},\ov{\hat\Lam}_{0}\rangle+\lam\langle \frac{\hat\rho}{\hat D(\lam)},\ov{\hat\Lam}_{0}\rangle
+\langle  \frac{\hat\rho}{\hat D(\lam)}, \ov{\hat\Psi}_{0}\rangle-\langle \frac{\hat\rho}{\hat D(\lam)},\widehat{{\cal P}r_0\rho}\rangle\Big),
\eeqn
Similarly,
\be\la{F1-split}
\langle \frac{\hat K_0}{\hat D(\lam)},J\hat\varrho\rangle=i\langle \frac{J\na\hat\rho}{\hat D(\lam)},\ov{\hat K}_0\rangle
=-|v|\langle \frac{k_1J\na\hat\rho}{\hat D(\lam)},\ov{\hat\Lam}_{0}\rangle
+i\lam\langle \frac{J\na\hat\rho}{\hat D(\lam)},\ov{\hat\Lam}_{0}\rangle
+i\langle  \frac{J\na\hat\rho}{\hat D(\lam)}, \ov{\hat\Psi}_{0}\rangle-i\langle \frac{J\na\hat\rho}{\hat D(\lam)},\widehat{{\cal P}r_0\rho}\rangle.
\ee
Therefore, asymptotics \eqref{int-rho-f-as}--\eqref{int-rho-f-as+}  together with \eqref{V-as}  imply 
$$
G_0^+(\lam)=g_0+g_{1}\lam+g_{2}\lam^2+g_3\lam^3+{\cal O}(\lam^4),\quad \lam\to 0,\quad \rRe\lam>0.
$$
It remains to note  that
\be\la{F00}
g_0=G_0^+(0)=0,\qquad g_1=(G_0^+)'(0)=0
\ee
by the symplectic orthogonality conditions $\Omega(X_0,\tau_j)=0$. We prove \eqref{F00} in   Appendix C.
\end{proof}
\begin{lemma}
For $k=0,1,2$, $\beta>4$  and sufficiently large $B>0$, the bounds hold
\be\la{K-as}
|(G^+_0)^{(k)}(\lam)|
\le C|\lam|^{1-k}(1+\Vert(\Lam_0,\Psi_0)\Vert_{{\cal F}_\beta}),\quad |\lam|\ge B,\quad \rRe\lam>0.
\ee
\end{lemma}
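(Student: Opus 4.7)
By \eqref{F1+},
\[
G_0^+(\lam) = G_0(\lam) - \frac{V(\lam)}{\kappa(\lam)}\Big\langle \frac{\hat K_0}{\hat D(\lam)}, J\hat\varrho\Big\rangle,
\]
so by the Leibniz rule it suffices to bound each factor and its first two derivatives for $|\lam|\ge B$ large. Since $\kappa(\lam) = I + \varkappa(\lam)$, the same argument that produced \eqref{q-as} (applied for $k = 0, 1, 2$ with $\si > 1/2+k$) gives $|\varkappa^{(k)}(\lam)| = O(|\lam|^{-3-k})$, so $\kappa^{-1}$ and its first two derivatives are bounded. Writing $V(\lam) = i\lam f(\lam) + i{\w}h(\lam) - \breve PJv$ in terms of the resolvent pairings \eqref{Sj}, \eqref{fj-def}, the same reasoning yields $V(\lam)$ bounded and $V^{(k)}(\lam) = O(|\lam|^{-k})$ for $k = 1, 2$.

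The remaining task is the $\lam$-dependent part of $G_0(\lam)$ and the pairing $\langle \hat K_0/\hat D(\lam), J\hat\varrho\rangle$. Inserting $K_0 = (ivk + \lam)\hat\Lam_0 + \hat\Psi_0 - \widehat{{\cal P}r_0\rho}$ as in \eqref{F-split} and \eqref{F1-split} reduces every non-polynomial piece to a finite sum of terms of the form
\[
\lam^j \Big\langle \frac{f^\sharp}{\hat D(\lam)}, g^\sharp\Big\rangle = \lam^j\langle {\cal R}(\lam) f^\sharp, g^\sharp\rangle,\qquad j \in \{0, 1, 2\},
\]
where $f^\sharp$ is built from $\{\rho, \na\rho, J\varrho\}$ (Schwartz and compactly supported, hence in $H^s_\si$ for every $s,\si$) and $g^\sharp$ is built from $\{\Psi_0, \na\Lam_0, r_0\rho\}$ (hence controlled in $L^2_\beta$ by $1 + \Vert(\Lam_0,\Psi_0)\Vert_{{\cal F}_\beta}$). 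Applying \eqref{g-as} with $l = -2$, $s = 2$, and $\si > 5/2$ gives
\[
|\langle {\cal R}^{(m)}(\lam) f^\sharp, g^\sharp\rangle| \le C|\lam|^{-3}\Vert f^\sharp\Vert_{H^2_\si}\Vert g^\sharp\Vert_{L^2_\si}, \qquad m = 0, 1, 2.
\]
By Leibniz, each $k$-th derivative of $\lam^j\langle {\cal R}(\lam)f^\sharp, g^\sharp\rangle$ is $O(|\lam|^{j-3})$, worst case $O(|\lam|^{-1})$ at $j = 2$. The only genuinely growing contribution in $G_0(\lam)$ is the polynomial $\pi_0 + ({\w}PJ + m\lam)r_0$, which contributes $O(|\lam|)$ for $k = 0$, $O(1)$ for $k = 1$, and $0$ for $k = 2$. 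Combining, one obtains \eqref{K-as}.

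\textbf{Main obstacle.} The one nontrivial step is the pairing $\lam^2 \langle \hat\rho/\hat D(\lam), \ov{\hat\Lam}_0\rangle$ in the last bracket of \eqref{F-split}, where $\hat\Lam_0$ appears unaccompanied by a $k$-factor and $\Lam_0$ itself need not lie in any weighted $L^2$. To move this into the required form I would invoke the neutrality condition \eqref{zero2}: it forces $\hat\rho(k) = O(|k|^5)$ at the origin, so Hadamard's lemma gives a smooth decomposition $\hat\rho(k) = k_1 a_1(k) + k_2 a_2(k)$ with Schwartz $a_j$. The pairing then becomes $\lam^2\sum_j \langle a_j/\hat D(\lam), k_j\ov{\hat\Lam}_0\rangle = -i\lam^2\sum_j \langle {\cal R}(\lam) a_j, \partial_j\Lam_0\rangle$, which is of the standard form with $g^\sharp = \partial_j\Lam_0 \in L^2_\beta$. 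Once this reduction is performed, the rest is bookkeeping of powers of $\lam$ via \eqref{g-as}.
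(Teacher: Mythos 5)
Your proposal is correct and follows essentially the same route as the paper: expand $G_0^+$ via \eqref{F1+}, \eqref{F-split}, \eqref{F1-split}, estimate each resolvent pairing by \eqref{g-as} (with the polynomial part $\pi_0+({\w}PJ+m\lam)r_0$ producing the $|\lam|^{1-k}$ growth), and treat the terms where $\hat\Lam_0$ appears without a $k$-factor by shifting one factor of $k$ from $\hat\Lam_0$ onto $\hat\rho$ — your decomposition $\hat\rho=k_1a_1+k_2a_2$ is the same device as the paper's $\rho_2=F^{-1}_{k\to x}\big(\hat\rho(k)/|k|\big)$. One small inaccuracy: the $a_j$ cannot be taken Schwartz (the Hadamard construction $a_j(k)=\int_0^1\pa_j\hat\rho(tk)\,dt$ decays only like $|k|^{-1}$, while the choice $a_j=k_j\hat\rho/|k|^2$ decays rapidly but is only finitely smooth at $k=0$); what your argument actually requires is $F^{-1}a_j\in H^2_\si$ for some $\si>5/2$, and this does hold thanks to the vanishing moments \eqref{zero2} — precisely the role played by $\rho_2\in H^2_\beta$ in the paper. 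Note also that \eqref{g-as} yields no extra decay from $\lam$-derivatives, so one gets $\varkappa^{(k)}(\lam)={\cal O}(|\lam|^{-3})$ as in \eqref{q-as} rather than ${\cal O}(|\lam|^{-3-k})$, but this does not affect your conclusion.
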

\begin{proof}
By \eqref{g-as},  
$$
|\langle {\cal R}^{(k)}(\lam)\Psi_0,\rho_1\rangle|\le C\Vert {\cal R}^{(k)}(\lam)\Psi_0\Vert_{H^{-1}_{-\beta}}\Vert \rho_1\Vert_{H^1_{\beta}}
\le C_1 |\lam|^{-2}\Vert\Psi_0\Vert_{L^{2}_{\beta}},\quad |\lam|\ge B,\quad \rRe\lam>0,
$$
where $\rho_1$ is one of the functions $\rho$, $\na\rho$,  $y\na\rho$.
The same estimate hold with  $\na\Lam_0$ instead of $\Psi_0$. Finally, \eqref{g-as} implies
 $$
|\langle {\cal R}^{(k)}(\lam)\Lam_0,\rho\rangle| \le C\Vert {\cal R}^{(k)}\na\Lam_0\Vert_{H^{-2}_{-\beta}}\Vert \rho_2\Vert_{H^2_{\beta}}
\le C_1 |\lam|^{-3}\Vert\na\Lam_0\Vert_{L^{2}_{\beta}}.
$$
Here $\rho_2=F^{-1}_{k\to x}\frac {\hat\rho(k)}{|k|}$.
Hence, \eqref{K-as}  follows by \eqref{F1+},  \eqref{F-split}, \eqref{F1-split} 
\end{proof}

\subsection{Spectral condition}\la{SpCon}
We suppose that  
\be\la{M-condition}
{\rm det}\, L(i\mu+0)=(a_{11}(i\mu+0)+b_{11}(i\mu+0))(a_{22}(i\mu+0)+b_{22}(i\mu+0)-a_{12}^2(i\mu+0)\ne 0.
\ee
for $\mu\ne 0$. Here
\beqn\nonumber
a_{jj}(i\mu+0)&=&-\mu^2 m+v^2\int\frac{k_j^2k_2^2|\hat\rho|^2}{k^2\hat D_0}dk+{\w}^2\int\frac{k_j^2|\na\hat\rho|^2}{\hat D_0}dk\\
\nonumber
&-&{\w}^2\int \frac{k_j^2|\na\hat\rho|^2}{\hat D(i\mu+0)}dk-\int\frac{(|v|k_1+(-1)^{j+1}\mu)^2k_2^2|\hat\rho|^2}{k^2\hat D(i\mu+0)}dk,\quad j=1,2,\\
\nonumber
a_{12}(i\mu+0)&=&i{\w}\mu\Big[\int\frac{(k\cdot\na\hat\rho)\hat\rho}{\hat D_0}-\int\frac{(k\cdot\na\hat\rho)\hat\rho}{\hat D(i\mu+0)}\Big]\\
\nonumber
b_{11}(i\mu+0)&=&\frac{1}{I+\varkappa(i\mu+0)}\Big[{\w}^2\Big(\int\frac{k_1|\na\hat\rho|^2}{\hat D(i\mu+0)} dk\Big)^2
+\mu^2\Big(\int\frac{\hat\rho \na_2\hat\rho}{\hat D(i\mu+0)} dk\Big)^2\Big]\\
\nonumber
b_{22}(i\mu+0)&=&\frac{v^2}{I+\varkappa(i\mu+0)}\Big(\int\frac{k_2\hat\rho \na_2\hat\rho}{\hat D(i\mu+0)} dk\Big)^2,
\eeqn
by \eqref{c-ro}, \eqref{h-ro}, \eqref{f-ro}, \eqref{Sj}, \eqref{tau-p}  and \eqref{aa}--\eqref{bb}.
In Appendix D we show that  condition \ref{M-condition} holds for $v\ne 0$ and sufficiently large $I$.
\subsection{Proof of proposition \ref{lindecay}}
{\bf Time decay of the vector components}\\
The component $r(t)$ is given by 
\be\la{F-int}
r(t)=\frac{1}{2\pi}\int e^{i\mu t}  L^{-1}(i\mu+0)G_0^+(\mu+i0) d\mu\\
\ee
Now \eqref{M-as}, \eqref{M-as-}, \eqref{F12-as},  and \eqref{K-as} imply  \eqref{frozenest}  for $r(t)$ by integration by parts two times.
\\
Due to \eqref{F1}, \eqref{F1+},  \eqref{aa} and  \eqref{detM} and \eqref{K-as},
\be\la{tir-as}
\ti r(\lam)=\frac{r_0}{\lam}+{\cal O}(|\lam|^{-2}),\quad r^{(k)}(\lam)={\cal O}(|\lam|^{-1-k}), \quad k=1,2, \quad \lam\to \infty, \quad \rRe\lam>0.
\ee
Hence,  the third equation of \eqref{eq-main} implies
$$
\ti\phi(\lam)=\frac{\om}m P\ti r+\lam\ti r-r_0=-\frac{\om}{m\lam} Pr_0+{\cal O}(|\lam|^{-2}),\quad \lam\to \infty, \quad \rRe\lam>0,
$$
Moreover,  $\hat\phi(i\mu+0)\in C^2(\R)$. Hence  double  integration by parts   gives
\be\la{Phi-decay}
\phi(t)=\frac{1}{2\pi}\int e^{i\mu t} \ti\phi(i\mu+0) d\mu={\cal O}(t^{-2}), \quad t\to \infty.
\ee
Similarly, definition \eqref{nu-rep} of $\ti\nu$, asymptotics of type \eqref{q-as} for $\varkappa(\lam)$, $f(\lam)$, $h(\lam)$  and $P_{jl}(\lam)$,  
asymptotics \eqref{int-rho-f-as}--\eqref{int-rho-f-as+}  and \eqref{tir-as} imply that $\ti\nu^{(k)}(\lam)={\cal O}(|\lam|^{-1-k})$. Hence,
\be\la{nu-decay}
\nu(t)=\frac{1}{2\pi}\int e^{i\mu t} \ti\nu(i\mu+0) d\mu={\cal O}(t^{-2}), \quad t\to \infty.
\ee
{\bf Time decay of the fields}
Denote ${\cal F}(t)=(\Lam(t),\Psi(t)$ and rewrite the first two equations of \eqref{bfA} as
\be\la{F-eq}
\dot{\cal F}(t)=\begin{bmatrix} v\cdot\na &1\\ 
\Delta& v\cdot\na\end{bmatrix}{\cal F}(t)+\begin{bmatrix} 0\\ Q(t)\end{bmatrix},
\ee
where
$$
Q={\cal P}[\rho\phi - \frac {\om}m\rho J Pr-\!v(r\cdot\na\rho)]+\big(\om(r\cdot \na)-\!\nu\big)J\varrho.
$$
Applying the Duhamel representation, we get
\be\la{Duhamel}
{\cal F}(t)=W_v(t){\cal F}_0+\int_0^t W_v(t-s)\begin{bmatrix} 0\\ Q(s)\end{bmatrix} ds,\quad t\ge 0,
\ee
where $W_v(t)$ is the dynamical group of the modified wave equation (equation \eqref{F-eq} with $Q(t)=0$).
For the group  $W_v(t)$ the dispersion decay holds \cite{K2010, KK2023}:
\be\la{dede}
\Vert W_v(t){\cal F}_0\Vert_{{\cal F}_{-\beta}}\le C(1+t)^{-2}\Vert{\cal F}_0\Vert_{{\cal F}_\beta},\quad \beta>2,\quad |v|<1,\quad t\ge 0.
\ee
 Applying  \eqref{dede} to \eqref{Duhamel}, and  using   \eqref {tir-as},  \eqref {Phi-decay}  and \eqref {nu-decay} we obtain \eqref{frozenest} 
 for the field components.
\setcounter{equation}{0}
\section{Appendix}
\subsection*{A. Proof of lemma \ref{Ome} }\label{ApA}
Here we compute the matrix elements $\Omega_{j,l}=\Omega_{j,l}(v):=\Omega(\tau_j(v),\tau_l(v))$ 
and prove that the matrix $\bf Q(v)$ is nondegenerate.
We will write  $A=A_{v}$, $\Pi=\Pi_v$, $p=p_{v}$.
By \eqref{solit3}
\beqn\la{APi}
\hat A(k)&=&\Big(\frac1{\hat D_0}v -\frac{(vk)}{k^2\hat D_0}k\Big)\hat \rho(k)+i\frac{\w}{\hat D_0}J\na\hat\rho(k)
=A^{+}(k)+iA^{-}(k)\\
\la{APi+}
\hat\Pi(k)&=&i(vk)(A^{+}(k)+iA^{-}(k)), \quad (vk)=v\cdot k.
\eeqn
where $\hat A^+$ is even, and $\hat A^-$ is  odd.  For $ j,l=1,2$, formulas \eqref{OmJ}--\eqref{inb} imply 
\be\label{jl}
\Om_{jl}=-2i\!\int\! k_jk_l(vk)\big(\hat A^{+}+i\hat A^{-}\big)\cdot\big(\hat A^{+}-i\hat A^{-}\big)dk
=-2i\!\int \!k_jk_l(vk)(|\hat A^{+}|^2+|\hat A^{-}|^2)dk=0,
\ee
Similarly, for $j=1,2$ one has
\beqn\nonumber
\Om_{j+2,j+2}&\!\!\!=\!\!\!&-\int \Big(\big(\pa_{v_j}(\hat A^{+}+i\hat A^{-})\big)
\cdot\big(i(vk)\pa_{v_j}(\hat A^{+}-i\hat A^{-})+ik_j(\hat A^{+}-i\hat A^{-})\big)\\
\nonumber
&\!\!\!+\!\!\!&\big( i(vk)\pa_{v_j}(\hat A^{+}+i\hat A^{-})+ik_j(\hat A^{+}+i\hat A^{-})\big)\cdot\big(\pa_{v_j}(\hat A^{+}-i\hat A^{-})\big)\Big)dk\\
\la{33}
&\!\!\!=\!\!\!&-2i\!\int\!\Big( (vk)\big(|\pa_{v_j}\hat A^{+}|^2+|\pa_{v_j}\hat A^{-}|^2\big)
+k_j(\hat A^{+}\!\cdot \pa_{v_j}\hat A^{+}\!+\hat A^{-}\!\cdot \pa_{v_j}\hat A^{-})\Big) dk=0.
\eeqn
By \eqref{om-M} and  \eqref{APi},
\beqn\la{paA2}
&&\pa_{v_l}\hat A^{+}=\frac{1}{\hat D_0}\Big(e_l+\frac{2(vk)k_l}{\hat D_0}v-\frac{k_l(k^2+(vk)^2)}{k^2\hat D_0}k\Big)\hat\rho,\quad
\\
\la{paA1}
&&\pa_{v_l}\hat A^{-}=\Big(\frac{2(vk)k_l}{\hat D_0}-\frac{1}{(I+\varkappa)}\int \frac{2(vk)k_l|\na\hat\rho|^2}{\hat D^2_0}dk\Big)\frac{{\w} J\na\hat\rho}{\hat D_0},
\eeqn
 Moreover,
\be\la{paP}
\pa_{v_l}p= me_l+\int\frac{|\hat\rho|^2dk}{\hat D_0}e_l+\int\frac{2|\hat\rho|^2(vk)k_ldk}{\hat D_0^2}v
-\int\frac{|\hat\rho|^2(k^2+(vk)^2)k_ldk}{k^2\hat D_0^2}k
\ee
 by \eqref{solY} and \eqref{paA2}. Formulas \eqref{OmJ}, \eqref{inb} and \eqref{paP} imply  that
\beqn\nonumber
&&\Om_{j,l+2}=\langle ik_j\hat A, i(vk)\pa_{v_l}\hat A+ik_l\hat A \rangle
+\langle k_j(kv)\hat A, \pa_{v_l}\hat A\rangle+e_j\cdot \pa_{v_l}p\\
\nonumber
&&=\int k_jk_l(|\hat A^{+}|^2+|\hat A^{-}|^2)dk+2\int (vk)k_j(\hat A^{+}\cdot\pa_{v_l}\hat A^{+}+\hat A^{-}\cdot\pa_{v_l}\hat A^{-})\\
\la{Om14}
&&+\big(m+\!\int\frac{|\hat\rho|^2dk}{\hat D_0}\big)\delta_{jl}+\!\int\frac{2|\hat\rho|^2(vk)v_jk_ldk}{\hat D_0^2}
-\!\int\frac{|\hat\rho|^2(k^2+(vk)^2)k_jk_ldk}{k^2\hat D_0^2},~~j,l=1,2.
\eeqn
Now we  choose  $v=(|v|,0)$. In this case $(vk)=|v|k_1$, and  $\hat D_0:=k^2-v^2k_1^2$. Hence,
\be\la{14}
\Om_{14}=\Om_{23}=0.
\ee 
Finally, \eqref{jl}, \eqref{33} and \eqref{14}  imply that
\be\la{Om-fin}
{\rm det}\,\Om(v,w)={\rm det}\begin{bmatrix}
0&0& \Om_{13} &0 \\
0&0& 0 & \Om_{24} \\
\Om_{31} & 0 &0&\Om_{34}\\
0 & \Om_{42} &\Om_{43}& 0\\
\end{bmatrix}
=\Om_{13}^2 \Om_{24}^2 
\ee
since $\Om_{jl}=-\Om_{lj}$ for $j\ne l$. Let us   prove that 
$\Om_{13} \Om_{24} \ne 0$.
In the case   $v=(|v|,0)$, \eqref{APi} and  \eqref{paA2}--\eqref{paA1} become
\beqn\la{solit32+} 
\hat A^{+}&\!\!\!=\!\!\!&\frac{|v|k_2}{k^2\hat D_0}\hat\rho Jk,\quad  \hat A^{-}=\frac{\w}{\hat D_0} J\na\hat\rho, \quad
|\hat A^{+}|^2=\frac{v^2k_2^2}{k^2\hat D_0^2}|\hat\rho|^2,\quad |\hat A^{-}|^2=\frac{\w^2}{\hat D_0^2}|\na\hat\rho|^2,\\ 
\la{A1A1}
\pa_{v_1}\hat A^{+}&\!\!\!=\!\!\!&\frac{1}{\hat D_0}\Big(e_1+\frac{2v^2k_1^2}{\hat D_0}e_1-\frac{k_1(k^2+v^2k_1^2)}{k^2\hat D_0}k\Big)\hat\rho
=\frac{k_2(k^2+v^2k_1^2)}{k^2\hat D_0^2}\hat\rho Jk,\\
\la{A1A1+}
\pa_{v_2}\hat A^{+}&\!\!\!=\!\!\!&\frac{1}{\hat D_0}\Big(e_2+\frac{2v^2k_1k_2}{\hat D_0}e_1-\frac{k_2(k^2+v^2k_1^2)}{k^2\hat D_0}k\Big)\hat\rho
=\frac{k_1(k^2(v^2-1)+v^2k_2^2)}{k^2\hat D_0^2}\hat\rho Jk,\\
\la{A1A1++}
\pa_{v_1}\hat A^{-}&\!\!\!=\!\!\!&
\Big(\frac{2|v|k_1^2}{\hat D_0}-\frac{1}{(I+\varkappa)}\int \frac{2|v|k_1^2|\na\hat\rho|^2}{\hat D^2_0}dk\Big)\frac{{\w} J\na\hat\rho}{\hat D_0},\qquad 
\pa_{v_2}\hat A^{-}=\frac{2{\w} |v|k_1k_2}{\hat D_0^2}J\na\hat\rho.
\eeqn
Hence,
\beqn\nonumber
\int |v|k_1^2(\hat A^{+}\cdot\pa_{v_1}\hat A^{+})dk&=&v^2\int \frac{k_1^2k_2^2(k^2+v^2k_1^2)}{k^2\hat D_0^3}|\hat\rho|^2\ge 0,\\
\la{AdA+} 
\int 2v|k_1^2(\hat A^{-}\cdot \pa_{v_1}\hat A^{-})dk&=&2{\w}^2 v^2\Big[\int\frac{k_1^4 |\nabla\hat\rho|^2dk}{\hat D_0^3}
-\frac{1}{(I+\varkappa)}\Big(\int\frac{k_1^2 |\nabla\hat\rho|^2dk}{\hat D_0^2}\Big)^2\Big]\ge 0
\eeqn
since
$$
\Big(\int\frac{k_1^2 |\nabla\hat\rho|^2dk}{\hat D_0^2}\Big)^2\le\int\frac{|\nabla\hat\rho|^2dk}{\hat D_0} \int\frac{k_1^4 |\nabla\hat\rho|^2dk}{\hat D_0^3}
=\varkappa \int\frac{k_1^4 |\nabla\hat\rho|^2dk}{\hat D_0^3}.
$$
Let us also note that
\be\la{ep11}
\int\frac{|\hat\rho|^2dk}{\hat D_0}+\int\frac{2v^2k_1^2|\hat\rho|^2dk}{\hat D_0^2}
-\int\frac{|\hat\rho|^2(k^2+v^2k_1^2)k_1^2dk}{k^2\hat D_0^2}=\int\frac{k_2^2(k^2+v^2k_1^2)}{k^2\hat D_0^2}|\hat\rho|^2dk\ge 0
\ee
Therefore, \eqref{Om14} and \eqref{AdA+}--\eqref{ep11}  imply that $\Om_{13}>0$. 
It remains to prove that $\Om_{24}>0$. By \eqref{solit32+} and \eqref{A1A1++},
$$
\hat A^{+}\!\cdot \pa_{v_2}\hat A^{+}=\frac{|v|k_1k_2}{\hat D_0^3}\Big(2v^2-1-\frac{v^2k_1^2}{k^2}\Big)|\hat\rho|^2,\qquad
\hat A^{-}\!\cdot \pa_{v_2}\hat A^{-}\!=\frac{2{\w}^2|v| k_1k_2}{\hat D_0^3} |\nabla\hat\rho|^2.
$$
Hence, \eqref{Om14} and \eqref{solit32+}  imply that
$$
\Om_{24}=m+\int X_{24}|\hat\rho|^2dk+\int \Big(\frac{\om^2k_2^2}{\hat D_0^2}+\frac{4\om^2 v^2k_2^2k_1^2}{\hat D_0^3}\Big)|\na\hat\rho|^2dk>0
$$
since
\beqn\nonumber
X_{24}&=&\frac{k_2^2}{\hat D_0^2}\Big(v^2-\frac{v^2k_1^2}{k^2}+\frac{4v^4k_1^2}{\hat D_0}-\frac{2v^2k_1^2}{\hat D_0}
 -\frac{2v^4k_1^4}{k^2\hat D_0}-1- \frac{v^2k_1^2}{k^2}\Big)+\frac{1}{\hat D_0}\\ 
 \nonumber
  &=&\frac{k_2^2}{\hat D_0^3}\Big[v^2k^2+3v^4k_1^2-3v^2k_1^2-k^2\Big]+\frac{k^4+v^4k_1^4-2v^2k_1^2k^2}{\hat D_0^3}\\
\nonumber
&=&\frac{k_2^2}{\hat D_0^3}\Big[v^2(k^2-k_1^2)+2v^4k_1^2+(v^2k_1-k_1)^2-k_1^2\Big]+\frac{1}{\hat D_0^3}\Big[(k^2(k^2-k_2^2)+v^4k_1^4-2v^2k_1^2k^2\Big]\\
\nonumber
&\ge&\frac{1}{\hat D_0^3}\Big[v^4k_2^4+2v^4k_1^2k_2^2+v^4k_1^4+k_1^4-2v^2k_1^2k^2\Big]
=\frac{(v^2k^2-k_1^2)^2}{\hat D_0^3}\ge 0.
\eeqn
\subsection*{B. Proof of Lemma \ref{ljf}}\label{ApB}
We will again write  $A=A_{v}$, $\Pi=\Pi_v$, $p=p_{v}$.  Differentiating  (\ref{Hs5}) in $b_j$ and  $v_j$, we obtain 
\beqn\la{d1}
\!\!\!\!\!\!\!\!-(v\cdot\na)\pa_j A&=&\pa_j\Pi,\\
\la{d11}
\!\!\!\!\!\!\!\!-(v\cdot\na)\pa_j\Pi&=&\Delta\pa_j A+{\cal P}[v\pa_j\rho] -{\w}\pa_j J\varrho,\\
\la{d2}
\!\!\!\!\!\!\!\!-\pa_j A-(v\cdot\na)\pa_{v_j} A&=&\pa_{v_j}\Pi,\\
\la{d21}
\!\!\!\!\!\!\!\!-\pa_j\Pi- (v\cdot\na)\pa_{v_j}\Pi&=&\De\pa_{v_j} A-\frac{1}{I}\langle\pa_{v_j} A,J\varrho\rangle J\varrho+{\cal P}[e_j\rho]
\eeqn
\\
{\it Step i)} 
By \eqref{AA}, the first equation of \eqref{Atan} reads
\beqn\la{e1}
&&\!\!\!\!\!\!\!\!\!\!\!\!\!\!\!\!\!\!\!\!\!\!\!\!\!(u\cdot\na)\pa_jA+\pa_j \Pi=((u-v)\cdot\nabla) \pa_j A\\
\nonumber
&&\!\!\!\!\!\!\!\!\!\!\!\!\!\!\!\!\!\!\!\!\!\!\!\!\!\Delta \pa_j A\!-\!{\cal P}\big[\frac{\langle\pa_j A,\rho\rangle\rho}{m}-\frac{{\w}\rho}{m}JPe_j\!-\!v\pa_j\rho\big]\!
-\!\Big(\frac{\langle\pa_j A,J\varrho\rangle}{I}+\!\frac{e_j\!\cdot \!PJv}{I}+{\w}\pa_j\Big)J\varrho+(u\cdot\na)\pa_j\Pi\\
\la{e2}
&&=((u-v)\cdot\!\na)\pa_j\Pi\\
\la{e3}
&&\!\!\!\!\!\!\!\!\!\!\!\!\!\!\!\!\!\!\!\!\!\!\!\!\!\langle \pa_j A,\rho\rangle-{\w} JPe_j= 0\\
\la{e4}
&&\!\!\!\!\!\!\!\!\!\!\!\!\!\!\!\!\!\!\!\!\!\!\!\!\!\langle v\cdot\pa_j A,\na\rho\rangle-\frac {\w}m\langle  PJ\pa_j A,\rho\rangle
-{\w}\langle \na\cdot J\pa_j A,\varrho\rangle -\frac{1}{I}PJv\langle\pa_j A,J\varrho\rangle -Se_j=0
\eeqn
Equation \eqref{e1}  follows from \eqref{d1}.
Further,  \eqref{solit3} and  \eqref{c-ro} imply \eqref{e3} since
\be\la{d4}
\langle \pa_j A,\rho\rangle={\w}\langle \frac{k_jJ\na\hat\rho}{\hat D_0},\hat\rho\rangle = {\w} JPe_j.
\ee
By \eqref{solit3},
\be\la{e5}
\langle \pa_j A,J\varrho\rangle=\langle \frac{k_j\hat\rho(v-\frac{(vk)}{k^2}k)}{\hat D_0},J\na\hat\rho\rangle=
\langle \frac{k_j\hat\rho}{\hat D_0}v,J\na\hat\rho\rangle=-e_j\cdot PJv.
\ee
since $k\cdot J\na\rho=0$. Now \eqref{e2} follows from \eqref{d11}, \eqref{d4} and \eqref{e5}.
\smallskip\\
It remains to prove  \eqref{e4}. By \eqref{solit3}, \eqref{c-ro}, \eqref{h-ro} and  \eqref{f-ro} ,
\beqn\nonumber
 \langle PJ\pa_j A,\rho\rangle&\!\!\!=\!\!\!&-{\w} P\langle \frac{k_j}{\hat D_0}\na\hat\rho,\hat\rho\rangle=-{\w} P^2e_j,\\  
\nonumber
\langle v\cdot\pa_j A,\na\rho\rangle&\!\!\!=\!\!\!&\int\frac{k_jv^2|\hat\rho|^2}{\hat D_0}k dk-\int\frac{k_j(vk)^2|\hat\rho|^2}{k^2\hat D_0} kdk= Qe_j,\\
\nonumber
\langle \na\cdot J\pa_j A,\varrho\rangle
&\!\!\!=\!\!\!&-{\w}\langle \frac{k_j(k_1\na_1\hat\rho+k_2\na_2\hat\rho)}{\hat D_0},\na\hat\rho\rangle
=-{\w}\int\frac{k^2}{\hat D_0}\begin{pmatrix}\na_j\hat\rho\na_1\hat\rho\\ \na_j\hat\rho\na_2\hat\rho\end{pmatrix} =-{\w} Fe_j.
\eeqn
Hence, \eqref{e4} follows by \eqref{j-al}. 
 \smallskip\\
{\it Step ii)} 
By \eqref{AA},  the second equation of\eqref{Atan}  reads  
\beqn
\la{e21}
&&\!\!\!\!\!\!\!\!\!\!\!\!\!\!\!\!\!(u\cdot\na) \pa_{v_j} A+\pa_{v_j}\Pi =((u-v)\cdot\na) \pa_{v_j} A-\pa_j A\\
\la{e22}
&&\!\!\!\!\!\!\!\!\!\!\!\!\!\!\!\!\!\!\!\De\pa_{v_j} A-\!{\cal P}\big[\frac{\rho}{m}\big(\langle\pa_{v_j} A,\rho\rangle\!-\!\pa_{v_j}p\big)\big]\!
-\!\frac{\langle \pa_{v_j }A,J\varrho\rangle}{I} J\varrho+(u\cdot\!\na) \pa_{v_j} \Pi=((u\!-\!v)\!\cdot\!\na) \pa_{v_j} \Pi-\pa_j \Pi\\
\la{e23}
&&\!\!\!\!\!\!\!\!\!\!\!\!\!\!\!\!\!-\langle \pa_{v_j} A,\rho\rangle+\pa_{v_j}p=me_j\\
\la{e24}
&&\!\!\!\!\!\!\!\!\!\!\!\!\!\!\!\!\!\frac {\w}m PJ\big(\langle \pa_{v_j}A,\rho\rangle\!-\!\pa_{v_j}p\big)\!-\!\langle v\cdot\pa_{v_j} A,\!\na\rho\rangle\! 
\!+\!{\w}\langle\na\!\cdot\! J\pa_{v_j}A,\varrho\rangle\!+\!\frac{1}{I}PJv\langle\pa_{v_j}A,J\varrho\rangle\!=\!0
\eeqn
Equation \eqref{e21} and \eqref{e23} immediately follows from \eqref{d2} and  \eqref{solY}, respectively. 
Equation \eqref{e22} follows from \eqref{d21} and \eqref{e23}.  
It remains to prove  \eqref{e24}. By \eqref{om-M}, 
$$
\pa_{v_j}\langle A,J\varrho\rangle=I\pa_{v_j}\w. 
$$
 Hence, using  \eqref{e23}, we rewrite \eqref{e24} as
\be\la{5}
-{\w} PJe_j-v\cdot\pa_{v_j} \langle A,\na\rho\rangle+{\w} \pa_{v_j}\langle\na\cdot JA,\varrho\rangle+PJv \pa_{v_j}\w=0
\ee
By \eqref{JAro},
\be\la{51}
\pa_{v_j}\langle\na\cdot JA,\varrho\rangle=\pa_{v_j} (PJv)=PJe_j+(\pa_{v_j} P)Jv
\ee
Further, \eqref{c-ro} implies
$$
\langle A_1,\na\rho\rangle=-{\w}\langle \frac{\na_2\hat\rho}{\hat D_0},k\hat\rho\rangle=-{\w}\begin{pmatrix}P_{21}\\P_{22}\end{pmatrix},
\quad
\langle A_2,\na\rho\rangle={\w}\langle \frac{\na_1\hat\rho}{\hat D_0},k\hat\rho\rangle={\w}\begin{pmatrix}P_{11}\\P_{12}\end{pmatrix}
$$
Hence,
\be\la{52}
v\cdot\pa_{v_j} \langle A,\na\rho\rangle={\w} (\pa_{v_j} P)Jv+PJv \pa_{v_j}\w.
\ee
Finally,  \eqref{51}--\eqref{52} imply \eqref{5}.
\subsection*{C. Symplectic orthogonality conditions}
Here we prove \eqref{F00}. By \eqref{inb} and \eqref{APi}, 
\beqn\nonumber
0&=&\Omega(X_0,\tau_j)=-\langle\Lam_0,\pa_j \Pi\rangle+\langle\Psi_0,\pa_j A\rangle-\pi_0\cdot e_j\\
\nonumber
&=&-\int (\hat\Lam_0\cdot v) \frac{k_j(vk)\rho}{\hat D_0}dk+i\om\int (\hat\Lam_0\cdot J\na\hat\rho) \frac{k_j(vk)}{\hat D_0} dk\\
\la{soc0}
&+&i\int (\hat\Psi_0\cdot v)\frac{k_j\rho}{\hat D_0} dk+\om\int (\hat\Psi_0\cdot  J\na\hat\rho)\frac{k_j}{\hat D_0} dk -\pi_0\cdot e_j,\quad j=1,2
\eeqn
since $\Lam_0\cdot k=0$ and $\Psi_0\cdot k=0$ due to the solenoidality of the  fields $\Lam_0$ and $\Psi_0$.
For $v=(|v|,0)$, in the vector form, \eqref{soc0} reads
\be\la{soc1}
v^2\int\frac{ \hat\Lam_{01}k_1\rho}{\hat D_0} kdk-i\om|v|\int (\hat\Lam_0\cdot J\na\hat\rho)\frac{k_1}{\hat D_0} kdk
-i|v|\int\frac{ \hat\Psi_{01}\rho}{\hat D_0} k dk-\om\int\frac{\hat\Psi_0\cdot  J\na\hat\rho}{\hat D_0} kdk+\pi_0=0
\ee
Further, \eqref{solY} and \eqref{inb} imply
\be\la{X0-2}
0=\Omega(X_0,\tau_{j+2})=\langle\Lam_0, \pa_{v_j} \Pi\rangle-\langle\Psi_0,\pa_{v_j} A\rangle+r\cdot (me_j+\langle\pa_{v_j} A,\rho\rangle).
\ee
By \eqref{A1A1}--\eqref{A1A1+}, 
\beqn\nonumber
\langle \pa_{v_1}A,\rho\rangle&=&\langle\frac{k_2(k^2+v^2k_1^2)}{k^2\hat D_0^2}\hat\rho Jk,\hat\rho\rangle
=e_1\int\frac{(v^2k_2^2k_1^2+k^2k_2^2)|\hat\rho|^2}{k^2\hat D_0^2}dk\\ 
\nonumber
\langle \pa_{v_2}A,\rho\rangle&=&\langle\frac{k_1(k^2(v^2-1)+v^2k_2^2)}{k^2\hat D_0^2}\hat\rho Jk,\hat\rho\rangle
=-e_2\int\frac{(v^2k_2^2k_1^2+k_2k_1^2(v^2-1))|\hat\rho|^2}{k^2\hat D_0^2}dk
\eeqn
Moreover, \eqref{A1A1}--\eqref{A1A1++} imply 
\beqn\nonumber
\langle\Psi_0,\pa_{v_1} A\rangle&=&\ds\int\Big(\frac{\hat\Psi_{01}}{\hat D_0}+2v^2\frac{ \hat\Psi_{01}k_1^2}{\hat D_0^2}\Big)\hat\rho dk
-2i|v|\om\int \frac{k_1^2(\hat\Psi_0\cdot J\na\hat\rho)}{\hat D_0^2}kdk\\
\nonumber
&+&\frac{2i|v|\om}{1+\varkappa}\int\frac{k_1^2|\na\hat\rho|^2 dk}{\hat D_0^2}\int\frac{\hat\Psi_0\cdot J\na\hat\rho}{\hat D_0}dk, \\
\nonumber
\langle\Psi_0,\pa_{v_2} A\rangle&=&\ds\int\Big(\frac{\hat\Psi_{02}}{\hat D_0}+2v^2\frac{ \hat\Psi_{01}k_1k_2}{\hat D_0^2}\Big)\hat\rho dk
-2i|v|\om\int \frac{k_1k_2(\hat\Psi_0\cdot J\na\hat\rho)}{\hat D_0^2}kdk
\eeqn
In the vector form, the last two equations   read
\beqn\nonumber
\langle\Psi_0,\pa_{v_j} A\rangle&=&\ds\int\Big(\frac{\hat\Psi_{0j}}{\hat D_0}+2v^2\frac{ \hat\Psi_{01j}k_1k_j}{\hat D_0^2}\Big)\hat\rho dk
-2i|v|\om\int \frac{k_1k_j(\hat\Psi_0\cdot J\na\hat\rho)}{\hat D_0^2}kdk\\
\la{vec-f}
&+&\delta_{1j}\frac{2i|v|\om}{I+\varkappa}\int\frac{k_1^2|\na\hat\rho|^2 dk}{\hat D_0^2}\int\frac{\hat\Psi_0\cdot J\na\hat\rho}{\hat D_0}dk 
\eeqn
By \eqref{APi+} and \eqref{solit32+}- \eqref{A1A1++}, 
\beqn\nonumber
\pa_{v_j}\Pi& =& ik_jA+i|v|k_1\pa_{v_j} A=ik_j(\frac{(|v|e_1-\frac{|v|k_1k}{k^2})\hat \rho(k)}{\hat D_0}+\frac{i\omega J\na\hat\rho(k)}{\hat D_0})\\
\nonumber
&+&i|v|k_1\frac{1}{\hat D_0}\Big(e_j+\frac{2v^2k_1k_j}{\hat D_0}e_1-\frac{k_1(k_1k_j+v^2k_1^2)}{k^2\hat D_0}k\Big)\hat\rho\\
\nonumber
&-&2v^2\omega k_1\Big(\frac{k_1k_j}{\hat D_0}+\frac{\delta_{1j}}{(I+\varkappa)}\int \frac{k_1^2|\na\hat\rho|^2}{\hat D^2_0}dk\Big)\frac{J\na\hat\rho}{\hat D_0}
\eeqn
Therefore
\beqn\nonumber
\langle\Lam_0, \pa_{v_j} \Pi\rangle&=&-i|v|\int\Big(\frac{\hat\Lam_{01}k_j}{\hat D_0}
+\frac{\hat\Lam_{0j}k_1}{\hat D_0}+2v^2\frac{\hat\Lam_{01}k_1^2k_j}{\hat D_0^2}\Big)\hat\rho dk-\om\int \frac{k_j(\hat\Lam_0\cdot J\na\hat\rho)}{\hat D_0}dk\\
\la{vec-f1}
&-&2v^2\om\int \frac{k_1^2k_j(\hat\Lam_0\cdot J\na\hat\rho)}{\hat D_0^2}kdk
-\frac{2v^2\omega\delta_{1j}}{(I+\varkappa)}\int \frac{k_1^2|\na\hat\rho|^2}{\hat D^2_0}dk\int\frac{k_1\hat\Lam_0\cdot J\na\hat\rho}{\hat D_0}
\eeqn
Hence, the symplectic orthogonality condition \eqref{X0-2} together with \eqref{vec-f}-\eqref{vec-f1} give
\beqn\nonumber
\!\!\!\!\!0&\!\!=\!\!&-i|v|\int \frac{\hat\Lam_{01}(k^2+v^2k_1^2)\rho}{\hat D_0^2}k dk-i|v|\int \frac{k_1\rho}{\hat D_0}\Lam_{0} dk
-\om\int (\hat\Lam_0\cdot J\na\hat\rho) \frac{k^2+v^2k_1^2}{\hat D_0^2}kdk\\
\nonumber
&\!\!-\!\!&2v^2\!\int \frac{ \hat\Psi_{01}k_1\rho}{\hat D_0^2}kdk-\int \frac{\rho}{\hat D_0}\Psi_{0} dk
+2i|v|\om\!\int (\hat\Psi_0\cdot J\na\hat\rho) \frac{k_1}{\hat D_0^2}kdk\\
\nonumber
&\!\!+\!\!&m r_{0}+r_{01}e_1\int\frac{(v^2k_2^2k_1^2+k^2k_2^2)|\hat\rho|^2}{k^2\hat D_0^2}dk
-r_{02}e_2\int\frac{(v^2k_2^2k_1^2+k^2k_1^2(v^2-1))|\hat\rho|^2}{k^2\hat D_0^2}dk\\
\la{soc2}
&\!\!-\!\!&\frac{2|v|\omega\delta_{1j}}{(I+\varkappa)}\int \frac{k_1^2|\na\hat\rho|^2}{\hat D^2_0}dk\int\frac{(i|v|k_1\hat\Lam_0+\hat\Psi_0)\cdot J\na\hat\rho}{\hat D_0}dk.
\eeqn
By \eqref{K0},
$$
\frac{K_0(0)}{\hat D_0}=\frac{i|v|k_1\hat\Lam_0+\hat\Psi_0-\widehat{{\cal P}[r_0\rho]}}{\hat D_0},\quad {\rm where}\quad
\widehat{{\cal P}[r_0\rho]}=\hat\rho\frac{r_0\cdot Jk}{k^2}Jk.
$$
Hence  \eqref{F1}, \eqref{F1+}  and \eqref{soc1} imply
\beqn\nonumber
&&G^+_0(0)=G_0(0)=\pi_0+\om PJr_0  -i|v|\langle \frac{\hat K_{01}(0)}{\hat D_0},k\hat\rho\rangle+\om\langle \frac{k\cdot J\hat K_0(0)}{\hat D_0},\na\hat\rho\rangle\\
\nonumber
&&=\pi_0+\om PJr_0+v^2\int\frac{k_1\hat\rho\hat\Lam_{01}}{\hat D_0} k dk-i|v|\int\frac{\hat\rho\hat\Psi_{01}}{\hat D_0} k dk\\
\la{F1-zero}
&&+i|v|\om\int\frac {k_1(k\cdot J\hat \Lam_0)}{\hat D_0} \na\hat\rho dk
+\om\int\frac {k\cdot J\hat \Psi_0}{\hat D_0} \na\hat\rho dk+\om\int \frac{\hat\rho (r_0\cdot Jk)}{\hat D_0}\na\hat\rho dk=0
\eeqn
since $PJr_0+\int \frac{\hat\rho (r_0\cdot Jk)}{\hat D_0}\na\hat\rho dk=0$, and
$$
\int\frac {k_1(k\cdot J\hat \Lam_0)}{\hat D_0} \na\hat\rho dk=-\int (\hat\Lam_0\cdot J\na\hat\rho)\frac{k_1}{\hat D_0} kdk,\quad
\int\frac {k\cdot J\hat \Psi_0}{\hat D_0} \na\hat\rho dk=-\int\frac{\hat\Psi_0\cdot  J\na\hat\rho}{\hat D_0}k dk.
$$
It remains to prove that $\pa_{\lam}G^+_0(0)=0$. By \eqref{K0}, 
\beqn\nonumber
\Big(\frac{K_0(\lam)}{\hat D(\lam)}\Big)'(0)&=&-2i|v|k_1\frac{i|v|k_1\hat\Lam_0+\hat\Psi_0
-\widehat{{\cal P}[r_0\rho]}}{\hat D^2_0}+\frac{\hat\Lam_0}{\hat D_0}\\
\nonumber
&=&\frac{(k^2+v^2k_1^2)\Lam_{0}-2i|v|k_1\hat\Psi_0} {\hat D^2_0}+\frac{2i|v|k_1\hat\rho(r_0\cdot Jk)Jk} {k^2\hat D^2_0}.
\eeqn
Hence,  \eqref{F1},  \eqref{fj-def}, \eqref{F1+} and \eqref{soc2}  imply
\beqn\nonumber
(G^+_0)'(0)\!\!&\!\!=\!\!&\!\!-i|v|\langle \big(\frac{K_{01}}{\hat D}\big)'(0),k\hat\rho\rangle
+\om\langle k\cdot J \big(\frac{K_{01}}{\hat D}\big)'(0),\na\hat\rho\rangle-\langle \hat K_0(0),\hat\rho\rangle\\
\nonumber
\!\!&\!\!+\!\!&\!\!mr_0-i\om\frac{\langle K_0(0),J\hat\varrho\rangle}{\kappa(0)}g'(0)\\
\nonumber
\!\!&\!\!=\!\!&\!\!-i|v|\Big(\langle\frac{(k^2\!+\!v^2k_1^2)\Lam_{01}} {\hat D^2_0},k\hat\rho\rangle
-2i|v|\langle\frac{k_1\Psi_{01}} {\hat D^2_0},k\hat\rho\rangle
+2i|v|\langle \frac{k_1k_2\hat\rho(r_0\cdot Jk)} {k^2\hat D^2_0},k\hat\rho\rangle\rangle\Big)\!+mr_0\\
\nonumber
\!\!&\!\!+\!\!&\!\!\om\Big(\langle \frac{(k^2+v^2k_1^2)(k\cdot J\Lam_0)}{\hat D^2_0},\na\hat\rho\rangle
-2i|v|\langle \frac{k_1(k\cdot J\Psi_0)}{\hat D^2_0},\na\hat\rho\rangle-2i|v|\langle \frac{k_1\hat\rho(r_0\cdot Jk)}{\hat D^2_0},\na\hat\rho\rangle\Big)\\
\nonumber
\!\!&\!\!-\!\!&\!\!i|v|\langle \frac{k_1\Lam_0}{\hat D_0},\hat\rho\rangle\! -\!\langle \frac{\Psi_0}{\hat D_0},\hat\rho\rangle
\!+\!\langle\frac{\hat\rho(r_0\cdot Jk)}{k^2\hat D_0}Jk,\hat\rho\rangle
-2\delta_{j1}\om|v|\frac{\langle K_0(0),J\hat\varrho\rangle}{1+\varkappa}\!\!\int\!\! \frac{k_1^2|\na_1\rho|^2}{\hat D_0^2}dk=0
\eeqn
since
\beqn\nonumber
&&2v^2\langle\frac{\hat\rho k_1k_2(r_0\cdot Jk)}{k^2\hat D_0^2},k\hat\rho\rangle+\langle\frac{\hat\rho(r_0\cdot Jk)}{k^2\hat D_0}Jk,\hat\rho\rangle\\
\nonumber
&&=r_{01}e_1\int\frac{(2v^2k_1^2k_2^2+k_2^2(k^2-v_1^2k_1^2)|\hat\rho|^2}{k^2\hat D_0^2}dk
-r_{02}e_2\int\frac{(2v^2k_1^2k_2^2-k_1^2(k^2-v_1^2k_1^2)|\hat\rho|^2}{k^2\hat D_0^2}dk\\
\nonumber
&&=r_{01}e_1\int\frac{(v^2k_1^2k_2^2+k_2^2k^2)|\hat\rho|^2}{k^2\hat D_0^2}
-r_{02}e_2\int\frac{(v^2k_2^2k_1^2+k^2k_1^2(v^2-1))|\hat\rho|^2}{k^2\hat D_0^2}dk
\eeqn
\subsection*{D. Regularity in continuous spectrum }
Let us assume that  $I=\infty$.  Im thia case   $\w=0$ and $b_{jl}=0$. Hence, condition \eqref{M-condition} becomes 
\beqn\nonumber
{\rm det}\, L(i\mu+0)&=&\Big[\mu^2 m-v^2\!\int\frac{k_1^2k_2^2|\hat\rho|^2}{k^2\hat D_0}dk+\int\!\frac{(|v|k_1+\mu)^2k_2^2|\hat\rho|^2}{k^2\hat D(i\mu+0)}dk\Big]\times\\
\la{M-condition+}
&\times&\Big[\mu^2 m-v^2\!\int\frac{k_2^4|\hat\rho|^2}{k^2\hat D_0}dk+\!\int\frac{(|v|k_1-\mu)^2k_2^2|\hat\rho|^2}{k^2\hat D(i\mu+0)}dk\Big]\ne 0.
\eeqn
\begin{lemma}\la{ek1}
Let  $0<|v|<1$. Then for any $\mu\ne 0$,
$$
\rIm \Big(\int\frac{h_{\pm}(k)\,dk}{\hat D(i\mu+0)}\Big)<0, \quad {\rm where} ~~h_{\pm}(k)=\frac{(|v|k_1\!-\!\mu)^2k_2^2|\hat\rho|^2}{k^2}.
$$
\end{lemma}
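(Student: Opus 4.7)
The idea is to extract the boundary value $\rIm(1/\hat D(i\mu+0))$ via the Sokhotski--Plemelj formula, reduce the imaginary part of the integral to a one-dimensional surface integral over the curve where $\hat D(i\mu+0)$ loses its strict positivity, and then conclude by non-negativity of the integrand.

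First, with $v=(|v|,0)$ and $\lambda=\varepsilon+i\mu$, $\varepsilon>0$, a direct expansion gives
\begin{equation*}
\hat D(\varepsilon+i\mu)=\bigl[k^2-(|v|k_1+\mu)^2+\varepsilon^2\bigr]+2i\varepsilon(|v|k_1+\mu),
\end{equation*}
so
\begin{equation*}
\rIm\frac{1}{\hat D(\varepsilon+i\mu)}=\frac{-2\varepsilon(|v|k_1+\mu)}{[k^2-(|v|k_1+\mu)^2+\varepsilon^2]^2+4\varepsilon^2(|v|k_1+\mu)^2}.
\end{equation*}
Passing to the distributional limit $\varepsilon\to 0^+$ yields the Sokhotski--Plemelj identity
\begin{equation*}
\rIm\frac{1}{\hat D(i\mu+0)}=-\pi\,\mathrm{sgn}(|v|k_1+\mu)\,\delta\bigl(k^2-(|v|k_1+\mu)^2\bigr).
\end{equation*}

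Next I would analyze the singular curve $\Gamma_\mu=\{k\in\R^2:k^2=(|v|k_1+\mu)^2\}$, which reads $(1-v^2)k_1^2-2|v|\mu k_1+k_2^2=\mu^2$. Since $|v|<1$ this is an ellipse, empty only when $\mu=0$ and smooth for every $\mu\ne 0$; its two branches are
\begin{equation*}
k_1^\pm(k_2)=\frac{|v|\mu\pm\sqrt{\mu^2-(1-v^2)k_2^2}}{1-v^2},\qquad|k_2|\le\frac{|\mu|}{\sqrt{1-v^2}}.
\end{equation*}
On $\Gamma_\mu$ one has $|v|k_1+\mu=(\mu\pm|v|\sqrt{\mu^2-(1-v^2)k_2^2})/(1-v^2)$, and because $|v|\sqrt{\mu^2-(1-v^2)k_2^2}\le|v\mu|<|\mu|$, both branches share $\mathrm{sgn}(|v|k_1+\mu)=\mathrm{sgn}(\mu)$. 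A short calculation also shows $|\nabla(k^2-(|v|k_1+\mu)^2)|$ vanishes only at $k=(|v|\mu/(1-v^2),0)$, a point lying off $\Gamma_\mu$ since it gives $F=-\mu^2/(1-v^2)\ne 0$; hence the gradient is bounded away from zero on $\Gamma_\mu$.

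Combining the two steps,
\begin{equation*}
\rIm\int\frac{h_\pm(k)\,dk}{\hat D(i\mu+0)}=-\pi\,\mathrm{sgn}(\mu)\int_{\Gamma_\mu}\frac{h_\pm(k)}{|\nabla(k^2-(|v|k_1+\mu)^2)|}\,d\sigma,
\end{equation*}
with $d\sigma$ the arc-length on $\Gamma_\mu$. The integrand is pointwise non-negative since $h_\pm$ is a product of squares. To get strict inequality, I would use that $\hat\rho$ is entire (Fourier transform of a compactly supported function) and non-identically zero by \eqref{rosym}, so its zero set is a proper real-analytic subvariety and cannot contain the one-dimensional ellipse $\Gamma_\mu$; together with the fact that $k_2$ vanishes at only two isolated points of $\Gamma_\mu$, this yields $h_\pm>0$ on an open arc, whence the surface integral is strictly positive and the claimed sign follows.

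\emph{Main obstacle.} The only delicate point is excluding the degenerate possibility that $\hat\rho$ vanishes identically on $\Gamma_\mu$: the moment conditions \eqref{zero1} force $\hat\rho$ to vanish to high order at the origin, but $\Gamma_\mu$ avoids the origin for $\mu\ne 0$, and real-analyticity of $\hat\rho$ rules out identical vanishing on a curve. Everything else is a bookkeeping exercise with Sokhotski--Plemelj and the coarea formula.
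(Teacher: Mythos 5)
Your reduction via Sokhotski--Plemelj to a surface integral over the singular ellipse, with the checks that the gradient of $k^2-(|v|k_1+\mu)^2$ does not vanish there and that the integrand is non-negative, is exactly the paper's route. However, the decisive last step has a genuine gap: the assertion that ``real-analyticity of $\hat\rho$ rules out identical vanishing on a curve'' is false. The zero set of a non-trivial real-analytic (even entire) function on $\R^2$ is typically a union of curves; for instance the Fourier transform of $\partial_1\chi$ with $\chi$ a radial bump vanishes identically on the line $k_1=0$, and nothing in analyticity alone prevents $\hat\rho$ from vanishing on the ellipse $\Gamma_\mu$. What saves the argument -- and what the paper uses -- is the spherical symmetry in \eqref{rosym}: $\hat\rho$ is a radial function, so if it vanished on the off-center ellipse $\Gamma_\mu$ (on which $|k|$ sweeps a whole interval, since $0<|v|<1$ and $\mu\ne 0$), it would vanish on the entire annulus $|\mu|(1-|v|)\gamma^2\le |k|\le |\mu|(1+|v|)\gamma^2$, $\gamma=1/\sqrt{1-v^2}$, obtained by rotating the ellipse about the origin; only then does analyticity (unique continuation from an open set) force $\hat\rho\equiv 0$, contradicting \eqref{rosym}. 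You must invoke the radial symmetry here; without it your step does not go through.

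A secondary point: your own (correct and more careful) Plemelj formula carries the factor $-\pi\,\mathrm{sgn}(|v|k_1+\mu)$, and you rightly observe that this sign equals $\mathrm{sgn}(\mu)$ on $\Gamma_\mu$. But then your final display gives $\rIm\int h_\pm/\hat D(i\mu+0)\,dk=-\pi\,\mathrm{sgn}(\mu)\cdot(\text{positive})$, which is strictly negative only for $\mu>0$ and strictly positive for $\mu<0$; so ``the claimed sign follows'' is inconsistent with your own formula when $\mu<0$. You should either restrict to $\mu>0$ and treat $\mu<0$ by the symmetry $\mu\mapsto-\mu$, or note that what is actually needed for the spectral condition in Appendix D is only that the imaginary part is non-zero for $\mu\ne 0$, which your computation (once the non-vanishing on $\Gamma_\mu$ is repaired as above) does deliver.
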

\begin{proof}
In the case  $v=(|v|,0)$ one has $\hat D(i\mu+0)=\hat D(v,k,i\mu+0)=k^2+(i|v|k_1+(i\mu+0))^2$. Denote
$K_v(\mu)=\{k\in\R^2: (k_1-\mu |v|\gamma^2)^2+k_2^2\gamma^2=\mu^2\gamma^4\}$, $\gamma=1/\sqrt{1-v^2}$. By the Plemelj formula,
$$
\rIm \Big(\int\frac{h_{\pm}(k)\,dk}{\hat D(i\mu+0)}\Big)=-\pi\int_{K_v(\mu)} \frac{h_{\pm}(k)}{|\na_k\hat D(i\mu+0)|}dS,
$$
where $h(k)\ge 0$ on $K_v(\mu)$. Hence, it suffices to show that  $h(k) \not{\!\!\equiv}\, 0$ on $K_v(\mu)$ for $\mu\ne 0$.
Assume that, on the contrary, $h_{\pm}(k) \equiv 0$ on $K_v(\mu)$.
Then  $\hat\rho(k) \equiv 0$ in the ring  $ |\mu|(1-|v|)\gamma^2 \le |k|\le |\mu|(1+|v|)\gamma^2$ by the symmetry (\ref{rosym}) of $\hat\rho(k)$. Hence,
$\hat\rho(k) \equiv 0$ in $\C^2$ due to the analyticity of $\hat\rho(k)$,
in contradiction to the last assumption in (\ref{rosym}).
 \end{proof}
 The lemma  implies that  
$$
\rIm \Big[\mu^2 m-v^2\!\int\frac{k_1^2k_2^2|\hat\rho|^2}{k^2\hat D_0}dk+\int\!\frac{(|v|k_1\pm \mu)^2k_2^2|\hat\rho|^2}{k^2\hat D(i\mu+0)}dk\Big]<0,\quad  \mu\ne 0.
$$ 
 Thus, in the case  $v\ne0$ and $I=\infty$, ${\rm det}\, L(i\mu+0)\ne 0$   by \eqref{M-condition+}.  
Therefore ${\rm det}\, L(i\mu+0)\ne 0$ for large $I$ by the continuity of ${\rm det}\, L(i\mu+0)$.

\end{document}